\newcommand*\Let[2]{\State #1 $\gets$ #2}
\newtheorem{theorem}{Theorem}[section]
\newtheorem{lemma}[theorem]{Lemma}
\newtheorem{corollary}[theorem]{Corollary}
\newtheorem{fact}{Fact}[section]
\theoremstyle{definition}  \newtheorem{definition}[theorem]{Definition}
\theoremstyle{remark}  
\DeclareMathOperator{\poly}{poly}
\DeclareMathOperator{\polylog}{polylog}
\DeclareMathOperator{\msg}{msg}
\DeclareMathOperator{\out}{out}
\DeclareMathOperator{\cost}{cost}
\DeclareMathOperator{\R}{R}
\DeclareMathOperator{\D}{D}
\DeclareMathOperator{\image}{Im}
\newcommand{\EE}{\mathbb{E}}
\newcommand{\NN}{\mathbb{N}}
\newcommand{\cA}{\mathcal{A}}
\newcommand{\cC}{\mathcal{C}}
\newcommand{\cD}{\mathcal{D}}
\newcommand{\cO}{\mathcal{O}}
\newcommand{\cQ}{\mathcal{Q}}
\newcommand{\cR}{\mathcal{R}}
\newcommand{\cU}{\mathcal{U}}
\newcommand{\cX}{\mathcal{X}}
\newcommand{\cZ}{\mathcal{Z}}
\newcommand{\tO}{\widetilde{O}}
\newcommand{\tOmega}{\widetilde{\Omega}}
\newcommand{\eps}{\varepsilon}
\newcommand{\mypar}[1]{\medskip\noindent{\bfseries #1.}~}
\newcommand{\avoid}{\textsc{avoid}\xspace}
\newcommand{\kavoid}{\ensuremath{\textsc{avoid}^k}\xspace}
\newcommand{\mif}{\textsc{mif}\xspace}
\newcommand{\degr}{\textsc{deg}\xspace}
\newcommand{\clr}{\textsc{clr}\xspace}
\newcommand{\chunksize}{\textsc{ChunkSize}\xspace}
\newcommand{\checkpointmaxdeg}{\textsc{CheckptMaxDeg}\xspace}
\newcommand{\used}{\textsc{used}\xspace}
\newcommand{\init}{\textsc{init}\xspace}
\newcommand{\process}{\textsc{process}\xspace}
\newcommand{\query}{\textsc{query}\xspace}
\newcommand{\next}{\textsc{next}\xspace}
\newcommand{\insedge}{\textsc{ins-edge}\xspace}
\newcommand{\deledge}{\textsc{del-edge}\xspace}
\newcommand{\ceil}[1]{{\left\lceil{#1}\right\rceil}}
\newcommand{\floor}[1]{{\left\lfloor{#1}\right\rfloor}}
\newcommand{\setm}{\smallsetminus}
\renewcommand{\b}{\{0,1\}}
\title{Adversarially Robust Coloring for Graph Streams
  \thanks{This work was supported in part by NSF under awards 1907738 and 2006589.}
  }
\author{Amit Chakrabarti
  \thanks{Department of Computer Science, Dartmouth College.}
  \and
  Prantar Ghosh $^\fnsymbol{footnote}$
  \and
  Manuel Stoeckl $^\fnsymbol{footnote}$
}
\date{}
\begin{document}

\maketitle
\thispagestyle{empty}
%!TEX root = main.tex

\begin{abstract}
  A streaming algorithm is considered to be adversarially robust if it provides correct outputs with high probability even when the stream updates are chosen by an adversary who may observe and react to the past outputs of the algorithm. We grow the burgeoning body of work on such algorithms in a new direction by studying robust algorithms for the problem of maintaining a valid vertex coloring of an $n$-vertex graph given as a stream of edges. Following standard practice, we focus on graphs with maximum degree at most $\Delta$ and aim for colorings using a small number $f(\Delta)$ of colors. 

  A recent breakthrough (Assadi, Chen, and Khanna; SODA~2019) shows that in the standard, non-robust, streaming setting, $(\Delta+1)$-colorings can be obtained while using only $\widetilde{O}(n)$ space. Here, we prove that an adversarially robust algorithm running under a similar space bound must spend almost $\Omega(\Delta^2)$ colors and that robust $O(\Delta)$-coloring requires a \emph{linear} amount of space, namely $\Omega(n\Delta)$. We in fact obtain a more general lower bound, trading off the space usage against the number of colors used. From a complexity-theoretic standpoint, these lower bounds provide (i)~the first significant separation between adversarially robust algorithms and ordinary randomized algorithms for a \emph{natural} problem on insertion-only streams and (ii)~the first significant separation between randomized and deterministic coloring algorithms for graph streams, since deterministic streaming algorithms are automatically robust.

  We complement our lower bounds with a suite of positive results, giving adversarially robust coloring algorithms using sublinear space. In particular, we can maintain an $O(\Delta^2)$-coloring using $\widetilde{O}(n \sqrt{\Delta})$ space and an $O(\Delta^3)$-coloring using $\widetilde{O}(n)$ space.
\end{abstract}

\vskip5ex
\centerline{\textbf{Keywords:~} data streaming;~ graph algorithms;~ graph coloring;~ lower bounds;~ online algorithms}

\clearpage
\addtocounter{page}{-1}

%!TEX root = main.tex

\section{Introduction} \label{sec:intro}

A data streaming algorithm processes a huge input, supplied as a long sequence
of elements, while using working memory (i.e., space) much smaller than the
input size. The main algorithmic goal is to compute or estimate some function
of the input $\sigma$ while using space \emph{sublinear} in the size of
$\sigma$. For most---though not all---problems of interest, a streaming
algorithm \emph{needs} to be randomized in order to achieve sublinear space.
For a randomized algorithm, the standard correctness requirement is that for
each possible input stream it return a valid answer with high probability. A
burgeoning body of work---much of it very recent~\cite{BenEliezerJWY20,
BenEliezerY20, HassidimKMMS20, KaplanMNS21, BravermanHMSSZ21, WoodruffZ21,
AttiasCSS21, beneliezerEO21} but traceable back to~\cite{HardtW13}---addresses streaming
algorithms that seek an even stronger correctness guarantee, namely that they
produce valid answers with high probability even when working with an input
generated by an active adversary.  There is compelling motivation from
practical applications for seeking this stronger guarantee: for instance,
consider a user continuously interacting with a database and choosing future
queries based on past answers received; or think of an online streaming or
marketing service looking at a customer’s transaction history and recommending
them products based on it.

We may view the operation of streaming algorithm $\cA$ as a game between a
\emph{solver}, who executes $\cA$, and an \emph{adversary}, who generates a
``hard'' input stream $\sigma$. The standard notion of $\cA$ having error
probability $\delta$ is that for every fixed $\sigma$ that the adversary may
choose, the probability over $\cA$'s random choices that it errs on $\sigma$
is at most $\delta$. Since the adversary has to make their choice before the
solver does any work, they are {\em oblivious} to the actual actions of the
solver. In contrast to this, an {\em adaptive adversary} is not required to
fix all of $\sigma$ in advance, but can generate the elements (tokens) of
$\sigma$ incrementally, based on outputs generated by the solver as it
executes $\cA$. Clearly, such an adversary is much more powerful and can
attempt to learn something about the solver's internal state in order to
generate input tokens that are bad for the particular random choices made by
$\cA$. Indeed, such adversarial attacks are known to break many well known
algorithms in the streaming literature~\cite{HardtW13, BenEliezerJWY20}.
Motivated by this, one defines a $\delta$-error \emph{adversarially robust
streaming algorithm} to be one where the probability that an adaptive
adversary can cause the solver to produce an incorrect output at some point of
time is at most $\delta$. Notice that a \emph{deterministic} streaming
algorithm (which, by definition, must always produce correct answers) is
automatically adversarially robust. 

Past work on such adversarially robust streaming algorithms has focused on
statistical estimation problems and on sampling problems but, with the
exception of~\cite{BravermanHMSSZ21}, there has not been much study of graph
theoretic problems. This work focuses on graph coloring, a fundamental
algorithmic problem on graphs. Recall that the goal is to efficiently process
an input graph given as a stream of edges and assign colors to its vertices
from a small palette so that no two adjacent vertices receive the same color.
The main messages of this work are that (i)~while there exist surprisingly
efficient sublinear-space algorithms for coloring under standard streaming, it
is provably harder to obtain adversarially robust solutions; but nevertheless,
(ii)~there {\em do} exist nontrivial sublinear-space robust algorithms for
coloring.

To be slightly more detailed, suppose we must color an $n$-vertex input graph
$G$ that has maximum degree~$\Delta$. Producing a coloring using only
$\chi(G)$ colors, where $\chi(G)$ is the chromatic number, is \textsf{NP}-hard
while producing a $(\Delta+1)$-coloring admits a straightforward greedy
algorithm, given offline access to $G$. Producing a good coloring given only
streaming access to $G$ and sublinear (i.e., $o(n\Delta)$ bits of) space is a
nontrivial problem and the subject of much recent research~\cite{BeraG18,
AssadiCK19, BeraCG20, AlonA20, BhattacharyaBMU21}, including the breakthrough
result of Assadi, Chen, and Khanna \cite{AssadiCK19} that gives a
$(\Delta+1)$-coloring algorithm using only semi-streaming (i.e., $\tO(n)$ bits
of) space.\footnote{The notation $\tO(\cdot)$ hides factors polylogarithmic in
$n$.} However, all of these algorithms were designed with only the standard,
oblivious adversary setting in mind; an adaptive adversary can make all of
them fail. This is the starting point for our exploration in this work. 

\subsection{Our Results and Contributions} \label{sec:results}

We ask whether the graph coloring problem is inherently harder under an
adversarial robustness requirement than it is for standard streaming.  We
answer this question affirmatively with the first major theorem in this work,
which is the following (we restate the theorem with more detail and formality
as \Cref{thm:lower-bound-core}).
\begin{theorem}\label{thm:lb:preview}
  A constant-error adversarially robust algorithm that processes a stream of edge 
  insertions into an $n$-vertex graph and, as long as the maximum degree of the 
  graph remains at most $\Delta$, maintains a valid $K$-coloring (with
  $\Delta+1 \le K \le n/2$) must use at least $\Omega(n\Delta^2/K)$ bits of space.
\end{theorem}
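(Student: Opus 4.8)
The plan is to exhibit an adaptive adversary and reduce from a robust ``avoidance'' problem --- the robust variant of \avoid (or \mif), which asks a streaming solver to keep outputting an element outside an adaptively growing forbidden set $F\subseteq[K]$ with $|F|$ eventually at most $r$. The conceptual link is that a $K$-coloring algorithm, viewed \emph{at a single vertex} $v$ whose current degree is $d$, is doing exactly this: having seen the colors of $v$'s $d$ neighbors (a forbidden set of size $\le d$), it must commit a color in $[K]$ outside that set, and as more edges incident to $v$ arrive, the forbidden set only grows, up to size $\Delta$. Crucially, the adversary can grow this set \emph{in reaction to the colors the algorithm has already output}: it reads those colors and only then chooses the next edge. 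This is precisely why robustness bites --- the non-robust version of \avoid is easy ($\tO(1)$-ish space by random guessing), whereas its robust version provably requires roughly $\tOmega(r^2/K)$ space on universe $[K]$ --- and this separation is what we propagate to coloring. So the first component is a stand-alone robust space lower bound for \avoid (and, for the direct-sum step below, for its $k$-fold version \kavoid), proved by a chasing/information argument internal to the adversarial-streaming framework.

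Next I would build the coloring gadget that simulates one \avoid instance while respecting the degree cap $\Delta$. Around a target vertex $v$ sits a pool of helper vertices; in round $t$ the adversary introduces a fresh helper, reads its color, adds the edge to $v$ to enlarge $v$'s forbidden set $F_t$, and queries $\chi(v)$ to read off the \avoid-solver's answer $y_t\in[K]\setminus F_t$. Two things have to be engineered for this to be a faithful simulation. First, the color that enters $F_t$ must be (up to a bounded loss) the value the \avoid-adversary wants to insert; this is arranged by pinning helper colors using small clique / partial-clique palette sub-gadgets, or by working with the \emph{semi-adversarial} variant of \avoid in which the adversary chooses which of the currently available helper colors to insert and arguing this variant is still hard. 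Second, the algorithm's escape route of cheap \emph{recoloring} of the helpers (which would move $F_t$ outside the adversary's control) must be foreclosed --- the helpers and their palette gadgets are built so that any conflict the adversary engineers can only be resolved by changing $\chi(v)$, which is the one change the \avoid-solver is accountable for. Squeezing both of these into a total degree budget of $\Delta$ at every vertex is the delicate part, since a vertex that ``certifies'' its own color tends to spend its entire degree budget doing so.

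For the space bound I would run many such gadgets in parallel on (as-disjoint-as-possible) vertex sets --- a target plus $\Theta(\Delta)$ helpers each --- advancing all of them round by round for $\Theta(\Delta)$ rounds, and argue information-theoretically that after $t$ rounds the algorithm's state must encode enough to continue \emph{all} the instances correctly with high probability. A direct-sum argument for robust \kavoid then forces the state to carry (number of gadgets) times the per-instance cost; with $\Theta(n/\Delta)$ gadgets and suitable per-instance parameters giving per-gadget robust cost $\tOmega(\Delta^3/K)$ (equivalently, amortized $\tOmega(\Delta^2/K)$ per vertex) this multiplies out to $\tOmega(n\Delta^2/K)$ bits. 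The constraint $\Delta+1\le K\le n/2$ is used here twice: $K\le n/2$ ensures all the gadgets fit inside $n$ vertices (and that the degree never exceeds $\Delta$), while $K\ge\Delta+1$ guarantees that a valid coloring --- hence a valid \avoid answer --- exists at every step, so the adversary's only hope of a violation is to catch the solver being space-starved.

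The step I expect to be the main obstacle is the gadget itself: embedding an \emph{adaptive} forbidden-set-growth process into plain $K$-coloring under a hard degree cap, in a way that both lets the adversary dictate (enough of) which colors get forbidden and denies the algorithm the free-recoloring workaround. The robust \avoid/\kavoid lower bound and its direct sum follow a template the literature supplies; all the coloring-specific difficulty is in making the reduction degree-efficient and recoloring-proof. A secondary difficulty is the vertex budget: a clean ``one gadget per target, disjoint helpers'' embedding only yields $\tOmega(n\Delta/K)$, and recovering the extra factor of $\Delta$ seems to require letting the helper pool be reused across targets (each helper serving up to $\Delta$ targets), which in turn forces the information-accounting to handle the correlations this creates --- reconciling that reuse with the adversary's need to control forbidden sets is where I expect the argument to be most technical.
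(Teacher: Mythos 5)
Your starting point --- that the adaptive hardness comes from a ``keep avoiding a growing forbidden set'' core, which you observe is the robust version of \mif/\avoid --- is exactly the intuition the paper builds on. But the reduction you sketch is genuinely different from the paper's, and I think it has a gap you would not be able to close.

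The paper does \emph{not} try to control, per vertex, which colors enter a vertex's forbidden set. Instead it defines $\avoid(t,a,b)$ as a one-shot \emph{one-way communication game} over a universe $[t]$ (Alice holds an $a$-subset $S$; Bob must output a $b$-subset disjoint from $S$), proves an $\Omega(ab/t)$ lower bound by pure counting (a deterministic protocol partitions $\binom{[t]}{a}$ into at most $\binom{t-b}{a}$-sized fibers), and then takes the universe to be the $\binom{2K}{2}$ potential \emph{edges} on a $2K$-vertex block. Alice inserts about $LK/4$ random edges; Bob's exploit is pigeonhole, not color control: a $K$-coloring of $2K$ vertices must produce at least $\lceil K/2\rceil$ \emph{monochromatic pairs}, and each such pair is a guaranteed non-edge. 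Bob adds those as a matching (degree grows by only $1$), re-queries, and repeats $\lfloor L/2\rfloor$ times, harvesting $\Theta(LK)$ non-edges per block. Running $\Theta(n/K)$ disjoint blocks and plugging into the \kavoid bound gives $\Omega\bigl((n/K)\cdot (LK)(LK)/K^2\bigr)=\Omega(nL^2/K)$. The adaptivity of the adversary is used only to justify that the algorithm must answer Bob's repeated, coloring-dependent queries correctly; the $\avoid$ hardness itself is non-adaptive.

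The gap in your proposal is precisely the step you flag as ``the main obstacle'': pinning helper colors so that the forbidden set at a target $v$ is adversary-controlled. With $K\ge\Delta+1$ colors and a hard degree cap of $\Delta$, you cannot build a sub-gadget that restricts a helper to a designated color without spending more than $\Delta$ edges on that helper (any color-forcing clique needs $\ge K$ vertices), and if you don't pin helpers, the algorithm can collapse $v$'s entire forbidden set to a single color by recoloring all of $v$'s neighbors to agree --- making the per-vertex avoidance trivial. You try to patch this with a ``semi-adversarial'' $\avoid$ variant, but at that point the hardness claim is exactly what needs proving, not something the literature hands you. Your own accounting also shows the per-vertex route is off by a factor: disjoint per-target gadgets give only $\tOmega(n\Delta/K)$, and the helper-reuse scheme you propose to recover the extra $\Delta$ reintroduces correlations that the direct-sum step would have to control. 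The paper sidesteps all of this by never needing to dictate colors: it only needs the universal fact that many vertices must share colors, and shared color implies non-adjacency. That one observation is what makes the whole reduction degree-efficient and recoloring-proof for free.
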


\noindent We spell out some immediate corollaries of this result because of
their importance as conceptual messages.

\begin{itemize}[leftmargin=\parindent,topsep=4pt,itemsep=1pt]
  \item \textbf{Robust coloring using $O(\Delta)$ colors.~} In the setting of
  \Cref{thm:lb:preview}, if the algorithm is to use only $O(\Delta)$ colors,
  then it must use $\Omega(n\Delta)$ space. In other words, a sublinear-space
  solution is ruled out. 
  \item \textbf{Robust coloring using semi-streaming space.~} In the setting
  of \Cref{thm:lb:preview}, if the algorithm is to run in only $\tO(n)$ space,
  then it must use $\tOmega(\Delta^2)$ colors.
  \item \textbf{Separating robust from standard streaming with a natural
  problem.~} Contrast the above two lower bounds with the guarantees of the
  \cite{AssadiCK19} algorithm, which handles the non-robust case. This shows
  that ``maintaining an $O(\Delta)$-coloring of a graph'' is a natural (and
  well-studied) algorithmic problem where, even for insertion-only streams,
  the space complexities of the robust and standard streaming versions of the
  problem are well separated: in fact, the separation is roughly quadratic, by
  taking $\Delta = \Theta(n)$. This answers an open question
  of~\cite{KaplanMNS21}, as we explain in greater detail in
  \Cref{sec:related}.
  \item \textbf{Deterministic versus randomized coloring.~} Since every
  deterministic streaming algorithm is automatically adversarially robust, the
  lower bound in \Cref{thm:lb:preview} applies to such algorithms. In
  particular, this settles the deterministic complexity of
  $O(\Delta)$-coloring. Also, turning to semi-streaming algorithms, whereas a
  combinatorially optimal\footnote{If one must use at most $f(\Delta)$ colors
  for some function $f$, the best possible function that always works is
  $f(\Delta) = \Delta+1$.} $(\Delta+1)$-coloring is possible using
  randomization~\cite{AssadiCK19}, a deterministic solution must spend at
  least $\tOmega(\Delta^2)$ colors. These results address a broadly-stated
  open question of Assadi~\cite{Assadi-detcolor-talk}; see \Cref{sec:related}
  for details.
\end{itemize}

We prove the lower bound in \Cref{thm:lb:preview} using a reduction from a
novel two-player communication game that we call \textsc{subset-avoidance}. In
this game, Alice is given an $a$-sized subset of the universe~$[t]$;\footnote{The notation $[t]$ denotes the set $\{1,2,\ldots,t\}$.} she
must communicate a possibly random message to Bob that causes him to output a
$b$-sized subset of~$[t]$ that, with high probability, avoids Alice's set
completely. We give a fairly tight analysis of the communication complexity of
this game, showing an $\Omega(ab/t)$ lower bound, which is matched by an
$\tO(ab/t)$ deterministic upper bound.  The \textsc{subset-avoidance} problem is
a natural one.  We consider the definition of this game and its
analysis---which is not complicated---to be additional conceptual
contributions of this work; these might be of independent interest for future
applications. 

We complement our lower bound with some good news: we give a suite of
upper bound results by designing adversarially robust coloring algorithms that
handle several interesting parameter regimes. Our focus is on maintaining a
valid coloring of the graph using $\poly(\Delta)$ colors, where $\Delta$ is
the current maximum degree, as an adversary inserts edges. In fact, some of
these results hold even in a \emph{turnstile model}, where the adversary
might both add and delete edges.
In this context, it is worth noting that the \cite{AssadiCK19} algorithm also works in a turnstile setting.
\begin{theorem} \label{thm:ub:preview}
  There exist adversarially robust algorithms for coloring an $n$-vertex graph
  achieving the following tradeoffs (shown in \Cref{table:ub}) between the
  space used for processing the stream and the number of colors spent, where
  $\Delta$ denotes the evolving maximum degree of the graph and, in the
  turnstile setting, $m$ denotes a known upper bound on the stream length.
  \begin{table*}[!hbt] \centering
  \begin{tabular}{c c c c c}
    \toprule
      {\bf Model}
      & {\bf Colors}  
      & {\bf Space} 
      & {\bf Notes}
      & {\bf Reference}\\
    \midrule
      Insertion-only & $O(\Delta^3)$ & $\tO(n)$ & $\tO(n\Delta)$ external random bits & \Cref{thm:ub-delta3} \\
      Insertion-only & $O(\Delta^k)$ & $\tO(n\Delta^{1/k})$ & any $k\in \NN$ & \Cref{cor:insertonly-ub-general}\\
      %Insertion-only & $O(\Delta^{k+1})$ & $\tO(n\Delta^{1/k})$ & any $k\in \NN$ & \\
      Strict Graph Turnstile & $O(\Delta^k)$ & $\tO(n^{1-1/k}m^{1/k})$ & constant $k\in \NN$ & \Cref{thm:turnstile-ub-general}\\
    \bottomrule
  \end{tabular}
  \caption{A summary of our adversarially robust coloring algorithms. A ``strict graph turnstile'' model requires the input to describe a simple graph at all times; see \Cref{sec:prelims}.}
  \label{table:ub}
  \end{table*}

  In each of these algorithms, for each stream update or query made by the
  adversary, the probability that the algorithm fails either by returning an
  invalid coloring or aborting is at most $1/\poly(n)$.
\end{theorem}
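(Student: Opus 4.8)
The plan is to maintain an explicit proper coloring throughout the stream and to \emph{repair} it whenever an update makes some edge monochromatic, by recoloring one endpoint. The whole difficulty is adaptivity: the adversary sees each coloring we output before choosing its next update, so every repair must draw on randomness the adversary has not yet been able to observe --- otherwise it can repeatedly steer updates toward colors it can predict and eventually force a monochromatic edge. A preliminary reduction, common to all three rows of \Cref{table:ub}, handles the evolving degree by a doubling argument: it suffices to cope with a fixed power of two $\Delta$ while the maximum degree stays below $2\Delta$, growing the palette and the stored structure as $\Delta$ crosses successive powers of two (this requires a little care, since one cannot simply re-read the prefix of the stream); a union bound over the $O(\log n)$ scales costs only an $O(\log n)$ factor in the per-update failure probability, so from now on $\Delta$ is fixed and known.

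For the insertion-only algorithms I would equip each vertex $v$ with a secret list $L_v$ of uniformly random colors drawn from a ground set of size $\Theta(\Delta^k)$, maintain the invariant $c_v \in L_v$, and store explicitly only the \emph{active} part of the graph: the edges $(u,v)$ whose lists overlap, i.e.\ $L_u \cap L_v \ne \emptyset$, plus (in the larger-$k$ rows) a small reservoir sample of incident edges to cap worst-case degrees inside the stored structure. Two vertices have overlapping lists with probability only about $|L_v|^2/\Delta^k$, so even an adversary who has learned all the lists from our outputs cannot create many active edges --- there are simply not enough overlapping pairs --- and one can tune the list size so that the active subgraph together with the reservoir fits in $\tO(n\Delta^{1/k})$ bits; the algorithm aborts (counted as a failure) on the low-probability event that this budget would be exceeded. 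To repair a monochromatic inserted edge $(u,v)$ I recolor the larger-ID endpoint to a list color that avoids the colors currently on its \emph{stored} neighbors. This is correct because a neighbor reached only through a non-active edge has a disjoint list and hence automatically a different color; it succeeds with probability $1-1/\poly(n)$ because the part of $L_v$ that has never appeared in a past output is still uniformly random from the adversary's viewpoint, so it is very unlikely to coincide with one of the few forbidden colors; and the list never runs out because the number of times $v$ can be recolored is at most its number of active neighbors. The $O(\Delta^3)$-color, $\tO(n)$-\emph{working}-space variant is exactly this scheme at $k=3$, but with the per-vertex lists and their refreshes written on an $\tO(n\Delta)$-bit read-only external random tape, leaving only the current coloring and a few counters in working memory.

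For the turnstile algorithms, deletions break the ``accumulate the active subgraph'' strategy, since an edge we remembered may later be deleted and one we discarded may later become active; instead, using the known stream-length bound $m$, the plan is to maintain --- via $\ell_0$-sampling and sparse recovery tuned to $m$ --- a recoverable random subgraph of size $\tO(n^{1-1/k}m^{1/k})$ and to run the recoloring logic on top of what is recovered, choosing $k$ so that this recovered portion is rich enough to certify validity of the coloring on the whole current graph with high probability. The ``strict graph turnstile'' promise, that the stream always describes a simple graph, is what makes such sampled certification sound, as it pins down what a surviving monochromatic edge can look like.

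The step I expect to be the main obstacle is the adaptive-adversary analysis of the insertion-only algorithm's \emph{space}: one must argue that, as we keep revealing colorings, the adversary can neither inflate the stored active subgraph beyond its $\tO(n\Delta^{1/k})$ budget nor enlarge the forbidden-color set at a vertex about to be recolored --- equivalently, that ``only unexposed list entries matter'' and ``the overlap graph is sparse'' both survive adaptivity. Making this rigorous calls for a careful filtration argument that tracks exactly which random bits have been exposed by outputs at each step and shows the relevant bad events are measurable with respect to the not-yet-exposed ones; by comparison, the remaining ingredients --- per-update rather than global failure bookkeeping, the reservoir-sampling and sparse-recovery guarantees, and balancing the list size, ground-set size and space --- should be comparatively routine.
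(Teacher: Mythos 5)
Your proposal engages seriously with the $O(\Delta^3)$ row of \Cref{table:ub} --- the paper's \Cref{alg:ub-delta3} is indeed a list/palette scheme where each vertex carries unexposed random color lists, edges are stored only when lists overlap, a recolor draws a fresh list entry that the adversary has not yet seen, and the space analysis rests on a careful ``only unexposed randomness matters'' argument (\Cref{lem:delta3-condindep}, \Cref{lem:delta3-adversary}). The structural detail you omit is that the paper indexes a separate list $P^d_x$ by the \emph{current degree} $d$ and always colors from the degree-$d$ list, which is exactly what makes the independence argument go through cleanly: the list $P^{\deg(u)}_u$ used for a recoloring has never been touched before and is independent of the transcript. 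Your single-list-per-vertex variant has to recreate that independence by hand, which you correctly flag as the main obstacle; it is not a gap so much as the hard part you would need to supply.

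Where the proposal genuinely diverges from the paper --- and where I think it fails --- is the second and third rows. The paper does \emph{not} prove \Cref{cor:insertonly-ub-general} or \Cref{thm:turnstile-ub-general} by a list scheme at all; it uses a completely separate technique, a modified sketch switching (\Cref{sec:ub-sketchswitch}): split the stream into chunks, run many parallel copies of an oblivious $O(\Delta)$-coloring sketch, query a ``fresh'' (never-before-queried, hence randomness-unexposed) sketch at each checkpoint, buffer the current chunk, recurse on the buffer for deeper $k$, and combine the colorings at each level by taking a product (\Cref{lem:productcol}). Deletions are handled by ``ad hoc'' checkpoints triggered when the maximum degree drops. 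Your list-based $O(\Delta^k)$ plan runs into a quantitative wall: with lists of length $|L|$ drawn from $\Theta(\Delta^k)$ colors, the expected number of active neighbours of a vertex is $\approx n|L|^2/\Delta^k$, and this must not exceed $|L|$ (or the list runs out), forcing $|L|\le \Delta^k/n$; combined with the storage constraint $|L|\lesssim\Delta^{1/k}$ this only has a valid solution when $\Delta$ is polynomially large relative to $n$, so the tradeoff does not hold for all $\Delta$, unlike the paper's \Cref{cor:insertonly-ub-general}. For the turnstile row your plan is to use $\ell_0$-sampling and sparse recovery. This is precisely the sort of linear sketch the paper (and \cite{HardtW13}) points out is \emph{not} adversarially robust --- indeed, the paper's own \mif discussion shows an $\ell_0$-sampler can be defeated by exactly the kind of adaptive querying an adversary can do here --- so without a concrete robustification this step does not stand. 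Finally, your opening doubling reduction on $\Delta$ (``grow the stored structure as $\Delta$ crosses powers of two'') is not something the paper does and, as you note yourself, one cannot re-read the prefix; the paper sidesteps this by assuming an upper bound on $\Delta$ (or $m$) is known, and separately discusses how to remove that assumption at the cost of an extra factor of $\Delta$ in colors. So I would say: your $O(\Delta^3)$ plan is in the spirit of the paper but missing the degree-indexed list structure that makes the independence argument clean; your $O(\Delta^k)$ and turnstile plans are missing the sketch-switching idea entirely and, as written, the turnstile plan relies on a primitive that is known to break under adaptivity.
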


We give a more detailed discussion of these results, including an explanation
of the technical caveat noted in \Cref{table:ub} for the
$O(\Delta^3)$-coloring algorithm, in \Cref{sec:techniques-ub}. 

\subsection{Motivation, Context, and Related Work} 
\label{sec:motivation} \label{sec:related}

Graph streaming has become widely popular~\cite{McGregor14}, especially since
the advent of large and evolving networks including social media, web graphs,
and transaction networks. These large graphs are regularly mined for knowledge
and such knowledge often informs their future evolution. Therefore, it is
important to have adversarially robust algorithms for working with these
graphs. Yet, the recent explosion of interest in robust algorithms has not
focused much on graph problems. We now quickly recap some history.

Two influential works~\cite{MironovNS11, HardtW13} identified the challenge
posed by adaptive adversaries to sketching and streaming algorithms. In
particular, Hardt and Woodruff~\cite{HardtW13} showed that many statistical
problems, including the ubiquitous one of $\ell_2$-norm estimation, do not
admit adversarially robust linear sketches of sublinear size. Recent works
have given a number of positive results.  Ben-Eliezer, Jayaram, Woodruff, and
Yogev~\cite{BenEliezerJWY20} considered such fundamental problems as distinct
elements, frequency moments, and heavy hitters (these date back to the
beginnings of the literature on streaming algorithms); for
$(1\pm\eps)$-approximating a function value, they gave two generic frameworks
that can ``robustify'' a standard streaming algorithm, blowing up the space
cost by roughly the \emph{flip number} $\lambda_{\eps,m}$, defined as the
maximum number of times the function value can change by a factor of
$1\pm\eps$ over the course of an $m$-length stream.  For \emph{insertion-only
streams} and monotone functions, $\lambda_{\eps,m}$ is roughly
$O(\eps^{-1}\log m)$, so this overhead is very small. Subsequent works
\cite{HassidimKMMS20, WoodruffZ21, AttiasCSS21} have improved this overhead
with the current best-known one being
$O\left(\sqrt{\eps\lambda_{\eps,m}}\right)$ \cite{AttiasCSS21}. 

For insertion-only graph streams, a number of well-studied problems such as
triangle counting, maximum matching size, and maximum subgraph density can be
handled by the above framework because the underlying functions are monotone.
For some problems such as counting connected components, there are simple
deterministic algorithms that achieve an asymptotically optimal space bound,
so there is nothing new to say in the robust setting. For graph
sparsification, \cite{BravermanHMSSZ21} showed that the Ahn--Guha sketch
\cite{AhnG09} can be made adversarially robust with a slight loss in the
quality of the sparsifier. Thanks to efficient adversarially robust sampling
\cite{BenEliezerY20, BravermanHMSSZ21}, many sampling-based graph algorithms
should yield corresponding robust solutions without much overhead. For problems 
calling for Boolean answers, such as testing connectivity or bipartiteness,
achieving low error against an oblivious adversary automatically does so against
an adaptive adversary as well, since a sequence of correct outputs from the
algorithm gives away no information to the adversary. This is a particular case
of a more general phenomenon captured by the notion of pseudo-determinism, 
discussed at the end of this section.

Might it be that for all interesting data streaming problems, efficient
standard streaming algorithms imply efficient robust ones? The above framework
does not automatically give good results for \emph{turnstile} streams, where
each token specifies either an insertion or a deletion of an item, or for
estimating non-monotone functions. In either of these situations, the flip
number can be very large. As noted above, linear sketching, which is the
preeminent technique behind turnstile streaming algorithms (including ones for
graph problems), is vulnerable to adversarial attacks~\cite{HardtW13}. This
does not quite provide a separation between standard and robust space
complexities, since it does not preclude efficient non-linear solutions. The
very recent work \cite{KaplanMNS21} gives such a separation: it exhibits a
function estimation problem for which the ratio between the adversarial and
standard streaming complexities is as large as
$\tOmega\left(\sqrt{\lambda_{\eps,m}}\right)$, which is exponential upon
setting parameters appropriately.  However, their function is highly
artificial, raising the important question: \emph{Can a significant gap be
shown for a natural streaming problem?} \footnote{This open question was
explicitly raised in the STOC 2021 workshop \emph{Robust Streaming, Sketching,
and Sampling}~\cite{Stemmer21-stoc}.}

It is easy to demonstrate such a gap in graph streaming.  Consider the problem
of finding a spanning forest in a graph undergoing edge insertions and
deletions. The celebrated Ahn--Guha--McGregor sketch~\cite{AhnGM12} solves
this in $\tO(n)$ space, but this sketch is not adversarially robust. Moreover,
suppose that $\cA$ is an adversarially robust algorithm for this problem.
Then we can argue that the memory state of $\cA$ upon processing an unknown
graph $G$ must contain enough information to recover $G$ entirely: an
adversary can repeatedly ask $\cA$ for a spanning forest, delete all returned
edges, and recurse until the evolving graph becomes empty. Thus, for basic
information theoretic reasons, $\cA$ must use $\Omega(n^2)$ bits of space,
resulting in a quadratic gap between robust and standard streaming space
complexities. Arguably, this separation is not very satisfactory, since the
hardness arises from the turnstile nature of the stream, allowing the
adversary to delete edges. Meanwhile, the~\cite{KaplanMNS21} separation 
does hold for insert-only streams, but as we (and they) note, their problem is
rather artificial.

\mypar{Hardness for Natural Problems}
We now make a simple, yet crucial, observation. Let
\textsc{missing-item-finding} (\mif) denote the problem where, given an
evolving set $S \subseteq [n]$, we must be prepared to return an element in
$[n] \setm S$ or report that none exists. When the elements of $S$ are given
as an input stream, \mif admits the following $O(\log^2 n)$-space solution
against an oblivious adversary: maintain an $\ell_0$-sampling
sketch~\cite{JowhariST11} for the characteristic vector of $[n] \setm S$ and
use it to randomly sample a valid answer. In fact, this solution extends to
turnstile streams.  Now suppose that we have an adversarially robust algorithm
$\cA$ for \mif, handling insert-only streams.  Then, given the memory state of
$\cA$ after processing an unknown set $T$ with $|T| = n/2$, an adaptive
adversary can repeatedly query $\cA$ for a missing item $x$, record $x$,
insert $x$ as the next stream token, and continue until $\cA$ fails to find an
item. At that point, the adversary will have recorded (w.h.p.) the set $[n]
\setm T$, so he can reconstruct $T$. As before, by basic information theory,
this reconstructability implies that $\cA$ uses $\Omega(n)$ space.

This exponential gap between standard and robust streaming, based on well-known results,
seems to have been overlooked---perhaps because \mif does not
conform to the type of problems, namely estimation of real-valued functions,
that much of the robust streaming literature has focused on. That said, though
\mif is a natural problem and the hardness holds for insert-only streams,
there is one important box that \mif does not tick: it is not important enough
on its own and so does not command a serious literature. This leads us to
refine the open question of \cite{KaplanMNS21} thus: \emph{Can a significant
gap be shown for a natural and well-studied problem with the hardness holding
even for insertion-only streams?}

With this in mind, we return to graph problems, searching for such a gap. In
view of the generic framework of \cite{BenEliezerJWY20} and follow-up works,
we should look beyond estimating some monotone function of the graph with
scalar output. What about problems where the output is a big \emph{vector},
such as approximate maximum matching (not just its size) or approximate
densest subgraph (not just the density)? It turns out that the \emph{sketch
switching} technique of \cite{BenEliezerJWY20} can still be applied: since we
need to change the output only when the estimates of the associated numerical
values (matching size and density, respectively) change enough, we can proceed
as in that work, switching to a new sketch with fresh randomness that remains
unrevealed to the adversary. This gives us a robust algorithm incurring only
logarithmic overhead. 

But graph coloring is different. As our \Cref{thm:lb:preview} shows, it does
exhibit a quadratic gap for the right setting of parameters and it is, without
doubt, a heavily-studied problem, even in the data streaming setting. 

The above hardness of
\mif provides a key insight into why graph coloring is hard; see \Cref{sec:techniques-lb}.

\mypar{Connections with Other Work on Streaming Graph Coloring}
Graph coloring is, of course, a heavily-studied problem in theoretical
computer science. For this discussion, we stick to streaming algorithms for
this problem, which already has a significant literature~\cite{BeraG18,
AbboudCKP19, AssadiCK19, BeraCG20, AlonA20, BhattacharyaBMU21}.

Although it is not possible to $\chi(G)$-color an input graph in sublinear
space \cite{AbboudCKP19}, as~\cite{AssadiCK19} shows, there is a
semi-streaming algorithm that produces a $(\Delta+1)$-coloring.  This follows
from their elegant \emph{palette sparsification} theorem, which states that if
each vertex samples roughly $O(\log n)$ colors from a palette of size
$\Delta+1$, then there exists a proper coloring of the graph where each vertex
uses a color only from its sampled list. Hence, we only need to store edges
between vertices whose lists intersect. If the edges of $G$ are independent of
the algorithm's randomness, then the expected number of such ``conflict''
edges is $O(n\log^2 n)$, leading to a semi-streaming algorithm.  But note that
an adaptive adversary can attack this algorithm by using a reported coloring
to learn which future edges would definitely be conflict edges and inserting
such edges to blow up the algorithm's storage.

There are some other semi-streaming algorithms (in the standard setting) that
aim for $\Delta(1+\eps)$-colorings. One is palette-sparsification based
\cite{AlonA20} and so, suffers from the above vulnerability against an
adaptive adversary. Others \cite{BeraG18, BeraCG20} are based on randomly
partitioning the vertices into clusters and storing only intra-cluster edges,
using pairwise disjoint palettes for the clusters.
% \footnote{The
% \cite{BeraCG20} algorithm gives a $\kappa(1+\eps)$-coloring, where $\kappa$ is
% the degeneracy of the graph; one has $\kappa \leq \Delta$ always.}
Here, the
semi-streaming space bound hinges on the random partition being likely to
assign each edge's endpoints to different clusters. This can be broken by an
adaptive adversary, who can use a reported coloring to learn many vertex pairs
that are intra-cluster and then insert new edges at such pairs.

Finally, we highlight an important theoretical question about sublinear
algorithms for graph coloring: \emph{Can they be made deterministic?} This was
explicitly raised by Assadi~\cite{Assadi-detcolor-talk} and, prior to this work,
it was open whether, for $(\Delta+1)$-coloring, \emph{any} sublinear space
bound could be obtained deterministically. Our \Cref{thm:lb:preview} settles
the deterministic space complexity of this problem, showing that even the
weaker requirement of $O(\Delta)$-coloring forces $\Omega(n\Delta)$ space,
which is linear in the input size.

Parameterizing \Cref{thm:lb:preview} differently, we see that a robust (in
particular, a deterministic) algorithm that is limited to semi-streaming space
must spend $\tOmega(\Delta^2)$ colors. A major remaining open question is
whether this can be matched, perhaps by a deterministic semi-streaming
$O(\Delta^2)$-coloring algorithm. In fact, it is not known how to get even a
$\poly(\Delta)$-coloring deterministically.  Our algorithmic results,
summarized in \Cref{thm:ub:preview}, make partial progress on this question.
Though we do not obtain deterministic algorithms, we obtain adversarially
robust ones, and we do obtain $\poly(\Delta)$-colorings, though not all the way
down to $O(\Delta^2)$ in semi-streaming space.

\mypar{Other Related Work} Pseudo-deterministic streaming
algorithms\cite{GoldwasserGMW20} fall between adversarially robust and
deterministic ones. Such an algorithm is allowed randomness, but for each
particular input stream it must produce one fixed output (or output sequence)
with high probability. Adversarial robustness is automatic, because when such
an algorithm succeeds, it does not reveal any of its random bits through the
outputs it gives. Thus, there is nothing for an adversary to base adaptive
decisions on.

The well-trodden subject of dynamic graph algorithms deals with a model
closely related to the adaptive adversary model: one receives a stream of edge
insertions/deletions and seeks to maintain a solution after each update. There
have been a few works on the $\Delta$-based graph coloring problem in this
setting \cite{BhattacharyaCHN18, BhattacharyaGKLS19, HenzingerP20}. However,
the focus of the dynamic setting is on optimizing the update \emph{time}
without any restriction on the space usage; this is somewhat orthogonal to the
streaming setting where the primary goal is space efficiency, and update time,
while practically important, is not factored into the complexity.

%!TEX root = main.tex

\section{Overview of Techniques}

\subsection{Lower Bound Techniques} \label{sec:techniques-lb}

As might be expected, our lower bounds are best formalized through communication complexity. Recall that a typical communication-to-streaming reduction for proving a one-pass streaming space lower bound works as follows. We set up a communication game for Alice and Bob to solve, using one message from Alice to Bob. Suppose that Alice and Bob have inputs $x$ and $y$ in this game. The players simulate a purported efficient streaming algorithm $\cA$ (for $P$, the problem of interest) by having Alice feed some tokens into $\cA$ based on $x$, communicating the resulting memory state of $\cA$ to Bob, having Bob continue feeding tokens into $\cA$ based on $y$, and finally querying $\cA$ for an answer to $P$, based on which Bob can give a good output in the communication game. When this works, it follows that the space used by $\cA$ must be at least the one-way (and perhaps randomized) communication complexity of the game. Note, however, that this style of argument where it is possible to solve the game by querying the algorithm only once, is also applicable to an oblivious adversary setting. Therefore, it cannot prove a lower bound any higher than the standard streaming complexity of $P$.

The way to obtain stronger lower bounds by using the purported adversarial robustness of $\cA$ is to design communication protocols where Bob, after receiving Alice's message, proceeds to query $\cA$ repeatedly, feeding tokens into $\cA$ based on answers to such queries. In fact, in the communication games we shall use for our reductions, Bob will not have any input at all and the goal of the game will be for Bob to recover information about Alice's input, perhaps indirectly. It should be clear that the lower bound for the \mif problem, outlined in \Cref{sec:related}, can be formalized in this manner. For our main lower bound (\Cref{thm:lb:preview}), we use a communication game that can be seen as a souped-up version of \mif.

\mypar{The Subset-Avoidance Problem}
Recall the \textsc{subset-avoidance} problem described in \Cref{sec:results} and denote it $\avoid(t,a,b)$. To restate: Alice is given a set $A\subseteq [t]$ of size $a$ and must induce Bob to output a set $B\subseteq [t]$ of size $b$ such that $A \cap B = \varnothing$. The one-way communication complexity of this game can be lower bounded from first principles. Since each output of Bob is compatible with only $\binom{t-b}{a}$ possible input sets of Alice, she cannot send the same message on more than that many inputs. Therefore, she must be able to send roughly $\binom{t}{a}/\binom{t-b}{a}$ distinct messages for a protocol to succeed with high probability. The number of bits she must communicate in the worst case is roughly the logarithm of this ratio, which we show is $\Omega(ab/t)$. Interestingly, this lower bound is tight and can in fact be matched by a deterministic protocol, as shown in \Cref{lem:avoid-det}.

In the sequel, we shall need to consider a direct sum version of this problem that we call $\kavoid(t,a,b)$, where Alice has a list of $k$ subsets and Bob must produce his own list of subsets, with his $i$th avoiding the $i$th subset of Alice. We extend our lower bound argument to show that the one-way complexity of $\kavoid(t,a,b)$ is $\Omega(kab/t)$.

\mypar{Using Graph Coloring to Solve Subset-Avoidance}
To explain how we reduce the \kavoid problem to graph coloring, we focus on a special case of \Cref{thm:lb:preview} first. Suppose we have an adversarially robust $(\Delta+1)$-coloring streaming algorithm $\cA$. We describe a protocol for solving $\avoid(t,a,b)$. Let us set $t=\binom{n}{2}$ to have the universe correspond to all possible edges of an $n$-vertex graph. Suppose Alice's set $A$ has size $a \approx n^2/8$. We show that, given a set of $n$ vertices, Alice can use public randomness to randomly map her elements to the set of vertex-pairs so that the corresponding edges induce a graph $G$ that, w.h.p., has max-degree $\Delta \approx n/4$.  Alice proceeds to feed the edges of $G$ into $\cA$ and then sends Bob the state of $\cA$.

Bob now queries $\cA$ to obtain a $(\Delta+1)$-coloring of $G$. Then, he pairs up like-colored vertices to obtain a maximal pairing. Observe that he can pair up all but at most one vertex from each color class. Thus, he obtains at least $(n-\Delta-1)/2$ such pairs. Since each pair is monochromatic, they don't share an edge, and hence, Bob has retrieved $(n-\Delta-1)/2$ missing edges that correspond to elements absent in Alice's set. Since Alice used public randomness for the mapping, Bob knows exactly which elements these are. He now forms a matching with these pairs and inserts the edges to $\cA$. Once again, he queries $\cA$ to find a coloring of the modified graph. Observe that the matching can increase the max-degree of the original graph by at most~$1$. Therefore, this new coloring uses at most $\Delta+2$ colors. Thus, Bob would retrieve at least $(n-\Delta-2)/2$ new missing edges. He again adds to the graph the matching formed by those edges and queries $\cA$. It is crucial to note here that he can repeatedly do this and expect $\cA$ to output a correct coloring because of its adversarial robustness. Bob stops once the max-degree reaches $n-1$, since now the algorithm can color each vertex with a distinct color, preventing him from finding a missing edge.

Summing up the sizes of all the matchings added by Bob, we see that he has found $\Theta((n-\Delta)^2)$ elements missing from Alice's set. Since $\Delta \approx n/4$, this is $\Theta(n^2)$. Thus, Alice and Bob have solved the $\avoid(t,a,b)$ problem where $t=\binom{n}{2}$ and $a,b=\Theta(n^2)$. As outlined above, this requires $\Omega(ab/t)=\Omega(n^2)$ communication. Hence, $\cA$ must use at least $\Omega(n^2)=\Omega(n\Delta)$ space.

With some further work, we can generalize the above argument to work for any value of $\Delta$ with $1 \le \Delta \le n/2$. For this generalization, we use the communication complexity of $\kavoid(t,a,b)$ for suitable parameter settings. With more rigorous analysis, we can further generalize the result to apply not only to $(\Delta+1)$-coloring algorithms but to any $f(\Delta)$-coloring algorithm. That is, we can prove \Cref{thm:lower-bound-core}. 

\subsection{Upper Bound Techniques} \label{sec:techniques-ub}

It is useful to outline our algorithms in an order different from the presentation in \Cref{sec:ubs}.

\mypar{A Sketch-Switching-Based $\bm{O(\Delta^2)}$-Coloring} The main challenge in designing an adversarially robust coloring algorithm is that the adversary can compel the algorithm to change its output at every point in the stream: he queries the algorithm, examines the returned coloring, and inserts an edge between two vertices of the same color.  Indeed, the sketch switching framework of \cite{BenEliezerJWY20} shows that for \emph{function estimation}, one can get around this power of the adversary as follows. Start with a basic (i.e., oblivious-adversary) sketch for the problem at hand. Then, to deal with an adaptive adversary, run multiple independent basic sketches in parallel, changing outputs only when forced to because the underlying function has changed significantly. More precisely, maintain $\lambda$ independent parallel sketches where
$\lambda$ is the \emph{flip number}, defined as the maximum number of times the function value can change by the desired approximation factor over the course of the stream. Keep track of which sketch is currently being used to report outputs to the adversary. Upon being queried, re-use the most recently given output unless forced to change, in which case discard the current sketch and switch to the next in the list of $\lambda$ sketches. Notice that this keeps the adversary oblivious to the randomness being used to compute future outputs: as soon as our output reveals any information about the current sketch, we discard it and never use it again to process a stream element.

This way of switching to a new sketch only when forced to ensures that $\lambda$ sketches suffice, which is great for function estimation. However, since a graph coloring output can be forced to change at every point in a stream of length $m$, naively implementing this idea would require $m$ parallel sketches, incurring a factor of $m$ in space. We have to be more sophisticated. We combine the above idea with a chunking technique so as to reduce the number of times we need to switch sketches. 

Suppose we split the $m$-length stream into $k$ chunks, each of size $m/k$. We initialize $k$ parallel sketches of a standard streaming $(\Delta+1)$-coloring algorithm $\cC$ to be used one at a time as each chunk ends. We store (buffer) an entire chunk explicitly and when we reach its end, we say we have reached a ``checkpoint,'' use a fresh copy of $\cC$ to compute a $(\Delta+1)$-coloring of the entire graph at that point, delete the chunk from our memory, and move on to store the next chunk. When a query arrives, we deterministically compute a $(\Delta+1)$-coloring of the partial chunk in our buffer and ``combine'' it with the coloring we computed at the last checkpoint. The combination uses at most $(\Delta+1)^2 = O(\Delta^2)$ colors. Since a single copy of $\cC$ takes $\tO(n)$ space, the total space used by the sketches is $\tO(nk)$. Buffering a chunk uses an additional $\tO(m/k)$ space. Setting $k$ to be $\sqrt{m/n}$, we get the total space usage to be $\tO(\sqrt{mn})=\tO(n\sqrt{\Delta})$, since $m=O(n\Delta)$. 

Handling edge deletions is more delicate. This is because we can no longer express the current graph as a union of $G_1$ (the graph up to the most recent checkpoint) and $G_2$ (the buffered subgraph) as above. A chunk may now contain an update that deletes an edge which was inserted before the checkpoint, and hence, is not in store. Observe, however, that deleting an edge doesn't violate the validity of a coloring. Hence, if we ignore these edge deletions, the only worry is that they might substantially reduce the maximum degree $\Delta$ causing us to use many more colors than desired. Now, note that if we have a $(\Delta_1+1)$-coloring at the checkpoint, then as long as the current maximum degree $\Delta$ remains above $\Delta_1/2$, we have a $2\Delta$-coloring in store. Hence, combining that with a $(\Delta+1)$-coloring of the current chunk gives an $O(\Delta^2)$-coloring. Furthermore, we can keep track of the maximum degree of the graph using only $\tO(n)$ space and detect the points where it falls below half of what it was at the last checkpoint. We declare each such point as a new ``ad hoc checkpoint,'' i.e., use a fresh sketch to compute a $(\Delta+1)$-coloring there. Since the max-degree can decrease by a factor of $2$ at most $\log n$ times, we show that it suffices to have only $\log n$ times more parallel sketches initialized at the beginning of the stream. This incurs only an $O(\log n)$-factor overhead in space. We discuss the algorithm and its analysis in detail in \Cref{alg:deltasq-turnstile} and \Cref{lem:turnstile-ub-square} respectively.

To generalize the above to an $O(\Delta^k)$-coloring in $\tO(n\Delta^{1/k})$ space, we use recursion in a manner reminiscent of streaming coreset construction algorithms. Split the stream into $\Delta^{1/k}$ chunks, each of size $n\Delta^{1-1/k}$. Now, instead of storing a chunk entirely and coloring it deterministically, we can recursively color it with $\Delta^{k-1}$ colors in $O(n\Delta^{1/k})$ space and combine the coloring with the $(\Delta+1)$-coloring at the last checkpoint. The recursion makes the analysis of this algorithm even more delicate, and careful work is needed to argue the space usage and to properly handle deletions in the turnstile setting. The details appear in \Cref{thm:turnstile-ub-general}.

\mypar{A Palette-Sparsification-Based $\bm{O(\Delta^3)}$-Coloring} This algorithm uses a different approach to the problem of the adversary forcing color changes. It ensures that, every time an an edge is added, one of its endpoints is
randomly recolored, where the color is drawn uniformly from a set $C \setm K$ of colors, where $C$ is determined by the degree of the endpoint, and $K$ is the set of colors currently held by neighboring vertices. Let $R_v$ denote the random string that drives this color-choosing process at vertex $v$. When the adversary inserts an edge $\{u,v\}$, the algorithm uses $R_u$ and $R_v$ to determine whether this edge could with significant probability end up with the same vertex color on both ends in the future. If so,
the algorithm stores the edge; if not, it can be ignored entirely. It will turn out that when the number of colors is set to establish an $O(\Delta^3)$-coloring, only an $\tO(1 / \Delta)$ fraction of edges need to be stored, so the algorithm only needs to store $\tO(n)$ bits of data related to the input. The proof of this storage bound has to contend with an adaptive adversary. We do so by first arguing that despite this adaptivity, the adversary cannot cause the algorithm to use more storage than the worst oblivious adversary could have. We can then complete the proof along traditional lines, using concentration bounds. The details appear in \Cref{alg:ub-delta3} and \Cref{thm:ub-delta3}.

There is a technical caveat here. The random string $R_v$ used at each vertex $v$ is about $\tO(\Delta)$ bits long. Thus, the algorithm can only be called semi-streaming if we agree that these $\tO(n\Delta)$ random bits do not count towards the storage cost. In the standard streaming setting, this ``randomness cost'' is not a concern, for we can use the standard technique of invoking Nisan's space-bounded pseudorandom generator~\cite{Nisan90} to argue that the necessary bits can be generated on the fly and never stored. Unfortunately, it is not clear that this transformation preserves adversarial robustness. Despite this caveat, the algorithmic result is interesting as a contrast to our lower bounds, because the lower bounds do apply even in a model where random bits are free, and only actually computed input-dependent bits count towards the space complexity.

%!TEX root = main.tex

\section{Preliminaries} \label{sec:prelims}

\mypar{Defining Adversarial Robustness}
For the purposes of this paper, a ``streaming algorithm'' is always
one-pass and we always think of it as working against an adversary. In the
\emph{standard streaming} setting, this adversary is oblivious to the
algorithm's actual run. This can be thought of as a special case of the setup
we now introduce in order to define \emph{adversarially robust streaming}
algorithms.

Let $\cU$ be a universe whose elements are called tokens. A data stream is a
sequence in $\cU^*$. A data streaming problem is specified by a relation $f
\subseteq \cU^* \times \cZ$ where $\cZ$ is some output domain: for each input
stream $\sigma \in \cU^*$, a valid solution is any $z \in \cZ$ such that
$(\sigma,z) \in f$. A randomized streaming algorithm $\cA$ for $f$ running in
$s$ bits of space and using $r$ random bits is formalized as a triple
consisting of (i)~a function $\init \colon \b^r \to \b^s$, (ii)~a function
$\process \colon \b^s \times \cU \times \b^r \to \b^s$, and (iii)~a function
$\query \colon \b^s \times \b^r \to \cZ$. Given an input stream $\sigma =
(x_1,\ldots,x_m)$ and a random string $R \in_R \b^r$, the algorithm starts in
state $w_0 = \init(R)$, goes through a sequence of states $w_1,\ldots,w_m$,
where $w_i = \process(w_{i-1}, x_i, R)$, and provides an output $z =
\query(w_m, R)$. The algorithm is $\delta$-error in the standard sense if
$\Pr_R[(\sigma,z) \in f] \ge 1-\delta$.

To define adversarially robust streaming, we set up a game between two
players: Solver, who runs an algorithm as above, and Adversary, who adaptively
generates a stream $\sigma = (x_1,\ldots,x_m)$ using a next-token function
$\next \colon \cZ^* \to \cU$ as follows. With $w_0, \ldots, w_m$ as above, put
$z_i = \query(w_i, R)$ and $x_i = \next(z_0,\ldots,z_{i-1})$. In words,
Adversary is able to query the algorithm at each point of time and can compute
an arbitrary \emph{deterministic} function of the history of outputs provided
by the algorithm to generate his next token. Fix (an upper bound on) the
stream length $m$. Algorithm $\cA$ is \emph{$\delta$-error
adversarially robust} if 
\[
  \forall \text{ function \next}:~ 
    \Pr_R[ \forall\, i \in [m]:~ 
      ((x_1,\ldots,x_i), z_i) \in f] \ge 1-\delta \,.
\]
% We say that $\cA$ is \emph{locally-$\delta'$-error adversarially robust} if
% \[
%   \forall \text{ function \next}~
%     \forall\, i \in [m]:~ 
%       \Pr_R[((x_1,\ldots,x_i), z_i) \in f] \ge 1-\delta' \,.
% \]
% This latter condition is ``stronger'' in the sense that if an algorithm
% satisfies it, then it is automatically $(m\delta')$-error adversarially robust,
% by a union bound. 
In this work, we prove lower bounds for algorithms
that are only required to be $O(1)$-error adversarially robust. On the other
hand, the algorithms we design will achieve vanishingly small error of the
form $1/\poly(m)$ and moreover, they will be able to detect when they are
about to err and can abort at that point.

\mypar{Graph Streams and the Coloring Problem}
Throughout this paper, an \emph{insert-only graph stream} describes an undirected graph on the vertex set $[n]$, for some fixed $n$ that is known in advance, by listing its edges in some order: each token is an edge. A \emph{strict graph turnstile stream} describes an evolving graph $G$ by using two types of tokens---$\insedge(\{u,v\})$, which causes $\{u,v\}$ to be added to $G$, and $\deledge(\{u,v\})$, which causes $\{u,v\}$ to be removed---and satisfies the promises that each insertion is of an edge that was not already in $G$ and that each deletion is of an edge that was in $G$. When we use the term ``graph stream'' without qualification, it should be understood to mean an insert-only graph stream, unless the context suggests that either flavor is acceptable.

In this context, a semi-streaming algorithm is one that runs in $\tO(n) := O(n \polylog n)$ bits of space.

In the $K$-coloring problem, the input is a graph stream and a valid answer to a query is a vector in $[K]^n$ specifying a color for each vertex such that no two adjacent vertices receive the same color. The quantity $K$ may be given as a function of some graph parameter, such as the maximum degree $\Delta$. In reading the results in this paper, it will be helpful to think of $\Delta$ as a growing but sublinear function of $n$, such as $n^\alpha$ for $0 < \alpha < 1$. Since an output of the $K$-coloring problem is a $\Theta(n \log K)$-sized object, we think of a semi-streaming coloring algorithm running in $\tO(n)$ space as having ``essentially optimal'' space usage.

\mypar{One-Way Communication Complexity}
In this work, we shall only consider a special kind of two-player communication game: one where all input belongs to the speaking player Alice and her goal is to induce Bob to produce a suitable output. Such a game, $g$, is given by a relation $g \in \cX \times \cZ$, where $\cX$ is the input domain and $\cZ$ is the output domain. In a protocol $\Pi$ for $g$, Alice and Bob share a random string $R$. Alice is given $x \in \cX$ and sends Bob a message $\msg(x,R)$. Bob uses this to compute an output $z = \out(\msg(x,R))$. We say that $\Pi$ solves $g$ to error $\delta$ if $\forall\, x\in \cX:~ \Pr_R[(x,z) \in g] \ge 1-\delta$. The communication cost of $\Pi$ is $\cost(\Pi) := \max_{x,R} \text{length}(\msg(x,R))$. The (one-way, randomized, public-coin) $\delta$-error communication complexity of $g$ is $\R^\to_\delta(g) := \min\{\cost(\Pi):\, \Pi$ solves $g$ to error $\delta\}$. 

If $\Pi$ never uses $R$, it is deterministic. Minimizing over zero-error deterministic protocols gives us the one-way deterministic communication complexity of $g$, denoted $\D^\to(g)$.

\mypar{A Result on Random Graphs}
During the proof of our main lower bound (in \Cref{sec:avoid-to-coloring}), we shall need the following basic lemma on the maximum degree of a random graph.

\begin{lemma}\label{lem:random-graph-max-degree} Let $G$ be a graph with $M$ edges and $n$ vertices, drawn uniformly at random. Define $\Delta_G$ to be its maximum degree. Then for $0 \le \eps \le 1$:
\begin{align}
    \Pr\left[ \Delta_G \ge \frac{2 M}{n} (1 + \eps) \right] \le 2 n\exp\left(- \frac{\eps^2}{3} \cdot \frac{2M}{n} \right)\,.
\end{align}
\end{lemma}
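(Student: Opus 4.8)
The plan is to prove the estimate one vertex at a time and then union-bound over the $n$ vertices; the factor $n$ comes out of the union bound and the extra factor $2$ is pure slack. Fix a vertex $v \in [n]$ and write $\deg_G(v) = \sum_{u \in [n]\setm\{v\}} X_{\{u,v\}}$, where $X_e$ is the indicator random variable for the event that $e$ is an edge of $G$. Since $G$ is a uniformly random graph with exactly $M$ edges, the vector $(X_e)_e$ is the indicator of a uniformly random $M$-element subset of the $\binom{n}{2}$ potential edges; in particular $\deg_G(v)$ is hypergeometrically distributed, counting how many of the $M$ chosen edges lie among the $n-1$ edges incident to $v$. By linearity of expectation, $\mu := \EE[\deg_G(v)] = M\cdot\frac{n-1}{\binom{n}{2}} = \frac{2M}{n}$, the same value for every $v$.

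Next I would bound the upper tail $\Pr[\deg_G(v) \ge (1+\eps)\mu]$. The one mildly nontrivial point is that the summands $X_{\{u,v\}}$ are not independent, because the edges of $G$ are sampled without replacement. This is harmless: such a sampling scheme makes the indicators negatively associated, and a sum of negatively associated $\{0,1\}$ random variables satisfies exactly the same multiplicative upper-tail Chernoff inequality as a sum of independent Bernoullis (equivalently, one may invoke Hoeffding's fact that a hypergeometric tail is dominated by the corresponding binomial tail and then apply the textbook Chernoff bound to that binomial). Either way, for $0 \le \eps \le 1$,
\[
  \Pr\!\left[\deg_G(v) \ge (1+\eps)\tfrac{2M}{n}\right]
  \;\le\; \exp\!\left(-\frac{\eps^2 \mu}{3}\right)
  \;=\; \exp\!\left(-\frac{\eps^2}{3}\cdot\frac{2M}{n}\right).
\]

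Finally, taking a union bound over $v \in [n]$ gives
\[
  \Pr\!\left[\Delta_G \ge \tfrac{2M}{n}(1+\eps)\right]
  \;\le\; \sum_{v\in[n]}\Pr\!\left[\deg_G(v) \ge (1+\eps)\tfrac{2M}{n}\right]
  \;\le\; n\exp\!\left(-\frac{\eps^2}{3}\cdot\frac{2M}{n}\right)
  \;\le\; 2n\exp\!\left(-\frac{\eps^2}{3}\cdot\frac{2M}{n}\right),
\]
which is the claimed bound; the degenerate cases (when $\eps = 0$, or when $\tfrac{2M}{n}(1+\eps) > n-1$ so that the left-hand probability is $0$) are covered by the same chain. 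The only real ``obstacle'' is quoting the without-replacement Chernoff bound correctly; if a fully self-contained treatment were wanted, I would expand the negative-association step into either Hoeffding's hypergeometric-to-binomial coupling followed by the standard Chernoff computation, or a short step-by-step conditioning argument on the sampled edges. Everything else is bookkeeping.
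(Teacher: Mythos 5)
Your proof is correct, and it takes a genuinely different route from the paper's. The paper first transfers the bound from the uniform $M$-edge model $G(n,M)$ to the Erd\H{o}s--R\'enyi model $H(n,p)$ with $p = M/\binom{n}{2}$: it uses a monotonicity coupling (adding an edge can only increase $\Delta_G$) to pass from conditioning on $e(G)=M$ to $e(G)\ge M$, and then divides by $\Pr[e(G)\ge M]\ge 1/2$ (median of a binomial equals its integral mean). That division is where the paper's factor of $2$ actually comes from --- it is load-bearing, not slack. Only after this transfer does the paper apply the union bound and the independent-edges Chernoff inequality inside $H(n,p)$. You instead stay entirely in $G(n,M)$, note that each $\deg_G(v)$ is hypergeometric (equivalently, a sum of negatively associated Bernoullis), and invoke the without-replacement version of the multiplicative Chernoff bound (via Hoeffding's hypergeometric-to-binomial domination or negative association) before union-bounding over vertices; in your argument the factor of $2$ really is pure slack. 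Your route is more direct and avoids the model comparison altogether, at the modest cost of quoting the NA/hypergeometric Chernoff inequality, which is standard but one step less elementary than the independent-Bernoulli bound the paper uses. Both arguments are sound and of comparable length; the paper's has the virtue of using only the plain Chernoff bound, yours has the virtue of not needing the monotonicity and median scaffolding.
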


\begin{proof}
  Let $G(n,m)$ be the uniform distribution over graphs with $m$ edges and $n$ vertices. Observe the monotonicity
  property that for all $m\in\NN$, $\Pr_{G \sim G(n,m)}[\Delta_G \ge C] \le \Pr_{G \sim G(n,m+1)}[\Delta_G \ge C]$. Next,
  let $H(n,p)$ be the distribution over graphs on $n$ vertices in which each edge is included with probability $p$,
  independently of any others, and let $e(G)$ be the number of edges of a given graph $G$. Then with $p = M / \binom{n}{2}$,
  \begin{align*}
    \Pr_{G \sim G(n,M)}[\Delta_G \ge C]
    &= \Pr_{G \sim H(n,p)}[\Delta_G \ge C \mid e(G) = M]
    \le \Pr_{G \sim H(n,p)}[\Delta_G \ge C \mid e(G) \ge M] && \lhd~\text{by monotonicity} \\
    % &= \frac{\Pr_{G \sim H(n,p) \land e(G) \ge M}[\Delta_G \ge C]}{\Pr_{G \sim H(n,p)}[e(G) \ge M]}\\
    &\le  \frac{\Pr_{G \sim H(n,p)}[\Delta_G \ge C]}{\Pr_{G \sim H(n,p)}[e(G) \ge M]}
    \le 2 \Pr_{G \sim H(n,p)}[\Delta_G \ge C] \,.
  \end{align*}
  The last step follows from the well-known fact that the median of a binomial distribution
  equals its expectation when the latter is integral; hence $\Pr_{G \sim H(n,p)}[e(G) \ge M] \ge 1/2$.

  Taking $C = (2M/n) (1+\eps)$ and using a union bound and Chernoff's inequality,
  \begin{align*}
    \Pr_{G \sim H(n,p)}\left[ \Delta_G \ge \frac{2M}{n} (1+\eps) \right]
    &\le \sum_{x \in V(G)} \Pr_{G \sim H(n,p)}\left[\deg_G(x) \ge \frac{2M}{n} (1+\eps)\right]
    \le n \exp\left(- \frac{\eps^2}{3} \cdot \frac{2M}{n} \right) \,.
    \qedhere
  \end{align*}
\end{proof}

\mypar{Algorithmic Results From Prior Work}
Our adversarially robust graph coloring algorithms in \Cref{sec:ub-sketchswitch} will use, as subroutines, some previously known standard streaming algorithms for coloring. We summarize the key properties of these existing algorithms.

\begin{fact}[Restatement of \cite{AssadiCK19}, Result~2]\label{fact:ackalgo}
There is a randomized turnstile streaming algorithm for $(\Delta+1)$-coloring a graph with max-degree $\Delta$ in the oblivious adversary setting that uses $\tO(n)$ bits of space and $\tO(n)$ random bits. The failure probability can be made at most $1/n^p$ for any large constant $p$. % the 'random bits' count is our statement, not the original's
\qed
\end{fact}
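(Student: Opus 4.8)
The statement restates a result of Assadi, Chen, and Khanna, so the plan is to recall the structure of their algorithm and check that it delivers all three of the claimed guarantees (space, random bits, and inverse-polynomial failure probability). The engine is their \emph{palette sparsification theorem}, which I would take as a black box: if every vertex $v$ independently samples a set $L_v$ of $\Theta(\log n)$ colors uniformly from the palette $[\Delta+1]$, then with probability $1-1/\poly(n)$ the graph admits a proper coloring in which each $v$ receives a color from $L_v$. (The hard combinatorial core — proving this via a delicate structural decomposition of the graph, treating locally sparse vertices and near-clique clusters separately — is exactly the contribution of \cite{AssadiCK19}, and I would not reprove it.)

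Given the theorem, the streaming algorithm is short to describe. Sample the lists $\{L_v\}_{v\in V}$ up front using the random string; call an edge $\{u,v\}$ a \emph{conflict edge} if $L_u \cap L_v \neq \varnothing$; maintain only the conflict subgraph $H$; and at query time solve the list-coloring instance $(H,\{L_v\})$ offline, outputting the resulting assignment. This is a valid $(\Delta+1)$-coloring of the whole graph: a non-conflict edge $\{u,v\}$ has $L_u\cap L_v=\varnothing$, so its endpoints can never collide, while conflict edges are handled explicitly by the list coloring. Correctness thus follows immediately from the palette sparsification guarantee.

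For the space bound I would bound $|E(H)|$. Conditioned on $L_v$, and for any neighbor $u$ of $v$, each of the $\Theta(\log n)$ colors of $L_u$ lands in $L_v$ with probability $O(\log n/\Delta)$, so $\Pr[\{u,v\}\in E(H)\mid L_v] = O(\log^2 n/\Delta)$, uniformly in $L_v$. Hence, conditioned on $L_v$, the degree $\deg_H(v)$ is dominated by a binomial with mean $O(\log^2 n)$, and a Chernoff bound plus a union over the $n$ vertices gives $|E(H)| = \tO(n)$ with probability $1-1/\poly(n)$. Storing $H$ together with the lists is then $\tO(n)$ bits. For the turnstile model, the set of edges ever present is the symmetric difference of insertions and deletions, so instead of storing $H$ directly I would maintain $\tO(n)$ parallel $\ell_0$-sampling / sparse-recovery sketches of the indicator vector of conflict edges; since that vector has $\tO(n)$ nonzeros w.h.p., these sketches recover $H$ exactly at query time, with recovery failing only with probability $1/\poly(n)$. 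The random-bit count is dominated by the lists: $n$ vertices $\times\ \Theta(\log n)$ colors $\times\ O(\log\Delta)$ bits per color is $\tO(n)$, and the sketches add another $\tO(n)$ random bits. Finally, the three bad events — failure of palette sparsification, $|E(H)|$ exceeding $\tO(n)$, and sparse-recovery failure — can each be pushed below $1/n^p$ for any constant $p$ by enlarging the $\Theta(\log n)$ constants, and a union bound finishes the argument.

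The only genuinely hard step is the palette sparsification theorem itself; everything downstream (the conflict-edge count, the turnstile realization via $\ell_0$-sampling, and the tallying of random bits) is routine. One subtlety worth flagging explicitly is that the conflict subgraph $H$ is allowed to depend on the sampled lists — harmless here, since Fact's setting is the oblivious one where the edge set is fixed in advance, but this is precisely the dependence that an adaptive adversary exploits elsewhere in this paper, which is why the ACK guarantee does not survive into the robust model.
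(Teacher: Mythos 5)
The paper gives no proof of this statement: it is labeled a \emph{Fact}, cited directly as Result~2 of \cite{AssadiCK19}, and closed immediately with \qed. Your reconstruction of the ACK algorithm --- palette sparsification taken as a black box, the conflict-edge concentration bound giving $|E(H)|=\tO(n)$ w.h.p., the turnstile realization via sparse-recovery/$\ell_0$-sampling sketches, and the $\tO(n)$ tally of random bits --- is an accurate account of how that cited result is established and is consistent with the informal description the paper gives in \Cref{sec:related} (``the expected number of such `conflict' edges is $O(n\log^2 n)$''). Since the paper itself supplies no argument here, there is nothing to contrast against; your closing remark that the conflict-subgraph dependence on the sampled lists is precisely what breaks in the adaptive setting correctly anticipates the role this Fact plays in the rest of the paper.
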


In the adversarial model described above, we need to answer a query after each stream update. The algorithm mentioned in \Cref{fact:ackalgo} or other known algorithms using ``about'' $\Delta$ colors (e.g., \cite{BeraCG20}) use at least $\tilde{\Theta}(n)$ post-processing time in the worst case to answer a query. Hence, using such algorithms in the adaptive adversary setting might be inefficient. We observe, however, that at least for insert-only streams, there exists an algorithm that is efficient in terms of both space and time. This is obtained by combining the algorithms of \cite{BeraCG20} and \cite{HenzingerP20} (see the discussion towards the end of \Cref{sec:ub-sketchswitch} for details).

\begin{fact}\label{fact:bcgsmallupdate}
In the oblivious adversary setting, there is a randomized streaming algorithm that receives a stream of edge insertions of a graph with max-degree $\Delta$ and degeneracy $\kappa$ and maintains a proper coloring of the graph using $\kappa(1+\eps)\leq \Delta(1+\eps)$ colors, $\tO(\eps^{-2}n)$ space, and $O(1)$ amortized update time. The failure probability can be made at most $1/n^p$ for any large constant $p$. \qed
\end{fact}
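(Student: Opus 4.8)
The plan is to compose two known results: the degeneracy-based semi-streaming coloring algorithm of \cite{BeraCG20} and the dynamic coloring data structure of \cite{HenzingerP20}. Recall that, against an oblivious adversary, \cite{BeraCG20} uses public randomness to partition $[n]$ into blocks equipped with pairwise disjoint palettes of total size $\kappa(1+\eps)$, stores only intra-block edges, and colors each block out of its own palette; with high probability over the partition this uses $\tO(\eps^{-2}n)$ bits and yields a proper $\kappa(1+\eps)$-coloring of the whole graph (inter-block edges being automatically satisfied by palette disjointness). The only attribute of the claimed Fact not already delivered by \cite{BeraCG20} is the $O(1)$ amortized update time: as stated, that algorithm may recompute a block's coloring from scratch, costing $\tilde\Theta(n)$ time per update or query.

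To obtain the fast update time I would maintain each block's coloring incrementally, running one instance of \cite{HenzingerP20}'s random-recoloring data structure per block, on the sub-stream of edges that \cite{BeraCG20} stores in that block, and with that block's palette as the color set. Then, after each stream update, the current color array is already a valid coloring of the stored subgraph---hence of the input graph---so a query is answered by reading it off in $O(1)$ time per requested vertex. Each stream edge is stored into at most one block and causes at most one insertion into one \cite{HenzingerP20} instance, so the amortized update time is $O(1)$. The space is that of \cite{BeraCG20} plus $O(1)$ machine words of coloring state per stored edge, i.e.\ still $\tO(\eps^{-2}n)$ bits; and the failure probability is just that of the underlying \cite{BeraCG20} random partition being well-behaved, which can be driven below $1/n^{p}$ for any constant $p$.

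The step that needs care---and is the real content of the argument---is reconciling \cite{HenzingerP20}'s color budget with \cite{BeraCG20}'s small per-block palettes: \cite{HenzingerP20} maintains a coloring whose number of colors is governed by a local degree parameter, and one must check that, inside each block, the quantity that actually drives it (roughly, the number of already-colored neighbors a vertex must avoid when it is (re)colored, in \cite{BeraCG20}'s coloring order) stays below the block's palette size with high probability. This is precisely the structural fact that underlies \cite{BeraCG20}'s color count in the first place, so it is available; what remains is to verify that \cite{HenzingerP20}'s amortized-$O(1)$ analysis uses only this bounded-local-degree property together with the availability of a free color, and not any feature of the literal input graph, so that the two algorithms compose cleanly. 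Since \cite{HenzingerP20} is stated for fully dynamic graphs, restricting to insertions here only makes the analysis easier.
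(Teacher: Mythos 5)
Your proposal matches the paper's proof: both compose the cluster-partitioning and intra-cluster storage of \cite{BeraCG20} with \cite{HenzingerP20}'s dynamic-coloring data structure run separately per cluster with that cluster's palette, observing that each stream edge lands in at most one cluster so the amortized update cost stays $O(1)$ per sketch. If anything, you flag the real subtlety---reconciling \cite{HenzingerP20}'s color budget with \cite{BeraCG20}'s small per-cluster palettes---more explicitly than the paper, whose discussion is quite terse and simply asserts that the composition works across the $O(\log n)$ parallel sketches.
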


%!TEX root = main.tex

\section{Hardness of Adversarially Robust Graph Coloring}

In this section, we prove our first major result, showing that graph coloring
is significantly harder when working against an adaptive adversary than it is
in the standard setting of an oblivious adversary. We carry out the proof plan
outlined in \Cref{sec:techniques-lb}, first describing and analyzing our novel
communication game of \textsc{subset-avoidance} (henceforth, \avoid) and then
reducing the \avoid problem to robust coloring.

\subsection{The Subset Avoidance Problem} \label{sec:avoid}

Let $\avoid(t,a,b)$ denote the following one-way communication game.
\begin{itemize}[itemsep=1pt]
    \item Alice is given $S \subseteq [t]$ with $|S|=a$;
    \item Bob must produce $T \subseteq [t]$ with $|T|=b$ for which $T$ is disjoint from $S$.
\end{itemize}
Let $\kavoid(t,a,b)$ be the problem of simultaneously solving $k$ instances of $\avoid(t,a,b)$.

\begin{lemma}\label{lem:disjrec-lb}
  The public-coin $\delta$-error communication complexity of $\kavoid(t,a,b)$ is bounded thus:
  \begin{align}
    \R^\to_\delta(\kavoid(t,a,b))
    &\ge \log{(1-\delta)} + k \log{\left( \binom{t}{a} \Big/ \binom{t-b}{a} \right)} \label{eq:disjrec-lb} \\
    &\ge \log{(1-\delta)} + k a b / {(t \ln 2)} \label{eq:avoid-lb-clean} \,.
  \end{align}
\end{lemma}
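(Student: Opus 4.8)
The plan is to lower bound $\R^\to_\delta(\kavoid(t,a,b))$ by a counting argument on the number of distinct messages Alice must be able to send, and then to turn that count into a bound on message length via a standard pigeonhole / Fano-style observation. First I would fix an optimal $\delta$-error protocol $\Pi$ and condition on the best outcome of the public random string $R$: by averaging, there exists a fixed $r$ such that, writing $M(x) = \msg(x,r)$ and $z(x) = \out(M(x))$, the deterministic protocol $x \mapsto z(x)$ succeeds on at least a $(1-\delta)$ fraction of inputs $x$ (here an input is a $k$-tuple of $a$-subsets of $[t]$, so there are $\binom{t}{a}^k$ of them). Call an input \emph{good} if the protocol is correct on it, i.e., if for every coordinate $i$, the $i$th output set $T_i$ is an $b$-subset disjoint from the $i$th input set $S_i$.

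The core combinatorial step is to bound how many good inputs can share a single message value. Fix a message value $\mu$ in the image of $M$; it determines an output tuple $(T_1,\ldots,T_k)$ with each $|T_i| = b$. Any good input $x = (S_1,\ldots,S_k)$ with $M(x) = \mu$ must satisfy $S_i \cap T_i = \varnothing$ for all $i$, so each $S_i$ is an $a$-subset of the $(t-b)$-element set $[t]\setm T_i$; hence the number of good inputs mapping to $\mu$ is at most $\binom{t-b}{a}^k$. Since the good inputs number at least $(1-\delta)\binom{t}{a}^k$ and each message value absorbs at most $\binom{t-b}{a}^k$ of them, the number of distinct message values is at least $(1-\delta)\binom{t}{a}^k / \binom{t-b}{a}^k$. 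The communication cost satisfies $2^{\cost(\Pi)} \ge \#\{\text{message values}\}$ (a message of length at most $\ell$ takes one of at most $2^{\ell+1}-1 \le 2^{\ell+1}$ values, but for the clean bound it suffices to note $\cost(\Pi) \ge \log(\#\text{values})$), which upon taking logs yields \eqref{eq:disjrec-lb}:
\[
  \R^\to_\delta(\kavoid(t,a,b)) \;\ge\; \log(1-\delta) + k\log\!\left(\binom{t}{a}\Big/\binom{t-b}{a}\right).
\]

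To get the clean form \eqref{eq:avoid-lb-clean}, I would estimate the ratio $\binom{t}{a}/\binom{t-b}{a} = \prod_{j=0}^{a-1}\frac{t-j}{t-b-j}$. Each factor is $1 + \frac{b}{t-b-j} \ge 1 + \frac{b}{t} \ge \exp\!\big(\tfrac{b}{t} \cdot c\big)$ for a suitable constant — more carefully, using $1+x \ge e^{x/(1+x)}$ or simply $1 + b/t \ge 2^{b/(t\ln 2)}$-type bounds — so the product is at least $\exp(ab/t)$ up to the $\ln 2$ base change, giving $\log\big(\binom{t}{a}/\binom{t-b}{a}\big) \ge ab/(t\ln 2)$. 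Multiplying by $k$ and adding $\log(1-\delta)$ gives \eqref{eq:avoid-lb-clean}. The main obstacle, such as it is, is getting the elementary inequality on the ratio of binomials exactly tight enough to land the constant $1/(t\ln 2)$ rather than something weaker; the structural counting part is straightforward, and the fact that the bound is a clean direct sum ($k$ times the single-instance bound) falls out automatically because both the input count and the per-message capacity are $k$th powers.
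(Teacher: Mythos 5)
Your counting argument for \eqref{eq:disjrec-lb} is essentially identical to the paper's: fix the random string by averaging, note each message value determines an output tuple $(T_1,\ldots,T_k)$, and bound the number of good inputs per message by $\binom{t-b}{a}^k$. That part is correct.

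For the binomial-ratio estimate in \eqref{eq:avoid-lb-clean}, however, the specific path you sketch does not work. You bound each factor $\frac{t-j}{t-b-j} = 1 + \frac{b}{t-b-j} \ge 1 + \frac{b}{t}$, and then try to push $1 + b/t$ up to $e^{b/t}$ (equivalently $2^{b/(t\ln 2)}$). But the inequality goes the wrong way: $1 + x \le e^x$ for all real $x$, so you cannot land $e^{ab/t}$ this way. The alternative you propose, $1+x \ge e^{x/(1+x)}$, is true but only yields $e^{ab/(t+b)}$, which is strictly weaker than the target. The fix is to keep a tighter per-factor bound: since $t - b - j \le t - b$, each factor is $\ge 1 + \frac{b}{t-b} = \frac{t}{t-b} = \frac{1}{1 - b/t}$ (you lost exactly this when you replaced $t-b-j$ by $t$ in the denominator). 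Then use the standard inequality $\frac{1}{1-x} \ge e^x$ (valid because $1-x \le e^{-x}$), which \emph{does} go in the needed direction, to get $\binom{t}{a}/\binom{t-b}{a} \ge (t/(t-b))^a > e^{ab/t}$ and hence the $\Omega(ab/t)$ bound on the log. This is precisely how the paper argues, at equation~\eqref{eq:binratio-2}. You correctly suspected the constant would be the delicate part; the point is that it is not merely a matter of tightening your estimate but of choosing a per-factor lower bound of the form $1/(1-x)$ rather than $1+x$, since only the former dominates $e^x$.
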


\begin{proof}
  Let $\Pi$ be a $\delta$-error protocol for $\kavoid(t,a,b)$ and let $d = \cost(\Pi)$, as defined in \Cref{sec:prelims}. Since, for each input $(S_1,\ldots,S_k) \in \binom{[t]}{a}^k$, the error probability of $\Pi$ on that input is at most $\delta$, there must exist a fixing of the random coins of $\Pi$ so that the resulting deterministic protocol $\Pi'$ is correct on all inputs in a set
  \[
    \cC \subseteq \binom{[t]}{a}^k \,, \quad\text{with } |\cC| \ge (1-\delta) \binom{t}{a}^k \,.
  \]
  The protocol $\Pi'$ is equivalent to a function $\phi \colon \cC \to \binom{[t]}{b}^k$ where 
  \begin{itemize}[itemsep=1pt]
    \item the range size $|\image(\phi)| \le 2^d$, because $\cost(\Pi) \le d$, and
    \item for each $(S_1,\ldots,S_k) \in \cC$, the tuple $(T_1, \ldots, T_k) := \phi((S_1,\ldots,S_k))$ is a correct output for Bob, i.e., $S_i \cap T_i = \varnothing$ for each $i$.
  \end{itemize}

  For any fixed $(T_1,\ldots,T_k) \in \binom{[t]}{b}^k$, the set of all $(S_1,\ldots,S_k) \in \binom{[t]}{a}^k$ for which each coordinate $S_i$ is disjoint from the corresponding $T_i$ is precisely the set $\binom{[t] \setm T_1}{S_1} \times \cdots \times \binom{[t] \setm T_k}{S_k}$. The cardinality of this set is exactly $\binom{t - b}{a}^k$.  Thus, for any subset $\cD$ of $\binom{[t]}{b}^k$, it holds that $\left|\cC \cap \phi^{-1}(\cD)\right| \le \binom{t - b}{a}^k |\cD|$. Consequently,
  \begin{align*}
      (1-\delta) \binom{t}{a}^k \le |\cC| = |\phi^{-1}(\image(\phi))|
      \le \binom{t - b}{a}^k |\image(\phi)| \le \binom{t - b}{a}^k 2^d \,,
  \end{align*}
  which, on rearrangement, gives \cref{eq:disjrec-lb}.

  To obtain \cref{eq:avoid-lb-clean}, we note that
  \begin{align}
      \binom{t}{a} \Big/ \binom{t - b}{a}
          &= \frac{t! a! (t - a - b)!}{(t-a)! a! (t-b)!}
          = \frac{t \cdot (t - 1) \cdots (t - a + 1)}{(t - b) \cdot (t - b - 1) \cdots (t - a - b + 1)} \nonumber\\
          &\ge \left(\frac{t}{t-b}\right)^a 
          = \left(\frac{1}{1-b/t}\right)^a 
          > e^{ab/t} \,, \label{eq:binratio-2}
  \end{align}
  which implies
  \[
      \log{(1-\delta)} + k \log{\left( \binom{t}{a} \Big/ \binom{t-b}{a} \right)} \ge \log{(1-\delta)} + k ab/ {(t \ln 2)} \,. \qedhere
  \]
\end{proof}

Since our data streaming lower bounds are based on the \kavoid problem, it is important to verify that we are not analyzing its communication complexity too loosely. To this end, we prove the following result, which says that the lower bound in \Cref{lem:disjrec-lb} is close to being tight. In fact, a nearly matching upper bound can be obtained deterministically.

\begin{lemma} \label{lem:avoid-det}
For any $t \in \NN$, $0 < a + b \le t$, the deterministic complexity of $\avoid(t,a,b)$ is bounded thus:
\begin{align}
    \D^\to(\avoid(t,a,b)) \le \log \left(\binom{t}{a} \Big/ \binom{t-b}{a}  \right) + \log\left(\ln\binom{t}{a}\right) + 2 \,. \label{eq:avoid-det-ubound}
\end{align}
\end{lemma}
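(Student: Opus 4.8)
The plan is to exhibit an explicit deterministic one-way protocol for $\avoid(t,a,b)$ by a greedy covering argument. Think of the message Alice sends as an index into a fixed family $\cB = \{T_1, T_2, \ldots, T_N\}$ of $b$-element subsets of $[t]$, agreed upon in advance. The protocol succeeds if and only if this family is a \emph{covering code} in the following sense: for every $a$-element set $S \subseteq [t]$ there exists some $T_j \in \cB$ with $T_j \cap S = \varnothing$. Given such a family of size $N$, Alice just sends $\lceil \log N \rceil$ bits naming any $T_j$ disjoint from her input $S$, so $\D^\to(\avoid(t,a,b)) \le \lceil \log N\rceil \le \log N + 1$. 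The entire task therefore reduces to bounding the minimum size of such a covering family.

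First I would set up the standard greedy / probabilistic covering bound. For a uniformly random $b$-subset $T \subseteq [t]$ and a fixed $a$-subset $S$, the probability that $T \cap S = \varnothing$ is exactly $\binom{t-a}{b}/\binom{t}{b} = \binom{t-b}{a}/\binom{t}{a}$; call this $p$. Picking $T$'s independently at random, the chance that a fixed $S$ is ``missed'' by all of $T_1,\ldots,T_N$ is $(1-p)^N$, so by a union bound over the $\binom{t}{a}$ choices of $S$, a valid covering family of size $N$ exists as soon as $\binom{t}{a}(1-p)^N < 1$, i.e. as soon as $N \ge \ln\binom{t}{a} / p$ (using $1-p \le e^{-p}$). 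Hence there is a covering family of size $N = \lceil \ln\binom{t}{a} \, / \, p\rceil \le \ln\binom{t}{a}\cdot \binom{t}{a}/\binom{t-b}{a} + 1$. Taking logarithms, $\log N \le \log\!\left(\binom{t}{a}/\binom{t-b}{a}\right) + \log\!\left(\ln\binom{t}{a}\right) + \log\!\left(1 + \text{(lower-order)}\right)$; absorbing the last additive $\le 1$ and the $+1$ from the integer ceiling on the message length gives the claimed bound with the ``$+2$'' slack. (A cleaner accounting: $N \le \frac{\binom{t}{a}}{\binom{t-b}{a}}\ln\binom{t}{a} + 1 \le 2\,\frac{\binom{t}{a}}{\binom{t-b}{a}}\ln\binom{t}{a}$, so $\lceil \log N\rceil \le \log\!\left(\binom{t}{a}/\binom{t-b}{a}\right) + \log\ln\binom{t}{a} + 2$.) Since the family is fixed in advance and Alice's choice of index is a deterministic function of $S$, the protocol is deterministic, as required.

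The only genuinely delicate point — and the step I expect to need the most care — is the constant bookkeeping: the statement asks for an \emph{additive} ``$+2$'' and nothing sloppier, so I must be careful about (i) the $1-p \le e^{-p}$ slack being in the right direction, (ii) the ceiling incurred when $\ln\binom{t}{a}/p$ is not an integer, and (iii) the ceiling incurred when $\log N$ is not an integer (message length is $\lceil \log N\rceil$). Each of these contributes at most an additive $1$ to $\log N$ or to the message length, and the corner case $a+b=t$ (where $p = \binom{0}{a}/\binom{t}{a}$ is $0$ unless $a=0$) needs a one-line sanity check — but if $a+b=t$ and $a\ge 1$ there is no valid $T$ at all, so the hypothesis $0<a+b\le t$ together with feasibility forces us into the regime where $p>0$ and the bound is meaningful; I would remark on this rather than belabor it. Apart from that, everything is routine: the probabilistic existence of the covering family and the reduction to message length are both one-liners once the setup is in place.
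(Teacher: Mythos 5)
Your proposal is correct and takes essentially the same approach as the paper: both reduce to exhibiting a small covering family $\cR \subseteq \binom{[t]}{b}$ such that every $a$-subset is avoided by some member, establish its existence via the probabilistic method with $z \approx \bigl(\binom{t}{a}/\binom{t-b}{a}\bigr)\ln\binom{t}{a}$, and then have Alice send the index of a disjoint member. The only cosmetic difference is that you draw the members independently (with replacement) and use $(1-p)^N \le e^{-pN}$, whereas the paper draws a uniformly random $z$-element subset of $\binom{[t]}{b}$ and bounds a hypergeometric tail — both yield the identical $\binom{t}{a}\exp(-zp)$ bound and the same constant bookkeeping.
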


\begin{proof}
   We claim there exists an ordered collection $\cR$ of $z := \big\lceil \big(\binom{t}{a} \big/ \binom{t-b}{a}\big) \ln\binom{t}{a} \big\rceil$ subsets of $[t]$ of size $b$, with the property that for each $S \in \binom{[t]}{a}$, there exists a set $T$ in $\cR$ which is disjoint from $S$. In this case, Alice's protocol is, given a set $S \in \binom{[t]}{a}$, to send the index $j$ of the first set $T$ in $\cR$ which is disjoint from $S$; Bob in turn returns the $j$th element of $\cR$. The number of bits needed to communicate such an index is at most $\ceil{\log z}$, implying \cref{eq:avoid-det-ubound}.
   
   We prove the existence of such an $\cR$ by the probabilistic method. Pick a subset $\cQ\subseteq \binom{[t]}{b}$ of size $z$ uniformly at random. For any $S \in \binom{[t]}{a}$, define $\cO_S$ to be the set of subsets in $\binom{[t]}{b}$ which are disjoint from $S$; observe that $|\cO_S| = \binom{t-a}{b}$. Then $\cQ$ has the desired property if for all $S \in \binom{[t]}{a}$, it overlaps with $\cO_S$. As
   \begin{align*}
     \Pr\left[\exists S \in \binom{[t]}{a} : \cQ \cap \cO_S = \varnothing\right] 
       &\le \sum_{S \in  \binom{[t]}{a}} \Pr\left[\cQ \cap \cO_S =\varnothing \right] && \lhd~\text{by union bound}\\
       & = \sum_{S \in  \binom{[t]}{a}} \Pr\left[\cQ \in \binom{\binom{[t]}{b} \setm \cO_S }{z} \right] \\
       & = \sum_{S \in  \binom{[t]}{a}}\left( \binom{\binom{t}{b} - \binom{t - a}{b}}{z} \Big/ \binom{\binom{t}{b}}{z} \right) \\
       & < \binom{t}{a} \exp\left( - z \binom{t - a}{b} \Big/ \binom{t}{b} \right) && \lhd~\text{by \cref{eq:binratio-2}} \\
       & = \binom{t}{a} \exp\left( - z \binom{t - b}{a} \Big/ \binom{t}{a} \right) \,,
   \end{align*}
   setting $z = \big\lceil \big(\binom{t}{a} \big/ \binom{t-b}{a}\big) \ln\binom{t}{a} \big\rceil$ ensures the random set $\cQ$ fails to have the desired property with probability strictly less than 1. Let $\cR$ be a realization of $\cQ$ that does have the property.
\end{proof}

% -------------------------------------------------------------

\subsection{Reducing Multiple Subset Avoidance to Graph Coloring} \label{sec:avoid-to-coloring}

Having introduced and analyzed the \avoid communication game, we are now ready to prove our main lower bound result, on the hardness of adversarially robust graph coloring.

\begin{theorem}[Main lower bound]
\label{thm:lower-bound-core}
  Let $L, n, K$ be integers with $2K \le n$, and $L + 1 \le K$,
  and $L \ge 12 \ln (4n)$.

  Assume there is an adversarially robust coloring algorithm $\cA$ for insert-only streams of $n$-vertex graphs which works as long as the input graph has maximum degree $\le L$, and maintains a coloring with $\le K$ colors so that all colorings are correct with probability $\ge 1/4$.
  Then $\cA$ requires at least $C$ bits of space, where
  \begin{align*}
    C \ge \frac{1}{40 \ln 2} \cdot \frac{n L^2}{K} - 3 \,.
  \end{align*}
\end{theorem}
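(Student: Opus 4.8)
The plan is to reduce the communication problem $\kavoid(t,a,b)$, for a carefully chosen setting of $t,a,b,k$, to the hypothesized robust algorithm $\cA$, and then apply the lower bound of \Cref{lem:disjrec-lb}. Fix a group size $n' := 2K$ (note $n' \le n$), partition $[n]$ into $k := \floor{n/n'} \ge 1$ vertex-disjoint groups $V_1,\dots,V_k$ of size $n'$, and identify the universe $[t]$ of each $\avoid$-instance with $\binom{V_i}{2}$, so $t := \binom{2K}{2} = \Theta(K^2)$. Take $\eps := 1$, an initial degree budget $\Delta_0 \approx L/4$, and a round count $r \approx L/2$ (up to rounding); set $a := \Delta_0 n'/2 = \Theta(LK)$ (the number of edges each instance will encode, giving average degree $\Delta_0$) and $b := r(n'-K)/2 = rK/2 = \Theta(LK)$ (the number of missing elements we recover per instance). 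The hypotheses do exactly the work needed: $L+1 \le K$ guarantees $K$ colors always suffice for a graph of max degree $\le L$, so $\cA$'s promise is satisfiable; $2K \le n$ guarantees the groups exist and that any $K$-coloring of a group leaves $\ge (n'-K)/2 = K/2 > 0$ monochromatic pairs; and $L \ge 12\ln(4n)$, together with $\eps = 1$ and $\Delta_0 \ge 3\ln(4n)$, makes \Cref{lem:random-graph-max-degree} yield that a uniformly random graph on $n'$ vertices with $a$ edges has max degree $\le \Delta_0(1+\eps) = 2\Delta_0 \le L/2$ except with probability $\le 2n'\exp(-\Delta_0/3) \le K/n$; summed over the $k \le n/(2K)$ groups this failure probability is at most $1/2$.

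The protocol $\Pi$ for $\kavoid(t,a,b)$ runs as follows. Using public coins, Alice and Bob sample both $\cA$'s random string and independent uniformly random bijections $\pi_i \colon [t] \to \binom{V_i}{2}$ for $i \in [k]$. Given her input $(S_1,\dots,S_k)$, Alice forms $G_0 := \bigcup_{i} \pi_i(S_i)$ (an $n$-vertex graph all of whose edges lie within groups), feeds its edges into $\cA$ in arbitrary order, and sends Bob the resulting memory state of $\cA$; this is her only message, of length at most the space $C$ of $\cA$. Bob continues the simulation himself: in each of $r$ rounds he queries $\cA$ for a $K$-coloring of the current graph, greedily pairs up equal-colored vertices within each $V_i$ to obtain a matching $M_{j,i} \subseteq \binom{V_i}{2}$ with $|M_{j,i}| \ge (n'-K)/2$ (a $K$-coloring of $n'$ vertices has $\le K$ color classes, hence $\le K$ unmatched vertices), and then feeds the edges of $\bigcup_i M_{j,i}$ into $\cA$ before the next round. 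Finally Bob outputs, for each $i$, an arbitrary size-$b$ subset $T_i$ of $N_i := \bigcup_{j=1}^r \pi_i^{-1}(M_{j,i})$.

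Correctness rests on three observations. First, each $M_{j,i}$ consists of non-edges of the current graph, which contains $\pi_i(S_i)$, so $\pi_i^{-1}(M_{j,i})$ is disjoint from $S_i$. Second, for fixed $i$ the matchings $M_{1,i},\dots,M_{r,i}$ are pairwise edge-disjoint (once $M_{j,i}$ is inserted it stays in the graph, whereas every later matching is drawn from non-edges), so $|N_i| = \sum_j |M_{j,i}| \ge r(n'-K)/2 = b$ and $N_i \subseteq [t]\setm S_i$, making $T_i$ a valid answer for instance $i$. Third, in each round the set $\bigcup_i M_{j,i}$ is a matching on $[n]$ (a disjoint union of matchings on disjoint vertex sets), so it raises every degree by at most $1$; hence, as long as every initial group graph has max degree $\le 2\Delta_0 \le L/2$, the current graph's max degree at every query is at most $L/2 + r \le L$, so $\cA$'s coloring promise applies at every query it receives. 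Therefore $\Pi$ is correct on all $k$ instances whenever (a) no group's initial graph violates its degree bound — an event of probability $\ge 1/2$ over the $\pi_i$, as noted above — and (b) $\cA$ never outputs an invalid coloring during the simulation. Crucially, for each fixed choice of the $\pi_i$ the sequence of tokens Bob feeds is a deterministic function of $\cA$'s past outputs, so the adversarial robustness of $\cA$ guarantees (b) with probability $\ge 1/4$ over $\cA$'s randomness — and this is exactly where robustness is essential: a non-robust bound would only let us reuse $\cA$'s output once, not across all $r$ adaptively chosen queries. Combining the two independent pieces (the first over the $\pi_i$, the second over $\cA$'s coins), $\Pi$ solves $\kavoid(t,a,b)$ with success probability $\ge \tfrac12\cdot\tfrac14 = \tfrac18$, i.e., with error at most $7/8$.

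Finally, by \Cref{lem:disjrec-lb}, $C \ge \cost(\Pi) \ge \R^\to_{7/8}(\kavoid(t,a,b)) \ge \log(1/8) + kab/(t\ln 2) = -3 + kab/(t\ln 2)$. With the parameters above, $k = \Theta(n/K)$, $a = \Theta(LK)$, $b = \Theta(LK)$ and $t = \Theta(K^2)$, so $kab/t = \Theta(nL^2/K)$; tracking constants carefully gives $kab/(t\ln 2) \ge \frac{1}{40\ln 2}\cdot\frac{nL^2}{K}$, which yields the claimed bound. The main obstacle is precisely this last piece of bookkeeping: one must simultaneously keep $2\Delta_0 + r \le L$, keep the degree-concentration failure (summed over $\Theta(n/K)$ groups) below $1/2$ using \emph{only} the hypothesis $L \ge 12\ln(4n)$ — which forces $\eps$ to be a constant and $\Delta_0 = \Omega(\log n)$, and is why $\eps = 1$, $n' = 2K$, $\Delta_0 \approx L/4$, $r \approx L/2$ are essentially forced near the threshold $L \approx \log n$ — and still extract at least a $\tfrac{1}{40}$ fraction of $nL^2/K$. (When $k=1$, i.e., $K = \Theta(n)$, one uses $\avoid$ on all $n$ vertices; the analysis is identical.)
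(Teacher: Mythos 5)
Your proposal is correct and follows essentially the same route as the paper's proof: the same reduction to $\avoid^{\floor{n/(2K)}}\bigl(\binom{2K}{2},\Theta(LK),\Theta(LK)\bigr)$ on disjoint size-$2K$ vertex groups, the same use of \Cref{lem:random-graph-max-degree} with $\eps=1$ to bound Alice's max degree by $L/2$, the same iterated query-and-match loop for Bob over $\approx L/2$ rounds, the same $\tfrac12\cdot\tfrac14=\tfrac18$ success bound combining the degree event and robustness, and the same final appeal to \Cref{lem:disjrec-lb}. The only differences are cosmetic (symbolic parameters $\Delta_0,r$ vs.\ explicit floors/ceilings, and that you defer the final constant-tracking, which the paper carries out to get the $\tfrac{1}{40\ln 2}$ factor).
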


\begin{proof}
Given an algorithm $\cA$ as specified, we can construct a public-coin protocol
to solve the communication problem $\avoid^{\floor{n/(2K)}}(\binom{2 K}{2}, \floor{LK/4}, \floor{L/2}\ceil{K/2} )$ using exactly as much communication as $\cA$ requires storage space. The protocol for the more basic problem $\avoid(\binom{2 K}{2}, \floor{LK/4}, \floor{L/2}\ceil{K/2} )$ is described in \Cref{alg:recovery}.

\begin{algorithm}[!ht]
  \caption{Protocol for $\avoid(\binom{2 K}{2},\floor{LK/4},\floor{L/2}\ceil{K/2})$
    \label{alg:recovery}}
  \begin{algorithmic}[1]
    \Statex \textbf{Require:} Algorithm $\cA$ that colors graphs up to maximum degree $L$, always using $\le K$ colors
    \Let{$R$}{publicly random bits to be used by $\cA$}
    \Let{$\pi$}{publicly random permutation of $\{1, \ldots, \binom{2 K}{2}\}$, drawn uniformly}
    \Let{$e_1,\ldots,e_{\binom{2K}{2}}$}{an enumeration of the edges of the complete graph on $2K$ vertices}
    \Statex
    \Function{Alice}{S}:
    \Let{$Z$}{$\cA$::INIT($R$), the initial state of $\cA$}
      %\Let{G}{empty graph}
      \For{$i$ from $1$ to $\binom{2 K}{2}$}
        \If{$\pi_i \in S$}
          \Let{$Z$}{$\cA$::INSERT(Z, $R$, $e_i$) }
        \EndIf
      \EndFor
      \State \Return{$Z$}
    \EndFunction
    \Statex
    \Function{Bob}{$Z$}:
      \Let{$J$}{empty list}
      \For{$i$ from $1$ to $\floor{L/2}$}
        \Let{$\clr$}{$\cA$::QUERY($Z$, $R$)}\label{step:iter-get-coloring}
        \Let{$M$}{maximal pairing of like-colored vertices, according to $\clr$}
        \For{each pair $\{u,v\} \in M$}
            \Let{$Z$}{$\cA$::INSERT($Z$,
            $R$, $\{u,v\}$) } \Comment{$M$ is turned into a matching and inserted}\label{step:update-z-with-m}
        \EndFor 
        \State $J \leftarrow J \cup M$
      \EndFor
      \If{$\text{length}(J) \le \floor{L/2}\ceil{K/2}$}
        \State \Return{fail}
      \Else
        \Let{$T$}{ $\{\pi_i : e_i \in \text{first $\floor{L/2}\ceil{K/2}$ edges of $J$} \}$ } \label{step:graph-to-set}
        \State \Return{$T$}
      \EndIf
    \EndFunction
  \end{algorithmic}
\end{algorithm}

To use $\cA$ to solve $s := \floor{n/2K}$ instances of $\avoid$, we pick $s$ disjoint subsets $V_1,\ldots,V_s$ of the vertex set $[n]$, each of size $2K$. A streaming coloring algorithm on the vertex set $[2K]$ with degree limit $L$ and using at most $K$ colors can be implemented by relabeling the vertices in $[2K]$ to the vertices in some set $V_i$ and using $\cA$. This can be done $s$ times in parallel, as the sets $(V_i)_{i=1}^{s}$ are disjoint. Note that a coloring of the entire graph on vertex set $[n]$ using $\le K$ colors is also a $K$-coloring of the $s$ subgraphs supported on $V_1,\ldots,V_s$. To minimize the number of color queries made, \Cref{alg:recovery} can be implemented by alternating between adding elements from the matching $M$ in each instance (for \Cref{step:update-z-with-m}), and making single color queries to the $n$-vertex graph (for \Cref{step:iter-get-coloring}).

The guarantee that $\cA$ uses fewer than $K$ colors depends on the input graph
stream having maximum degree at most $L$. In Bob's part of the protocol,
adding a matching to the graph only increases the maximum degree of the
graph represented by $Z$ by at most one; since he does this $\floor{L/2}$ times, in order for the maximum degree of the graph represented by $Z$ to remain at most $L$, we would like the random graph Alice inserts into the algorithm to have maximum degree $\le L/2 \le L - \floor{L/2}$. By \Cref{lem:random-graph-max-degree}, the probability that, given some $i$, this random graph on $V_i$ has maximum degree $\Delta_i \ge L/2$ is
\begin{align*}
    \Pr\left[ \Delta_i \ge \frac{L}{4} (1 + 1) \right] \le 4 K e^{ -L / 12 } \,.
\end{align*}
Taking a union bound over all $s$ graphs, we find that
\begin{align*}
    \Pr\left[ \max_{i \in [s]} \Delta_i \ge L /2 \right] \le 4 K \left\lfloor \frac{n}{2K} \right\rfloor  e^{ -L / 12 } \le 2 n  e^{ -L / 12 } \,.
\end{align*}
We can ensure that this happens with probability at most $1/2$ by requiring $L \ge 12 \ln(4n)$.

If all the random graphs produced by Alice have maximum degree $\le L/2$, and the $\floor{L/2}$ colorings requested by the protocol are all correct, then we will show that Bob's part of the protocol recovers at least $\floor{L/2}\ceil{K/2}$ edges for each instance. Since the algorithm $\cA$'s random bits $R$ and permutation random bits $\pi$ are independent, the probability that the the maximum degree is low and the algorithm gives correct colorings on graphs of maximum degree at most $L$ is $\ge (1/2) \cdot (1/4) = 1/8$.

The list of edges that Bob inserts (\Cref{step:update-z-with-m}) are fixed functions of the query output of $\cA$ on its state $Z$ and random bits $R$. None of the edges can already have been inserted by Alice or Bob, since each edge connects two vertices which have the same color.
Because these edges only depend on the query output of $\cA$, conditioned on this query output they are independent of $Z$ and $R$. This ensures that $\cA$'s correctness guarantee against an adversary applies here, and thus the colorings reported on \Cref{step:iter-get-coloring} are correct.

Assuming all queries succeed, and the initial graph that Alice added has maximum degree $\le L/2$, for each $i \in [\floor{L/2}]$, the coloring produced will have at most $K$ colors. Let $B$ be the set of vertices covered by the matching $M$, so that $[2K]\setm B$ are the unmatched vertices. Since no pair of unmatched vertices can have the same color, $\left|[2K]\setm B\right| \le K$. This implies $|B| \ge K$, and since $|M| = |B| / 2$ is an integer, we have $|M| \ge \ceil{K/2}$. Thus each \textbf{for} loop iteration will add at least $\ceil{K/2}$ new edges to $J$. The final value of the list $J$ will contain at least $\floor{L/2}\ceil{K/2}$ edges that were not added by Alice; \Cref{step:graph-to-set} converts the first $\floor{L/2}\ceil{K/2}$ of these to elements of $\{1,\ldots,\binom{2K}{2}\}$ not in the set $S$ given to Alice.

Finally, by applying \Cref{lem:disjrec-lb}, we find that the communication $C$ needed
to solve $s$ independent copies of $\avoid(\binom{2K}{2},\floor{LK/4},\floor{L/2}\ceil{K/2})$ with failure probability $\le 7/8$ satisfies
\begin{align*}
    C &\ge \log\left(1 - \frac{7}{8}\right) + \left\lfloor\frac{n}{2 K}\right\rfloor \frac{ \floor{LK/4} \cdot \floor{L/2}\ceil{K/2}} { \binom{2K}{2} \ln 2} \\
    &\ge \frac{n}{4K} \frac{L^2 K^2 / 20}{\frac{1}{2} (2 K)^2\ln 2} - 3
     \ge \frac{n L^2}{40 K \ln 2} - 3 \,,
\end{align*}
where we used $K > L \ge 12 \ln(4n) \ge 12\ln 4$ to conclude
$\floor{LK/4}\floor{L/2}\ceil{K/2} \ge (LK)^2 / 20$.
\end{proof}

Applying the above \Cref{thm:lower-bound-core} with ``$K = f(L)$,'' we immediately obtain the following corollary, which highlights certain parameter settings that are particularly instructive.

\begin{corollary}\label{thm:lb-monotonic}
  Let $f$ be a monotonically increasing function, and $L$ an integer for which $L = \Omega(\log n)$ and $f(L) \le n/2$. Let $\cA$ be a coloring algorithm which works for graphs of maximum degree up to $L$; which at any point in time uses $\le f(\Delta)$ colors, where $\Delta$ is the current graph's maximum degree; and which has total failure probability $\le 3/4$ against an adaptive adversary.
  Then the number of bits $S$ of space used by $\cA$ is lower-bounded as $S = \Omega(n L^2 / f(L))$. In particular:
  \begin{itemize}[itemsep=1pt]
      \item If $f(\Delta) = \Delta + 1$---or, more generally, $f(\Delta) = O(\Delta)$---then $S = \Omega(n L)$ space is needed.
      \item To ensure $S = \tO(n)$ space, $f(\Delta) = \tOmega(\Delta^2)$ is needed.
      \item If $f(L) = \Theta(n)$, then $S = \Omega( L^2 )$. \qed
  \end{itemize}
\end{corollary}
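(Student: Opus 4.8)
The plan is to derive the corollary by a single application of \Cref{thm:lower-bound-core} under the substitution $K := f(L)$. First I would check that the three hypotheses of the theorem hold. The bound $2K \le n$ is immediate from the assumption $f(L) \le n/2$. For the requirement $L+1 \le K$, I would note that it is forced by the mere existence of a correct algorithm $\cA$: since $\cA$ colors every $n$-vertex graph of maximum degree at most $L$, it must in particular handle the clique $K_{L+1}$ (which has maximum degree $L$ but chromatic number $L+1$, and which fits on $n$ vertices because we may assume $L \le n-1$, no simple $n$-vertex graph having larger degree), and on that graph it must use at least $L+1$ colors; hence $f(L) \ge L+1$. The last hypothesis, $L \ge 12\ln(4n)$, is where ``$L = \Omega(\log n)$'' must be read as ``$L \ge C\log n$ for a sufficiently large absolute constant $C$''; under that reading it holds for all large enough $n$.

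Next I would reconcile the color guarantees. \Cref{thm:lower-bound-core} assumes $\cA$ maintains a coloring with at most $K$ colors, while the corollary only promises at most $f(\Delta)$ colors, $\Delta$ being the evolving maximum degree. This is exactly where monotonicity of $f$ enters: throughout the reduction underlying \Cref{thm:lower-bound-core}, the graph's maximum degree never exceeds $L$ (Alice's random graph has degree $\le L/2$ and Bob's $\floor{L/2}$ matchings add at most one to the degree each), so $f(\Delta) \le f(L) = K$ at every point, and the theorem's color hypothesis is met. The probability conditions match directly as well: ``all colorings correct with probability $\ge 1/4$'' in the theorem is the complement of ``total failure probability $\le 3/4$'' in the corollary. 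With every hypothesis verified, \Cref{thm:lower-bound-core} gives $S \ge \frac{1}{40\ln 2}\cdot\frac{nL^2}{f(L)} - 3 = \Omega\!\left(nL^2/f(L)\right)$.

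Finally, the three displayed consequences are immediate specializations. If $f(\Delta) = O(\Delta)$ then $f(L) = O(L)$, so $nL^2/f(L) = \Omega(nL)$. If $S = \tO(n)$, then $nL^2/f(L) = \tO(n)$, which forces $f(L) = \tOmega(L^2)$; since this must hold over the whole relevant range of $L$, we get $f(\Delta) = \tOmega(\Delta^2)$. And if $f(L) = \Theta(n)$ then $nL^2/f(L) = \Theta(L^2)$, giving $S = \Omega(L^2)$.

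I do not expect a genuine obstacle: all of the real work lives in \Cref{thm:lower-bound-core}. The only non-mechanical part is making explicit why the substitution $K = f(L)$ is legitimate, namely the combination of (i) monotonicity of $f$ together with the fact that the reduction keeps the degree at or below $L$, and (ii) the palette lower bound $f(L) \ge L+1$ that any valid $\cA$ must satisfy. Stating those two points carefully is the heart of the proof; the rest is arithmetic.
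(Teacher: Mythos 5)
Your proposal is correct and takes essentially the same route as the paper, which simply invokes \Cref{thm:lower-bound-core} with the substitution $K = f(L)$ and records the resulting bound. The paper states the corollary as an immediate consequence without spelling out the hypothesis checks; your observations — that $f(L) \ge L+1$ is forced by the clique $K_{L+1}$, and that monotonicity of $f$ together with the degree cap $\Delta \le L$ in the reduction gives $f(\Delta) \le f(L) = K$ throughout — are exactly the details being left implicit.
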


%!TEX root = main.tex

\section{Upper Bounds: Adversarially Robust Coloring Algorithms} \label{sec:ubs}

We now turn to positive results. We show how to maintain a $\poly(\Delta)$-coloring of a graph in an adversarially robust fashion. We design two broad classes of algorithms. The first, described in \Cref{sec:ub-deltacube}, is based on palette sparsification as in~\cite{AssadiCK19, AlonA20}, with suitable enhancements to ensure robustness. The resulting algorithm maintains an $O(\Delta^3)$-coloring and uses $\tO(n)$ bits of working memory. As noted in \Cref{sec:techniques-ub}, the algorithm comes with the caveat that it requires a large pool of random bits: up to $\tO(n\Delta)$ of them. As also noted there, it makes sense to treat this randomness cost as separate from the space cost.

The second class of algorithms, described in \Cref{sec:ub-sketchswitch}, is built on top of the sketch switching technique of \cite{BenEliezerJWY20}, suitably modified to handle non-real-valued outputs. This time, the amount of randomness used is small enough that we can afford to store all random bits in working memory. These algorithms can be enhanced to handle strict graph turnstile streams as described in \Cref{sec:prelims}. For any such turnstile stream of length at most~$m$, we maintain an $O(\Delta^2)$-coloring using $\tO(\sqrt{nm})$ space. More generally, we maintain an $O(\Delta^k)$-coloring in $O(n^{1-1/k}m^{1/k})$ space for any $k \in \NN$. In particular, for insert-only streams, this implies an $O(\Delta^k)$-coloring in $O(n\Delta^{1/k})$ space.

%!TEX root = main.tex

\subsection{An Algorithm Based on Palette Sparsification}
\label{sec:ub-deltacube}

We proceed to describe our palette-sparsification-based algorithm. It
maintains a $3\Delta^3$-coloring of the input graph $G$, where $\Delta$ is the
evolving maximum degree of the input graph $G$. With high probability, it will
store only $O(n (\log n)^4) = \tO(n)$ bits of information about $G$; an easy
modification ensures that this bound is always maintained by having the
algorithm abort if it is about to overshoot the bound. 

The algorithm does need a large number of random bits---up to $O(n L (\log
n)^2)$ of them---where $L$ is the maximum degree of the graph at the end of
the stream or an upper bound on the same.  Due to the way the algorithm looks
ahead at future random bits, $L$ must be known in advance. 

The algorithm uses these available random bits to pick, for each vertex, $L$
lists of random color palettes, one at each of $L$ ``levels.'' The level-$i$
list at vertex $x$ is called $P^{i}_x$ and consists of $4\log n$ colors
picked uniformly at random with replacement from the set $[2i^2]$. The
algorithm tracks each vertex's degree. Whenever a vertex $x$ is recolored, its
new color is always of the form $(d,p)$, where $d = \deg(x)$ and $p \in
P^{d}_x$.  Thus, when the maximum degree in $G$ is $\Delta$, the only colors
that have been used are the initial default $(0,0)$ and colors from
$\bigcup_{i=1}^\Delta \{i\} \times [2 i^2]$. The total number of colors is
therefore at most $1 + \sum_{i = 1}^{\Delta} 2 i^2 \le 3 \Delta^3$.

The precise algorithm is given in \Cref{alg:ub-delta3}. 

\begin{algorithm}[!ht]
  \caption{Adversarially robust $3 \Delta^3$-coloring algorithm, assuming $0 < \Delta \le L$
    \label{alg:ub-delta3}}
 
  \begin{algorithmic}[1]
    \Statex \textbf{Input}: Stream of edges of a graph $G=(V,E)$, with maximum degree always $\le L$.
    \Statex
    \Statex \ul{\textbf{Random bits}:}
    \For{each vertex $x \in [n]$}
        \For{each $i \in [L]$}
            \Let{$P^{i}_x$}{list of $4\log n$ colors sampled u.a.r.~with replacement from $[2i^2]$}
        \EndFor
    \EndFor
    
    \Statex
    \Statex \ul{\textbf{Initialize}:}
    \For{each vertex $x \in [n]$}
        \Let{$\degr(x)$}{$0$} \Comment{tracks degree of $x$}
        \Let{$\clr(x)$}{$(0,0)$} \Comment{maintains color of $x$; in general $\in \bigcup_{i=1}^{L} \{i\} \times [2 i^2]$}
    \EndFor
    \Let{$A$}{empty list of edges}
    \Statex
    
    \Statex \ul{\textbf{Process}(edge $\{u,v\}$):}
    \Let{$\degr(u), \degr(v)$}{$\degr(u)+1, \degr(v)+1$}\label{step:inc-deg-u} \Comment{maintain vertex degrees}
    \Let{$k$}{$\max\{\degr(u), \degr(v)\}$}\label{step:min-conf-level}
    \For{$i$ from $k$ to $L$}\label{step:check-range} \Comment{store edges that might be needed in the future}
        \If{$P^{i}_u$ and $P^{i}_v$ overlap}\label{step:check-edge}
            \Let{$A$}{$A \cup \{\{u,v\}\}$}\label{step:add-edge}
        \EndIf
    \EndFor
    % TODO: explicit 'abort' if A grows too large?
    \Let{$\used$}{$\{\clr(w) : \{u,w\}\in A\}$}\label{step:collect-neighbors} \Comment{prepare to recolor vertex $u$: collect colors of neighbors}
    \For{$j$ from $1$ to $4 \log n$}
        \Let{$c$}{$(\degr(u), P^{\degr(u)}_u[j])$} \Comment{try the next color in the random list}
        \If{$c \notin \used$}\label{step:check-color-validity}
            \Let{$\clr(u)$}{$c$};~ \textbf{return} \label{step:stop-on-first-color}
        \EndIf
    \EndFor
    \State \textbf{abort}  \Comment{failed to find a color}
    \Statex
    
    \Statex \ul{\textbf{Query}(~):}
    \State \Return{the vector $\clr$}
  \end{algorithmic}
\end{algorithm}

\begin{lemma}[Bounding the failure probability] \label{lemma:recolor-fail-probability}
When an edge is added, recoloring one of its vertices succeeds with probability $\ge 1 - 1 / n^4$, regardless of the past history of the algorithm.
\end{lemma}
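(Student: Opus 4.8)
The plan is to reduce the event ``the recoloring of the chosen endpoint succeeds'' to the event that a batch of $4\log n$ independent uniform samples all land in a tiny forbidden set, bound that probability crudely, and then argue the bound survives conditioning on an arbitrary past history --- which is where the only real difficulty lies.

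When an edge $\{u,v\}$ is inserted, the algorithm increments $\degr(u)$ and $\degr(v)$, updates $A$, and recolors one endpoint; call it $u$ and put $d := \degr(u)$, the degree after the increment. The recoloring scans the candidates $(d, P^d_u[1]), \ldots, (d, P^d_u[4\log n])$ in order and aborts only if every one lies in $\used = \{\clr(w) : \{u,w\} \in A\}$. Every color in use has the form $(i,p)$ where $i$ is the degree its holder had when last recolored, so a candidate $(d, P^d_u[j])$ can be blocked only by an element of $\used$ whose first coordinate equals $d$. Writing $G \subseteq [2d^2]$ for the set of second coordinates of such elements, the recoloring fails \emph{iff} $\{P^d_u[1], \ldots, P^d_u[4\log n]\} \subseteq G$. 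So the first step is to bound $|G|$: each element of $G$ is supplied by a stored neighbor $w$ of $u$ (i.e.\ $\{u,w\} \in A$) that was last recolored at degree exactly $d$; such a $w$ is one of the $d$ neighbors of $u$, and it is not the newly inserted endpoint $v$ (which has not been recolored since its degree reached $d$), so $|G| \le d-1$. If $d=1$ then $G = \varnothing$ and we are done, so assume $d \ge 2$.

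A refinement sharpens this. Each contributing $w$ holds a color $(d, P^d_w[j_w])$ drawn from its own list, so if $P^d_u \cap P^d_w = \varnothing$ then $P^d_w[j_w] \notin P^d_u$ and this element of $G$ can never equal a candidate color of $u$; hence the recoloring fails iff $P^d_u \subseteq G^*$, where $G^* := \{\, P^d_w[j_w] : w \text{ a stored degree-}d\text{ neighbor of } u \text{ with } P^d_u \cap P^d_w \ne \varnothing \,\}$, still of size at most $d-1$. Were $P^d_u$ a uniform element of $[2d^2]^{4\log n}$ independent of $G^*$, we would get
\[
  \Pr[\text{recoloring fails}] = \Big( \frac{|G^*|}{2d^2} \Big)^{4\log n} \le \Big( \frac{d-1}{2d^2} \Big)^{4\log n} \le \Big( \frac14 \Big)^{4\log n} \le n^{-8},
\]
comfortably inside the claimed $n^{-4}$.

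The obstacle is that the lemma requires this conditioned on an \emph{arbitrary} past history $H$, and $H$ depends deterministically on all the palettes, $P^d_u$ included, since $P^d_u$ is read in the overlap tests ``$P^d_u \cap P^d_w \ne \varnothing$?'' run when edges incident to $u$ with both endpoints of degree $\le d$ are processed. To handle this I would: (i) note that $P^d_u$ is used nowhere else \emph{before} this recoloring --- degrees grow by one per incident edge, so $u$ has never been recolored at degree $d$ before --- hence the only information $H$ carries about $P^d_u$ is, for at most $d-1$ fixed sets $P^d_w$ of size $\le 4\log n$, the single bit $\indic[P^d_u \cap P^d_w \ne \varnothing]$; (ii) condition also on the realizations of all \emph{other} palettes, which fixes $G^*$ and reduces matters to $P^d_u$ alone, uniform subject to at most $d-1$ ``meet/avoid'' constraints against fixed sets; (iii) show that this conditioning inflates $\Pr[P^d_u \subseteq G^*]$ by no more than the factor of $n^4$ of headroom in the display above --- for instance by revealing the entries of $P^d_u$ one at a time, charging each ``meet'' constraint to a distinct coordinate, and arguing the remaining coordinates stay genuinely fresh. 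I expect step (iii) to be the delicate part: the ``meet'' conditionings are positively correlated with landing inside $G^*$, so the charging argument needs care, especially when $d$ is large and there may be more ``meet'' constraints than coordinates. A cleaner alternative --- the route used for our storage bound --- is to first argue that no adaptive adversary can make the recoloring fail with probability larger than the worst oblivious adversary can, for whom $P^d_u$ is honestly independent of $G^*$ and the displayed estimate applies directly.
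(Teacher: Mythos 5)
You have correctly zeroed in on something the paper's proof glosses over with a one-line assertion. The paper states that $P^d_u$ ``has never been used before by the algorithm'' and is hence ``independent of the rest of the algorithm state,'' but as you observe, $P^d_u$ is read in the overlap tests of Lines~\ref{step:check-range}--\ref{step:add-edge} each time an earlier edge incident to $u$ is processed with $k\le d$, so the set $A$, and therefore $\used$, is not a~priori independent of $P^d_u$. This is a genuine subtlety. The intended repair, which the paper does not spell out, is an induction showing that every coloring produced before $u$ is recolored at degree~$d$ is a deterministic function of the adversary's next-token function and the palettes $(P^i_x)_{(x,i)\ne(u,d)}$ alone. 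The pivotal observation in that induction is that whenever the storage bit $\{u,w\}\in A$ actually affects a recoloring---say $u$ being colored at degree $e<d$ against a neighbor $w$ with $\clr(w)=(e,p)$, $p\in P^e_u$---the same-level overlap $P^e_u\cap P^e_w\ne\varnothing$ is already forced, so the storage happens regardless of $P^d_u$; thus the observable colorings carry no information about $P^d_u$, and conditioning on the history leaves $P^d_u$ uniform. Your proposal, however, does not land on this argument: route (iii) you flag yourself as delicate, and the ``oblivious-adversary'' alternative you offer does not apply as stated, because even under a fixed stream the set $\used$ can depend on $P^d_u$ through the storage mechanism---the dependence you found is not a consequence of adaptivity, and \Cref{lem:delta3-adversary} is scoped only to events determined by the edges and the stored subset $A$, not by the colors in $\used$.

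Two smaller remarks. First, the induction sketched above silently requires the level-$e$ test to be run whenever it could matter, which in turn requires Line~\ref{step:min-conf-level} to take a minimum rather than a maximum (note the label reads ``min-conf-level''); with the $\max$ as printed, an edge $\{u,w\}$ inserted after $\deg(w)$ has grown past $e$ skips the level-$e$ test, and one can then exhibit a neighbor $w$ with $\clr(w)=(e,p)$ left out of $\used$, so the concern you raise becomes an actual correctness failure rather than merely a proof gap. Second, your sharpenings are unnecessary and one of them is shaky: the assertion that the newly inserted endpoint $v$ cannot contribute to $G$ fails when $v$'s post-increment degree exceeds $d$, since $v$ may well have been last recolored at degree exactly $d$; fortunately the paper's cruder bound $|\used|\le d$ already gives $|\used|/(2d^2)\le1/2$ and hence $1/n^4$, so nothing is lost by dropping the $d-1$ refinement and the passage to $G^*$.
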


\begin{proof}
  The color for the endpoint $u$ is chosen and assigned in Lines
  \ref{step:collect-neighbors} through \ref{step:stop-on-first-color}. Let $d$
  be the value of $\degr(u)$ at that point. First, we observe that because the
  list $P^{d}_u$ of colors to try was drawn independently of all other
  lists, and has never been used before by the algorithm, it is necessarily
  independent of the rest of the algorithm state.

  A given color $(d, P^{d}_u[j])$ is only invalid if there exists some
  other vertex $w$ which has precisely this color. If this were the case, then
  the set \used would contain that color, because \used contains all
  colors on vertices $w$ with $\degr(w) = d$ and whose list of potential
  colors $P^{d}_{w}$ overlaps with $P^{d}_{u}$. Thus, the algorithm will
  detect any invalid colors in \Cref{step:check-color-validity}.

  The probability that the algorithm fails to find a valid color is:
  \begin{align*}
    \Pr[P^{d}_{u} \subseteq \used] 
    = \prod_{j=1}^{4 \log n} \Pr[P^{d}_{u}[j] \in \used] 
    = \prod_{j=1}^{4 \log n} \frac{|\used|}{2 d^2} 
    \le \frac{1}{2^{4 \log n}} 
    = \frac{1}{n^4} \,,
  \end{align*}
  where the inequality uses the fact that $|\used| \le \degr(u) = d$.
\end{proof}

Taking a union bound over the at most $n L / 2$ endpoints modified, we find that the total probability of
a recoloring failure in the algorithm is, by \Cref{lemma:recolor-fail-probability}, at most $(1/n^4) \cdot n L/2 \le 1/n^2$.

The rest of this section is dedicated to analyzing the space cost of \Cref{alg:ub-delta3}. In general, an adaptive adversary could try to construct a bad sequence of updates that causes the algorithm to store too many edges. The next two lemmas argue that for \Cref{alg:ub-delta3}, the adversary is unable to use his adaptivity for this purpose: he can do no worse than the worst oblivious adversary. Subsequently, \Cref{lem:delta3-space} shows that \Cref{alg:ub-delta3} does well in terms of space cost against an oblivious adversary, which completes the analysis.

\begin{lemma}\label{lem:delta3-condindep}
Let $\tau = (e_1,\chi_1,e_2,\chi_2,\ldots,\chi_{i-1},e_i)$ be the transcript of the edges $(e_1,\ldots,e_i)$ that an adversary provides to an implementation of \Cref{alg:ub-delta3}, and of the colorings $(\chi_1,\ldots,\chi_{i-1})$ produced by querying after each of the first $(i-1)$ edges was added. Let $\sigma = (e_{i+1},\ldots,e_j)$ be an arbitrary sequence of edges not in $\bigcup_{h=1}^{i} e_h$, and let $\gamma$ be a subsequence of $\sigma$. Conditioned on $\tau$, the next coloring $\chi_i$ returned is independent of the event that when the next edges in the input stream are $\sigma$, the algorithm will store $\gamma$ in its list $A$.
\end{lemma}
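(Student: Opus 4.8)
The plan is to fix the transcript $\tau$ (so that the graph $\{e_1,\dots,e_i\}$, all vertex degrees, and the colorings $\chi_1,\dots,\chi_{i-1}$ are determined), fix the continuation $\sigma=(e_{i+1},\dots,e_j)$ and the subsequence $\gamma$, and exhibit a partition of the palette randomness $\{P^{j}_{x}\}$ into two independent blocks, one that (together with $\tau$) determines $\chi_i$ and one that determines the event $\mathcal{E}$ that feeding $\sigma$ next appends exactly the edges of $\gamma$ to $A$. Write $u$ for the endpoint of $e_i$ that \process\ recolors and $d:=\degr(u)$ for its degree immediately after $e_i$. Given $\tau$, the one thing about $\chi_i$ not yet pinned down is the new value of $\clr(u)$, which \Cref{alg:ub-delta3} sets to the first entry of the palette $P^{d}_{u}$ that avoids the set \used\ of colors of currently stored neighbors of $u$; those colors are fixed by $\tau$, so $\chi_i$ is, given $\tau$, a function of $P^{d}_{u}$ together with the set of edges at $u$ that lie in $A$.

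First I would record the level-monotonicity fact that the proof hinges on. By Lines~\ref{step:inc-deg-u}--\ref{step:add-edge}, when an edge $\{a,b\}$ is processed the algorithm only ever reads palette entries $P^{j}_{a}$ (resp.\ $P^{j}_{b}$) with $j\ge\degr(a)$ (resp.\ $j\ge\degr(b)$) for the post-increment degrees, and the recoloring loop reads only $P^{\degr(u)}_{u}$. Since $e_i$ is the $d$-th edge incident to $u$, while each edge of $\sigma$ is, when processed, strictly later than the $d$-th edge incident to each of its endpoints, the recoloring palette $P^{d}_{u}$ lives at a level strictly below every level relevant to the storage of a $\sigma$-edge at $u$; more generally, whether a fixed edge of $\sigma$ enters $A$ depends only on palettes $P^{j}_{x}$ with $j$ exceeding $x$'s degree in $\{e_1,\dots,e_i\}$. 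The natural route, then, is to condition --- on top of $\tau$ --- on the whole execution trace of the algorithm on $e_1,\dots,e_i$, i.e.\ on the values of every palette entry it inspects there. After this conditioning $\chi_i$ is deterministic (its only free input $P^{d}_{u}$ is part of the trace, inspected no later than the recoloring loop when $e_i$ is processed), so it would remain only to argue that the palettes governing whether the $\sigma$-edges enter $A$ are untouched by the trace, which would make $\mathcal{E}$ independent of everything we have conditioned on.

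That last point is the crux, and the one I expect to be delicate: the loop of Lines~\ref{step:check-range}--\ref{step:add-edge} scans the level index all the way up to $L$, so processing an \emph{old} edge $\{u,w\}$ already causes a read of $P^{L}_{u}$ --- the very palette a later edge $\{u,z\}$ of $\sigma$ would consult --- so ``untouched'' is not literally true. I would repair this by separating ``a palette entry is read'' from ``its value actually affects the transcript or $\chi_i$'': a read at level $j$ alters the execution only if it flips membership of some edge in $A$, and such a flip reaches a reported coloring (or $\chi_i$) only if the affected endpoint is subsequently recolored at a degree equal to that stale level $j$; since $u$ first reaches degree $d$ exactly at $e_i$, a careful case analysis should show that no high-level palette of $u$ can be pinned down both by $\tau$-together-with-$\chi_i$ and by the storage of a later $\sigma$-edge at $u$ (and symmetrically for the other vertices touched by $\sigma$). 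Formally, I would prove a sub-lemma identifying an ``effective support'' of $(\tau,\chi_i)$ within the palette product space that is disjoint from the support of $\mathcal{E}$, and then conclude by independence of coordinates; this is, at the level of the entire run, the same phenomenon that makes the single palette $P^{d}_{u}$ ``fresh'' in the proof of \Cref{lemma:recolor-fail-probability}.
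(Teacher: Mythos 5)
Your proposal finds the same key observation as the paper -- that the level-monotonicity of \Cref{alg:ub-delta3} separates the palette index $(u,\deg_G(u))$ that decides $\chi_i$ from the indices that decide whether future $\sigma$-edges are stored -- but then diverges in execution. The paper does not condition on the full palette-read trace at all. Instead it partitions the index set $[n]\times[L]$ outright into $\cQ_< = \{(x,\ell):\ell\le\deg_G(x)\}\setminus\{(u,\deg_G(u))\}$, $\cQ_= = \{(u,\deg_G(u))\}$, and $\cQ_> = \{(x,\ell):\ell\ge\deg_G(x)+1\}$, and argues three functional dependencies: the transcript $\tau$ is a function of the palettes in $\cQ_<$; given $\tau$, $\chi_i$ is a function of $\cQ_=$; and, because the degree increments in Line~\ref{step:inc-deg-u} happen \emph{before} Line~\ref{step:check-range}, the storage of any edge $\{a,b\}\in\sigma$ (processed after $e_i$) scans only levels $\ell\ge\deg_G(a)+1$ and $\ell\ge\deg_G(b)+1$, i.e.\ only palettes in $\cQ_>$. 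Conditional independence of $\chi_i$ and $\gamma$-storage given $\tau$ then falls out of the mutual independence of the three blocks, with no need to track which palette entries were physically read.

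This sidesteps the difficulty you (correctly) anticipate: once you condition on the execution trace, the storage loop's scan up to level $L$ on old edges does read palettes in $\cQ_>$, and your proposed fix -- separating ``read'' from ``affects'' via an ``effective support'' sub-lemma -- is left as a sketch and is a genuinely delicate thing to make precise in one's own notation, since storage decisions on past edges do feed into $\used$ and hence potentially into reported colorings. Indeed this is exactly the point where the paper's claim that ``$(\chi_1,\ldots,\chi_{i-1})$ depend only on $\cQ_<$'' is stated compactly and would, in a fully formal write-up, deserve the same care you are asking for; but the paper's route makes the \emph{structure} of the argument -- a disjoint three-block partition with $\tau\leftrightarrow\cQ_<$, $\chi_i\leftrightarrow\cQ_=$, $\gamma$-storage$\leftrightarrow\cQ_>$ -- clean and explicit, whereas your conditioning-on-the-trace framing forces you to re-derive that partition implicitly while also reasoning about reads that turn out to be irrelevant. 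In short: same central idea, but the paper goes directly to the partition, while your proposal over-conditions and then has to undo the damage, and the undoing is not completed.
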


\begin{proof}
Let $G = \bigcup_{j=1}^{i} e_j$ be the graph containing all edges up to $e_i$,
and let $e_i = \{u,v\}$, so that $u$ is the
vertex recolored in Lines \ref{step:collect-neighbors} through
\ref{step:stop-on-first-color}. Let $\deg_G(x)$ be the degree of vertex $x$ in $G$. 
We can partition the array $[n] \times [L]$ of indices for random color lists $(P_x^{i})_{(x,i) \in [n] \times [L]}$ used by \Cref{alg:ub-delta3} into three groups, defined as follows:
\begin{align*}
  \cQ_> &= \{ (x,i) \in [n] \times [L] : i \ge \deg_G(x) + 1 \} \\ 
  \cQ_= &= \{ (u,\deg_G(u)) \} \\
  \cQ_< &= \{ (x,i) \in [n] \times [L] : i \le \deg_G(x) \} \setm \cQ_= \,.
\end{align*}
The next coloring $\chi_i$ returned by the algorithm depends only on $u$,
$G$, $\chi_{i-1}$, and the random color list $P^{\deg_G(u)}_u$. On the other
hand, the past colorings $(\chi_1,\ldots,\chi_{i-1})$ returned by the algorithm
depend only on $(e_1,\ldots,e_{i-1})$ and the color lists indexed by $\cQ_<$.
Finally, whether an edge $\{a,b\}$ is stored in the set $A$ in the future depends only
on the edges added up to that time and some of the color lists from $\cQ_>$, 
because (per Lines \ref{step:min-conf-level} to \ref{step:add-edge}) only color
lists $P_a^{i}$ and $P_b^{i}$ with $i \ge \max(\degr(a),\degr(b))$ are considered.
(Note that at the time the new edge is processed,
$\degr(a)$ and $\degr(b)$ will both be larger than $\deg_G(a)$ and $\deg_G(b)$ because 
Line~\ref{step:inc-deg-u} will have increased the vertex
degrees.) Also observe that the edges $(e_1,\ldots,e_{i})$ depend only on the colorings
$(\chi_1,\ldots,\chi_{i-1})$ and the randomness of the function $f$; thus the
transcript $\tau$ so far depends on the color lists in $\cQ_<$, but is independent
of the color lists in $\cQ_= \cup \cQ_>$. It follows that conditioned on the 
transcript $\tau$, the value $\chi_i$ of the next coloring returned is independent
of whether or not a given subset $\gamma$ of some future list $\sigma$ of edges
inserted is stored in the set $A$.
\end{proof}

% intuitively, making decisions based on colors can not in any way help affect the distribution
% of stored edges

% We have a type of distributional equivalence, in fact, but showing that
% a fixed stream is superior gets closer to the point.
\begin{lemma}\label{lem:delta3-adversary}
Let $m$ be an integer, and let $\eta$ be an adversary for \Cref{alg:ub-delta3}
for which the first $m$ edges submitted are always valid inputs for
\Cref{alg:ub-delta3}. (In other words, no edge is repeated, and no vertex 
attains degree $> L$.) Let $E$ be an event which depends only on the list of edges provided by $\eta$ and the subset of those edges which \Cref{alg:ub-delta3} stores
in the set $A$. Then there is a specific fixed input stream of length $m$ 
on which $\Pr[E]$ is at least as large as when $\eta$ chooses the inputs.\footnote{
In fact, one can prove that there is a distribution over fixed input streams so that the probability of $E$ occurring is exactly the same as when $\eta$ is used to pick the input.}
\end{lemma}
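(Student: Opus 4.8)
The plan is to show that, for the purpose of causing the event $E$, the adaptive adversary $\eta$ is no stronger than a suitable \emph{randomized oblivious} adversary---i.e.\ a distribution $\cD$ over fixed length-$m$ streams---and then to extract a single fixed stream by averaging. This simultaneously yields the stronger claim in the footnote.

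Since $\eta$'s next-token rule is deterministic and the only feedback it ever sees are the colorings \Cref{alg:ub-delta3} reports, each of which is a deterministic function of the algorithm's random color lists $R$, the edge stream $\vec e(R)$ that $\eta$ produces (always valid, by hypothesis) and the stored set $A(R)$ are both functions of $R$. Writing $A^\sigma(R)$ and $\vec\chi^\sigma(R)$ for, respectively, the stored set and the sequence of reported colorings when \Cref{alg:ub-delta3} is run on a fixed stream $\sigma$ with randomness $R$, I would take $\cD$ to be the law of $\vec e(R)$ and aim for
\[
  \Pr_R\!\left[E(\vec e(R), A(R))\right] = \EE_{\sigma\sim\cD}\ \Pr_R\!\left[E(\sigma, A^\sigma(R))\right],
\]
which gives the lemma by picking $\sigma^\star$ maximizing the summand. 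Conditioning on $\vec e(R)$ and using that $A(R)=A^{\vec e(R)}(R)$, this identity is equivalent to the statement that, for each $\sigma$ in the support of $\cD$, conditioning on $\{\vec e(R)=\sigma\}$ does not change the law of $A^\sigma(R)$; equivalently, $\Pr_R[E(\sigma,A^\sigma(R))\mid \vec e(R)=\sigma]=\Pr_R[E(\sigma,A^\sigma(R))]$. Intuitively: once we know that $\eta$ happened to feed exactly $\sigma$, we have learned nothing about which edges of $\sigma$ ended up stored.

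To prove this I would unpack both sides. Unrolling $\eta$, the event $\{\vec e(R)=\sigma\}$ is a measurable function of the coloring sequence $\vec\chi^\sigma(R)$ that \Cref{alg:ub-delta3} reports on the \emph{fixed} stream $\sigma$ (it asks that \textsc{next} send the initial coloring and $\chi^\sigma_1,\dots,\chi^\sigma_{i-1}$ to the $i$-th edge of $\sigma$, for every $i$), whereas $E(\sigma,A^\sigma(R))$ is a function of the stored set $A^\sigma(R)$. So it suffices to prove a ``global'' form of \Cref{lem:delta3-condindep}: for a fixed stream, the sequence of colorings the algorithm reports carries no information about which edges it stores. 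One route to getting this out of \Cref{lem:delta3-condindep} is an induction along the stream, revealing the randomness one reported coloring at a time: given the transcript $(\sigma_1,\chi^\sigma_1,\dots,\chi^\sigma_{i-1},\sigma_i)$, whose edge-coordinates are now the fixed prefix of $\sigma$, \Cref{lem:delta3-condindep} tells us that the next reported coloring $\chi^\sigma_i$ is independent of which of the remaining edges $\sigma_{i+1},\dots,\sigma_m$ end up stored; chaining these over $i$ yields the global independence, and then averaging over $\sigma\sim\cD$ and choosing the best stream finishes the proof.

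The main obstacle is making that induction go through, because \Cref{lem:delta3-condindep} only hands us independence between a newly reported coloring and the storage of \emph{future} edges, while the color a vertex receives does genuinely depend, via the set \used, on which of its already-inserted incident edges were stored. The induction therefore has to be arranged so that the storage decisions made so far are absorbed into the conditioning before the next coloring is exposed, and one must verify this is consistent with $\eta$'s edges being deterministic functions of precisely the colorings being exposed---i.e.\ that the adversary's adaptivity cannot thread information about not-yet-made storage decisions back into the stream. This is where the partition of the array $(P^i_x)_{x\in[n],\,i\in[L]}$ of random color lists from the proof of \Cref{lem:delta3-condindep} does the real work, and getting this bookkeeping exactly right is the heart of the argument.
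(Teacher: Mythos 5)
Your proposal takes a genuinely different route from the paper's. The paper proves only the ``at least as large'' version, via an elimination-of-pivotal-sequences argument: it finds a \emph{maximal} coloring sequence $\rho$ at which the adversary's next-token function still branches, uses \Cref{lem:delta3-condindep} to argue that---conditioned on $X_{1..i}=\rho$ \emph{and on the entire future edge list} $F=F_\sigma$---the next coloring $X_{i+1}$ is independent of $E$, replaces the branch by the best fixed choice, and iterates until all branching is removed. You instead aim directly at the stronger footnote claim: a global decoupling statement that, for each fixed stream $\sigma$, the stored set $A^\sigma(R)$ is independent of the colorings the algorithm reports (and in particular of the event $\{\vec e(R)=\sigma\}$), after which $\Pr[E]$ is a convex combination over fixed streams and one takes the best. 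If the decoupling held, your route would be cleaner and would give the exact-equality version for free.

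The gap is in the decoupling itself, and you have correctly located it but not closed it. Chaining \Cref{lem:delta3-condindep} does not deliver what you need, because that lemma only controls the storage of \emph{strictly future} edges conditioned on the current transcript. The coloring $\chi_j$ reported after processing $\sigma_j$ and the storage decision $B_j$ for $\sigma_j$ itself can both depend on the very same random color list $P_{u_j}^{\deg_{G_j}(u_j)}$: the recoloring of $u_j$ draws from this list, while the storage check for $\sigma_j$ examines this same list for an intersection with the other endpoint's. So $\chi_j$ and $B_j$ are correlated even given the whole history, which is precisely the dependency that \Cref{lem:delta3-condindep} does not rule out. Your proposed repair---``absorb the storage decisions made so far into the conditioning before the next coloring is exposed''---does not touch this, because the interference comes from the storage of the \emph{current} edge, not past ones. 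Establishing the decoupling would therefore require a genuinely new argument (for example, a symmetry argument about which element of $P_{u_j}^{\deg_{G_j}(u_j)}$ gets revealed by the coloring), not just careful bookkeeping on top of \Cref{lem:delta3-condindep}. The paper's pivotal-sequence argument is set up so that it only ever compares, for a fixed future edge sequence $F_\sigma$, the different values $X_{i+1}$ can take; this is a much more constrained conditioning than the unconditional global independence your plan needs, and it is what lets the paper avoid having to prove the statement you are after.
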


\begin{proof}
Let $\next$ be the function used by $\eta$ to pick the next input based on the
list of colorings produced so far, as per \Cref{sec:prelims}. We say
that a partial sequence of colorings $\rho = (\chi_1,\ldots,\chi_i)$ is \emph{pivotal}
for $\next$ if there exist two suffixes of $\rho$ given by $\pi = (\chi_1,\ldots,\chi_i,\chi_{i+1},\chi_{i+2},\ldots,\chi_{j})$ and $\pi' = (\chi_1,\ldots,\chi_i,\chi_{i+1}',\chi_{i+2}',\ldots,\chi_{j}')$, which first differ at coordinate
$i+1$, and where $\next(\pi) \ne \next(\pi')$.

If no sequence of colorings is pivotal for $\next$, then the adversary
only ever submits one stream of $m$ edges, and we are done. Otherwise, let $\rho$
be a maximal pivotal coloring sequence for $\next$, so that there does not exist
a coloring sequence $\pi$ which has $\rho$ as a prefix and which is also pivotal
for $\next$. We will construct a modified adversary $\tilde{\eta}$ given by $\widetilde{\next}$
which behaves the same on all coloring sequences that are not extensions of
$\rho$, which has at least the same probability of the event $E$, and where
neither $\rho$ nor any of its extensions is pivotal for $\widetilde{\next}$. If $\widetilde{\next}$ has no pivotal sequence of colorings, we are done; if not, we can repeat this process of finding modified adversaries with fewer pivotal sequences until that is the case.

Let $X = (X_1,\ldots,X_m)$ be the random variable whose $i$th coordinate corresponds
to the $i$th coloring returned by the algorithm, when the adversary is given by
$\next$. Write $X_{1..i}= (X_1,\ldots,X_i)$. Let $\rho = (\chi_1,\ldots,\chi_i)$. Because $\rho$ is a maximal pivotal
coloring sequence for $\next$, the next coloring returned---$X_{i+1}$---will determine the remaining
$(m - i - 1)$ edges sent by the adversary. Let $F$ be the random variable whose value
is this list of edges. For each possible value $\sigma$ of the conditional random
variable $(X_{i+1}| X_{1..i} = \rho)$, let $F_\sigma$ be the list of edges
sent when $(X_{1..i},X_{i+1}) = (\rho,\sigma)$. By \Cref{lem:delta3-condindep},
conditioned on the event $X_{1..i} = \rho$, and on the edges $F_\sigma$ being sent next,
$X_{i+1}$ and the event $E$ are independent. Thus
\begin{align*}
    \Pr[E \mid X_{1..i} = \rho] &= 
        \EE_{\sigma \sim X_{i+1} \mid X_{1..i} = \rho} \Pr[E \mid X_{1..i} = \rho, X_{i+1} = \sigma, F = F_\sigma] \\
        &= \EE_{\sigma \sim X_{i+1} \mid X_{1..i} = \rho} \Pr[E \mid X_{1..i} = \rho, F = F_\sigma] \,.
\end{align*}
Consequently, there is a value $\tilde\sigma$ where $\Pr[E \mid X_{1..i} = \rho, F = F_{\tilde\sigma}] \ge \Pr[E \mid X_{1..i} = \rho]$. We define $\widetilde{\next}$ so as to agree with $\next$, except that
after the coloring sequence $\rho$ has been received, the adversary now picks edges according to the sequence $F_{\tilde\sigma}$ instead of making a choice based on $X_{i+1}$. This change
does not reduce the probability of $E$ (and may even increase it conditioned on $X_{1..i} = \rho$). Finally, note 
that neither $\rho$ nor any extension thereof is pivotal for the function $\widetilde{\next}$ used by adversary $\tilde{\eta}$.
\end{proof}

\begin{lemma}[Bounding the space usage] \label{lem:delta3-space}
In the oblivious adversary setting, if a fixed stream of a graph $G$ with maximum degree $\Delta$ is provided to \Cref{alg:ub-delta3}, the total number of edges
stored by \Cref{alg:ub-delta3} is $O(n (\log n)^3 )$, with high probability.
\end{lemma}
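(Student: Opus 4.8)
The plan is a standard first-moment-plus-concentration argument, but the concentration step requires some care, so I would organize it as: (i) a clean per-edge bound on $\Pr[\{u,v\}\in A]$, giving $\EE|A| = O(n(\log n)^3)$, and then (ii) a per-vertex Chernoff bound that conditions away the dependencies between edges sharing a vertex.

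For the first moment, fix an edge $e=\{u,v\}$ and let $k_e=\max(\deg_e(u),\deg_e(v))$ be the larger endpoint-degree at the moment $e$ is processed, so that the loop on \Cref{step:check-range} ranges over levels $i\in\{k_e,\ldots,L\}$. The edge is inserted into $A$ only if $P_u^i\cap P_v^i\ne\varnothing$ for some such $i$; since each of $P_u^i, P_v^i$ consists of $4\log n$ colors drawn uniformly from $[2i^2]$, a union bound over the entries of one list gives $\Pr[P_u^i\cap P_v^i\ne\varnothing]\le 8(\log n)^2/i^2$, hence $\Pr[e\in A]\le\sum_{i\ge k_e}8(\log n)^2/i^2\le 16(\log n)^2/k_e$. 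Now charge each edge to an endpoint realizing $k_e$: the $t$-th edge incident (in stream order) to a fixed vertex $x$ has $k_e\ge t$, so the total probability mass charged to $x$ is at most $\sum_{t=1}^{\deg(x)}16(\log n)^2/t=O((\log n)^3)$, using $\deg(x)\le\Delta\le n$. Summing over the $n$ vertices yields $\EE|A|=O(n(\log n)^3)$.

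For the high-probability bound, the obstacle is that the indicators $\indic[e\in A]$ are not independent: all edges incident to a vertex $x$ depend on the lists $(P_x^i)_i$. I would handle this by fixing a vertex $u$, conditioning on its entire family of lists $(P_u^i)_{i\in[L]}$, and letting $v_1,\ldots,v_d$ be its neighbors in the order their incident edges arrive. Because the graph is simple these vertices are distinct, so conditioned on $(P_u^i)_i$ the events $\{u,v_t\}\in A$ are determined by the mutually independent list-families $(P_{v_t}^i)_i$ and are therefore conditionally independent; moreover the same union bound as above shows each has conditional probability at most $16(\log n)^2/t$ (again via $k_{\{u,v_t\}}\ge t$). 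Hence $N_u:=\#\{t:\{u,v_t\}\in A\}$ has conditional mean at most $16(\log n)^2 H_\Delta=O((\log n)^3)$, uniformly in $u$ and in the conditioning, and a Chernoff bound for a sum of independent Bernoullis produces a constant $c$ with $\Pr[N_u\ge c(\log n)^3\mid(P_u^i)_i]\le n^{-10}$; this survives removing the conditioning. A union bound over the $n$ vertices then shows that with probability $\ge 1-n^{-9}$ every $N_u<c(\log n)^3$, and since $\sum_u N_u=2|A|$, this gives $|A|=O(n(\log n)^3)$. (If the recoloring step ever aborts the algorithm stops early, which only decreases $|A|$, so the bound is unaffected.)

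The step I expect to be the main obstacle is the concentration: a direct bounded-differences argument over the raw randomness fails badly, since altering one list $P_x^i$ can flip the membership of up to $\Delta$ edges. The fix that makes it work is to condition on all of one vertex's lists at once and use simplicity of the graph to recover genuine independence among that vertex's incident edges; after that, an off-the-shelf Chernoff bound closes it, and the per-$t$ probability estimate $16(\log n)^2/t$ from the first-moment computation is reused verbatim in the conditional setting.
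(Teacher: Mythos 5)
Your proposal is correct and follows essentially the same approach as the paper's proof: fix a vertex $u$, condition on its entire family of color lists $(P_u^i)_i$ to make the per-incident-edge indicators conditionally independent (each depending only on the independent lists of the distinct neighbors), bound the conditional probability of the $t$-th incident edge being stored by $O((\log n)^2/t)$ via a union bound over levels $j\ge t$, sum the harmonic series to get an $O((\log n)^3)$ conditional mean, apply a Chernoff bound, and union-bound over all $n$ vertices. Your added first-moment warm-up and the remark about early abortion are harmless extras not present in (but consistent with) the paper's argument.
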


\begin{proof}
We prove this by showing the maximum number of edges adjacent to any given 
vertex $v$ is $O( (\log n)^3 )$ with high probability. Let $d = \deg_G(v)$,
and $w_1,\ldots,w_d$ be the neighbors of $v$ in $G$, ordered by the order
in which the edges $\{v,w_i\}$ occur in the stream. For any $x \in [n]$,
write $P_x$ to be the random variable consisting of all of $x$'s color 
lists, $P_x := (P_{x}^{i})_{i\in[L]}$. Then for $i \in [d]$, define the indicator random
variable $Y_i$ to be $1$ iff the algorithm records edge $\{v,w_i\}$; since
$Y_i$ is determined by $P_v$ and $P_{w_i}$, the random variables $(Y_i)_{i \in [d]}$
 are conditionally independent given $P_v$.
 
Now, for each $i \in [d]$,
\begin{align*}
  \Pr[Y_i = 1 \mid P_v] 
    &= \Pr\left[ \bigvee_{j=i}^{L} \left\{ P_{w_i}^{j} \cap P_{v}^{j} \ne
    \varnothing \right\} ~\Big|~ P_v\right] \\
    & \le \sum_{j=i}^{L} \Pr\left[P_{w_i}^{j} \cap P_{v}^{j} \ne \varnothing \mid P_v\right] 
      = \sum_{j=i}^{L} \Pr\left[\exists h \in [4 \log n]:\, P_{w_i}^{j}[h] \in P_{v}^{j} \mid P_v\right] \\
    & \le \sum_{j=i}^{L} \sum_{h=1}^{4 \log n} \Pr\left[P_{w_i}^{j}[h] \in P_{v}^{j} \mid P_v\right]
      = \sum_{j=i}^{L} 4 \log n \cdot \frac{|P_{v}^{j}|}{2 j^2} \\
    & \le 16 (\log n)^2 \sum_{j=i}^{\infty} \frac{1}{j (j+1)}
      = \frac{16 (\log n)^2}{i} \,.
\end{align*}
%There is, of course, the trivial upper bound $\Pr[Y_i = 1 \mid P_v] \le 1$.
Since $\EE[Y_i \mid P_v] = \Pr[Y_i = 1 \mid P_v]$, this upper bound gives
\begin{align*}
  \EE\left[\sum_{i=1}^{d} Y_i ~\Big|~ P_v\right] \le \sum_{i=1}^{d} \frac{16 (\log n)^2}{i} \le 32 (\log n)^3 \, ,
\end{align*}
using the fact that $\sum_{i=1}^d 1/i\leq \max\{2 \log d, 1\} \leq 2 \log n$.
Applying a form of the Chernoff bound:
\begin{align*}
\Pr\left[\sum_{i=1}^{d} Y_i \ge 2 \cdot 32 (\log n)^3 ~\Big|~ P_v \right] \le 
  \exp\left( -\frac{1}{3} \cdot 32 (\log n)^3 \right) \le \frac{1}{n^3} \,,
\end{align*}
which proves that the number of edges adjacent to $v$ is $\le 64 (\log n)^3$
with high probability, for any value of $P_v$.

Applying a union bound over all $n$ vertices, the probability that the maximum 
degree of the stored graph $A$ exceeds $64 (\log n)^3$ is less than $1/n^2$.
\end{proof}

Combining \Cref{lemma:recolor-fail-probability}, \Cref{lem:delta3-adversary} and \Cref{lem:delta3-space}, we arrive at the main result of this section.

\begin{theorem} \label{thm:ub-delta3}
  \Cref{alg:ub-delta3} is an adversarially robust $O(\Delta^3)$-coloring algorithm
  for insertion streams which stores $O(n (\log n)^4)$ bits related to the graph, 
  requires access to $\tO(nL)$ random bits, and even against an adaptive adversary succeeds with probability 
  $\ge 1 - O(1/n^2)$. \qed
\end{theorem}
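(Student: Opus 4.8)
The plan is to assemble the theorem from the three ingredients already in hand — the color count, the correctness of the maintained coloring, and the storage bound — combining the latter two by a union bound over their failure events.

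The color bound needs nothing new: as observed in the discussion preceding \Cref{alg:ub-delta3}, every color ever assigned is either $(0,0)$ or lies in $\bigcup_{i=1}^{\Delta}\{i\}\times[2i^2]$, so at most $1+\sum_{i=1}^{\Delta}2i^2\le 3\Delta^3=O(\Delta^3)$ colors are in play at any time. For correctness, I would observe that so long as no call to \process\ reaches its final \textbf{abort}, the vector \clr\ is always a proper coloring: when $\{u,v\}$ arrives, $u$ is recolored to a value outside \used, and (exactly as in the proof of \Cref{lemma:recolor-fail-probability}) any vertex currently sharing $u$'s new color would have contributed that color to \used. By \Cref{lemma:recolor-fail-probability} each recoloring fails with probability at most $1/n^4$ \emph{regardless of the past history}, and since at most $nL/2$ endpoints are ever recolored, a union bound gives that with probability at least $1-1/n^2$ no abort ever occurs and every queried coloring is valid; crucially, the history-independence in \Cref{lemma:recolor-fail-probability} is what makes this hold against an adaptive adversary.

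The one genuinely adaptive issue is the storage bound, and this is where \Cref{lem:delta3-condindep} and \Cref{lem:delta3-adversary} do the work. Fix $m$ to be any bound on the stream length (e.g.\ $m=\lfloor nL/2\rfloor$, the maximum possible number of edges), and let $E$ be the event that \Cref{alg:ub-delta3} ever stores more than $64(\log n)^3 n$ edges in the list $A$. This event depends only on the sequence of edges the adversary submits and on which of them land in $A$, so \Cref{lem:delta3-adversary} applies and bounds $\Pr[E]$ against the adaptive adversary by $\Pr[E]$ for some fixed input stream; by \Cref{lem:delta3-space} the latter is at most $1/n^2$. Hence with probability at least $1-1/n^2$ the list $A$ always has $O(n(\log n)^3)$ edges, each describable in $O(\log n)$ bits, and together with the $O(n\log n)$-bit arrays \degr\ and \clr\ the total input-dependent storage is $O(n(\log n)^4)$ bits. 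To turn this into a worst-case space guarantee one adds a line to \process\ that aborts the instant $|A|$ would exceed this threshold; this only enlarges an already-controlled failure probability.

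Finally I would union-bound the abort/invalid-coloring event against the storage-overflow event to get total failure probability $O(1/n^2)$, and note that the only randomness used is the collection of $nL$ lists $P^i_x$, each of $4\log n$ colors drawn from a range of size $\le 2L^2$, i.e.\ $O(nL\log^2 n)=\tO(nL)$ bits. The main obstacle — already surmounted by the preceding lemmas — was precisely the adaptive adversary who inspects the reported colorings and steers the stream toward edges that are forced into $A$; \Cref{lem:delta3-condindep} (conditional independence of the next coloring from which future edges get stored) and \Cref{lem:delta3-adversary} (reduction to the worst oblivious stream) are exactly what let us fall back on the oblivious calculation of \Cref{lem:delta3-space}.
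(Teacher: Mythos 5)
Your proposal is correct and follows the same approach as the paper: the paper's proof of this theorem is just the single sentence ``Combining \Cref{lemma:recolor-fail-probability}, \Cref{lem:delta3-adversary} and \Cref{lem:delta3-space}, we arrive at the main result of this section,'' and you have correctly filled in exactly that combination — the deterministic $3\Delta^3$ color count from the preceding discussion, the per-recoloring history-independent failure bound from \Cref{lemma:recolor-fail-probability} union-bounded over $\le nL/2$ recolorings, and the storage bound from \Cref{lem:delta3-space} lifted to the adaptive setting via \Cref{lem:delta3-condindep} and \Cref{lem:delta3-adversary}, together with the accounting of the $\tO(nL)$ random bits.
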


A weakness of \Cref{alg:ub-delta3} is that it requires the algorithm be able
to access all $\tO(n L)$ random bits in advance. If we assume that the adversary is
limited in some fashion, then it may be possible to store $\le \tO(n)$ true
random bits, and use a pseudorandom number generator to produce the $\tO(n L)$
bits that the algorithm uses, on demand. For example, if the adversary only
can use $O(n / \log n)$ bits of space, using Nisan's PRG \cite{Nisan90} on
$\Omega(n)$ true random bits will fool the adversary. Alternatively, assuming
one-way functions exist, there is a classic construction~\cite{HastadILL99} to produce 
a pseudorandom number generator using $O(n)$ true random bits, which in polynomial
time generates $\poly(n)$ pseudorandom bits that any adversary limited to using polynomial time cannot distinguish with non-negligible probability from truly random bits.

% Fine-grained one-way functions (i.e, where adversary needs a fixed polynomial
% time more computation) may also be possible, but only conditional on various
% conjectured results; the polynomial loss in the conversion to pseudorandom function
% may make weaker ones useless

%!TEX root = main.tex

\subsection{Sketch-Switching Based Algorithms for Turnstile Streams}\label{sec:ub-sketchswitch}

We present a class of sketch switching based algorithms for poly$(\Delta)$-coloring. First, we give an outline of a simple algorithm for insert-only streams that colors the graph using $O(\Delta^2)$ colors and $\tO(n\sqrt{\Delta})$ space, where $\Delta$ is the max-degree of the graph at the time of query. Next, we show how to modify it to handle deletions. This is given by \Cref{alg:deltasq-turnstile}, whose correctness is proven in \Cref{lem:turnstile-ub-square}. Then we describe how it can be generalized to get an $O(\Delta^k)$-coloring in $\tO(n\Delta^{1/k})$ space for insert-only streams for any constant $k \in \NN$. Finally, we prove the fully general result giving an $O(\Delta^k)$-coloring in $\tO(n^{1-1/k}m^{1/k})$ space for turnstile streams, which is given by \Cref{thm:turnstile-ub-general}. Finally, we discuss how we can get rid of some reasonable assumptions that we make for our algorithms and how to improve the query time.

Throughout this section, we make the standard assumption that the stream length $m$ for turnstile streams is bounded by poly$(n)$. When we say that a statement holds with high probability (w.h.p.), we mean that it holds with probability at least $1-1/\text{poly}(n)$. In our algorithms, we often take the \emph{product} of colorings of multiple subgraphs of a graph $G$. We define this notion below and record its key property.

\begin{definition}[Product of Colorings]\label{def:productcol}
Let $G_1 = (V,E_1), \ldots, G_k = (V,E_k)$ be graphs on a common vertex set $V$. Given a coloring $\chi_i$ of $G_i$, for each $i\in [k]$, the \emph{product} of these colorings is defined to be a coloring where each vertex $v\in V$ is assigned the color $\langle \chi_1(v),\chi_2(v),\ldots,\chi_k(v)\rangle$.
\end{definition}

\begin{lemma}\label{lem:productcol}
Given a proper $c_i$-coloring $\chi_i$ of a graph $G_i=(V,E_i)$ for each $i\in [k]$, the product of the colorings $\chi_i$ is a proper $(\prod_{i=1}^kc_i)$-coloring of $\cup_{i=1}^kG_i:= (V,\cup_{i=1}^k E_i)$.
\end{lemma}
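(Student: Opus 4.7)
The plan is straightforward, as this is a direct verification that the product coloring is both well-defined and proper. First I would verify the color count: since each coordinate $\chi_i(v)$ of the product color lies in a set of size at most $c_i$, the image of the product coloring is contained in the Cartesian product of these color sets, which has size at most $\prod_{i=1}^k c_i$.

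Next I would verify properness. Take any edge $\{u,v\} \in \bigcup_{i=1}^k E_i$, and fix some index $j \in [k]$ such that $\{u,v\} \in E_j$. Because $\chi_j$ is a proper coloring of $G_j$, we have $\chi_j(u) \neq \chi_j(v)$. Hence the tuples
\[
\langle \chi_1(u),\chi_2(u),\ldots,\chi_k(u)\rangle \quad\text{and}\quad \langle \chi_1(v),\chi_2(v),\ldots,\chi_k(v)\rangle
\]
differ in at least their $j$th coordinate, and therefore are distinct. This shows that the product coloring assigns distinct colors to the endpoints of every edge in $\bigcup_{i=1}^k E_i$.

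There is no real obstacle here; the statement is essentially a definitional unpacking, combined with the observation that any edge of the union belongs to at least one of the $G_i$'s and inherits its distinct-color guarantee from the corresponding $\chi_i$. The only mild care needed is to note that we are counting the number of colors as the size of the image (or of the ambient color set $[c_1]\times\cdots\times[c_k]$), rather than the number of coordinate tuples actually attained; either interpretation is bounded by $\prod_{i=1}^k c_i$.
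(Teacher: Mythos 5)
Your proof is correct and follows essentially the same argument as the paper: every edge of the union lies in some $E_j$, so the product colors of its endpoints differ in the $j$th coordinate, and the number of colors is bounded by the size of the Cartesian product of the color sets. Nothing is missing.
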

\begin{proof}
An edge in $\cup_{i=1}^kG_i$ comes from $G_{i^*}$ for some $i^*\in [k]$, and hence the colors of its endpoints in the product coloring differ in the $i^*$th coordinate. For $i\in [k]$, the $i$th coordinate can take $c_i$ different values and hence the color bound holds.
\end{proof}

\mypar{Insert-Only Streams and $\bm{O(\Delta^2)}$-Coloring}
Split the $O(n\Delta)$-length stream into $\sqrt{\Delta}$ chunks of size $O(n\sqrt{\Delta})$ each. Let $\cA$ be a standard (i.e., oblivious-adversary) semi-streaming algorithm for $O(\Delta)$-coloring a graph (by \Cref{fact:ackalgo} and \Cref{fact:bcgsmallupdate}, such algorithms exist). At the start of the stream, initialize $\sqrt{\Delta}$ parallel copies of $\cA$, called $A_1, \ldots, A_{\sqrt{\Delta}}$; these will be our ``parallel sketches.'' At any point of time, only a suffix of this list of parallel sketches will be active.

We use the sketch switching idea of \cite{BenEliezerJWY20} as follows. With each edge insertion, we update each of the active parallel sketches. Whenever we arrive at the end of a chunk, we say we have reached a ``checkpoint'' and query the least-numbered active sketch (this is guaranteed to be ``fresh'' in the sense that it has not been queried before) to produce a coloring of the entire graph until that point. By design, the randomness of the queried sketch is independent of the edges it has processed. Therefore, it returns a correct $O(\Delta)$-coloring of the graph until that point, w.h.p. Henceforth, we mark the just-queried sketch as inactive and never update it, but continue to update all higher-numbered sketches. Thus, each copy of $\cA$ actually processes a stream independent of its randomness and hence, works correctly while using $\tO(n)$ space. By a union bound over all sketches, w.h.p., all of them generate correct colorings at the respective checkpoints and simultaneously use $\tO(n)$ space each, i.e., $\tO(n\sqrt{\Delta})$ space in total. 

Conditioned on the above good event, we can always return an $O(\Delta^2)$-coloring as follows. We store (buffer) the most recent partial chunk explicitly, using our available $\tO(n\sqrt{\Delta})$ space. Now, when a query arrives, we can express the current graph $G$ as $G_1\cup G'$, where $G_1$ is the subgraph of $G$ until the last checkpoint and $G'$ is the subgraph in our buffer. Observe that we computed an $O(\Delta(G_1))$-coloring of $G_1$ at the last checkpoint. Further, we can deterministically compute a $(\Delta(G')+1)$-coloring of $G'$ since we explicitly store it. We output the product of the colorings (\Cref{def:productcol}) of $G_1$ and $G'$, which must be a proper $O(\Delta(G_1)\cdot\Delta(G'))=O(\Delta(G)^2)$-coloring of the graph $G$ (\Cref{lem:productcol}). 

\mypar{Extension to Handle Deletions}
The algorithm above doesn't immediately work for turnstile streams. The chunk currently being processed by the algorithm may contain an update that deletes an edge which was inserted before the start of the chunk, and hence, is not in store. Thus, we can no longer express the current graph as a union of the graphs $G_1$ and $G'$ as above. Overcoming this difficulty complicates the algorithm enough that it is useful to lay it out more formally as pseudocode (see \Cref{alg:deltasq-turnstile}). This new algorithm maintains an $O(\Delta^2)$-coloring, works even on turnstile streams, and uses $\tO(\sqrt{mn})$ space. Note that while the blackbox algorithm $\cA$ used in \Cref{alg:deltasq-turnstile} might be any generic $O(\Delta)$-coloring semi-streaming algorithm with error $1/m$, it can be, for instance, chosen to be the one given by \Cref{fact:ackalgo} or, for insert-only streams, the one in \Cref{fact:bcgsmallupdate}. The former gives a tight $(\Delta+1)$-coloring but possibly large query time, while the latter answers queries fast using possibly a few more colors, up to $\Delta(1+\eps).$\footnote{In practice, however, the latter uses significantly fewer colors for most graphs since it's a $\kappa(1+\eps)$-coloring algorithm and $\kappa\leq \Delta$ always, and, in fact, $\kappa \ll \Delta$ for real world graphs.\cite{BeraCG20}}  

\begin{algorithm}
  \caption{Adversarially robust $O(\Delta^2)$-coloring in $\tO(\sqrt{nm})$ space for turnstile streams
    \label{alg:deltasq-turnstile}}
  \begin{algorithmic}[1]
    \Statex \textbf{Input}: Stream of edge insertions/deletions of an $n$-vertex graph $G=(V,E)$; parameter $m$
    \Statex
    \Statex \textbf{Require:} Semi-streaming algorithm $\cA$ that works on turnstile graph streams and provides an $O(\Delta)$-coloring with error $\le 1/m$ against an oblivious adversary
    \Statex
    
    \Statex \ul{\textbf{Initialize}:}
    \Let{$s$}{$C\cdot \sqrt{m/n}\log n$ for some sufficiently large constant $C$}
    \Let{$A_1,\ldots,A_s$}{independent parallel initializations of $\cA$}
    \Let{$c$}{$0$} \Comment{index into list $(A_1,\ldots,A_s)$}
    \Let{\clr}{$n$-vector of vertex colors, initialized to all-$1$s} 
    \Comment{valid $O(\Delta)$-coloring until last checkpoint}
    \Let{\degr}{$n$-vector of vertex degrees, initialized to all-$0$s}
    \Let{$G'$}{$(V,\varnothing)$} \Comment{buffer to store current chunk}
     \Let{\chunksize}{$0$} \Comment{current buffer size}
     \Let{\checkpointmaxdeg}{$0$} \Comment{max-degree at last checkpoint}

    \Statex
    \Statex \ul{\textbf{Process}(operation \textsc{op}, edge $\{u,v\}$):}
      \Comment{\textsc{op} says whether to insert or delete}
    \For{$i$ from $c + 1$ to $s$}
        \State $A_i$\textbf{\,.\,Process}(\textsc{op}, $\{u,v\}$) \Comment{if this aborts, report FAIL}
    \EndFor
    \If { \textsc{op} $=$ ``insert''}
        \State increment $\degr(u), \degr(v)$
        %\State Insert $\{u,v\}$ to $A_i$ for each $i> c$ \Comment{If some $A_i$ aborts, report FAIL}
        \State add $\{u,v\}$ to $G'$
    \ElsIf { \textsc{op} $=$ ``delete''}
        \State decrement $\degr(u), \degr(v)$
        %\State Delete $\{u,v\}$ from $A_i$ for each $i> c$ \Comment{If some $A_i$ aborts, report FAIL}
        \If{$\{u,v\}\in G'$}:
        \Comment{else, negative edge; not stored}
            \State delete $\{u,v\}$ from $G'$
        \EndIf
    \EndIf
    \vskip0.1cm
    \Let{\chunksize}{$\chunksize +1$}
    \Let{$\Delta$}{$\max_{v \in [n]} \degr(v)$}
    
    \If{$\chunksize = \sqrt{nm}$}:
        \State \textbf{NewCheckpoint}(~) \Comment{fixed checkpoint encountered}
    \Let{\chunksize}{$0$}
    \EndIf
    \If{$\Delta < \checkpointmaxdeg/2$}:
        \State \textbf{NewCheckpoint}(~)
        \Comment{ad hoc checkpoint created}
    \EndIf
    
    %\Statex
    %\Statex \ul{\textbf{Process}(delete edge $\{u,v\}$):}
    %\State Update $\deg_G(u)$, $\deg_G(v)$ in \degr
    %\State Delete $\{u,v\}$ from $A_i$ for each $i> c$ \Comment{If some $A_i$ aborts, report FAIL}
    %\If{$\{u,v\}\in G'$}:
    %\State Delete $\{u,v\}$ from $G'$
    %\EndIf
    % \Let{\chunksize}{$\chunksize + 1$}
    %\If{\chunksize $=\sqrt{nm}$}:
    %\State NewCheckpoint()
    %\Let{\chunksize}{$0$}
    %\EndIf
    \Statex
    \Statex \ul{\textbf{Query(~)}:}
    %\Let{$\Delta$}{ExtractMax(\degr)}
    %\If{$\Delta< \checkpointmaxdeg/2$}:
    %\State NewCheckpoint() 
    %\EndIf
    \Let{$\text{\clr}'$}{$(\Delta_{G'}+1)$-coloring of $G'$}
    \State \Return{$\left\langle (\text{\clr}(v), \text{\clr}'(v)) : v\in [n]\right\rangle$}
      \Comment{take the product of the two colorings}
    \Statex
    \Statex \ul{\textbf{NewCheckpoint}(~):}
    \Let{$c$}{$c+1$}\Comment{switch to next fresh sketch}
    \Let{\clr}{$A_c$\textbf{\,.\,Query}(~)} \Comment{if $A_c$ fails, report FAIL}
    \Let{$G'$}{$(V,\varnothing)$}
    \Let{\checkpointmaxdeg}{$\max_{v \in [n]} \degr(v)$}
     \end{algorithmic}
\end{algorithm}

Before proceeding to the analysis, let us set up some terminology. Recall from \Cref{sec:prelims} that we work with strict graph turnstile streams, so each deletion of an edge $e$ can be matched to a unique previous token that most recently inserted $e$. An edge deletion, where the corresponding insertion did not occur inside the same chunk, is called a \emph{negative edge}. Call a point in the stream a \emph{checkpoint} if we use a \emph{fresh} parallel copy of $\cA$, i.e., a copy $A_i$ that hasn't been queried before, to generate an $O(\Delta)$-coloring of the graph at that point. We define two types of checkpoints, namely \emph{fixed} and \emph{ad hoc}. We have a fixed checkpoint at the end of each chunk; this means that whenever the last update of a chunk arrives, we compute a coloring of the graph seen so far using a fresh copy of $A$. The ad hoc checkpoints are made on the fly inside a current chunk, precisely when a query appears and we see that the max-degree of the current  graph is less than half of what it was at the \emph{last} checkpoint (which might be fixed or ad hoc). We now analyze \Cref{alg:deltasq-turnstile} in the following lemma.

\begin{lemma}\label{lem:turnstile-ub-square}
For any strict graph turnstile stream of length at most $m$ for a graph $G$ given by an adaptive adversary, the following hold simultaneously, w.h.p.:
\begin{itemize}[itemsep=1pt]
    \item[(i)] \Cref{alg:deltasq-turnstile} outputs an $O(\Delta^2)$-coloring after each query, where $\Delta$ is the maximum degree of the graph at the time a query is made.
    \item[(ii)] \Cref{alg:deltasq-turnstile} uses $\tO(\sqrt{mn})$ bits of space. 
\end{itemize}
\end{lemma}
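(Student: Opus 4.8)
The plan is to prove (i) and (ii) together, after first nailing down two purely deterministic structural facts about \Cref{alg:deltasq-turnstile} and then isolating a single high-probability event $\cE$ that controls the black-box copies of $\cA$. The first structural fact bounds the number of checkpoints: there are at most $m/\sqrt{nm}=\sqrt{m/n}$ fixed checkpoints, and since \checkpointmaxdeg is reset to the current value $\max_v\degr(v)$ at every checkpoint while this quantity changes by at most one per update, the values of \checkpointmaxdeg recorded at the successive \emph{ad hoc} checkpoints lying between two consecutive fixed checkpoints form a sequence of nonnegative integers, each less than half the previous one and all bounded by $n$; hence there are $O(\log n)$ ad hoc checkpoints per chunk and $O(\sqrt{m/n}\,\log n)\le s$ checkpoints overall once the constant $C$ is large enough. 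In particular the index $c$ never exceeds $s$, so the algorithm never runs out of fresh sketches. The second fact, read off the same argument, is the invariant that after every \textbf{Process} call (hence at every query) the current max-degree $\Delta$ is at least half of \checkpointmaxdeg --- otherwise the ad hoc rule would have fired --- so the coloring \clr{} computed at the last checkpoint lives on a graph whose max-degree is at most $2\Delta$.

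Next I would pin down correctness of the queried sub-sketches, which I expect to be the main obstacle. For each $j\in[s]$, copy $A_j$ is fed stream tokens only while $c<j$, i.e.\ only on updates that precede the $j$th checkpoint; it is queried exactly once, at that checkpoint, and never updated afterwards. By definition of the adaptive adversary model every token the adversary produces is a deterministic function of the outputs returned so far, and each such output is the product of a coloring \clr{} coming from some copy $A_{c'}$ with $c'<j$ and a coloring $\text{\clr}'$ computed deterministically from the buffer $G'$; none of this depends on the internal randomness of $A_j$, which by then has been updated but never read. Hence the (valid strict-turnstile) prefix on which $A_j$ is run is generated independently of $A_j$'s coins, so $\cA$'s oblivious-adversary guarantee applies conditionally on that prefix and $A_j$'s single query returns a proper $O(\Delta')$-coloring of the graph at that checkpoint, $\Delta'$ being its max-degree, except with probability $1/\poly(n)$ --- which by \Cref{fact:ackalgo} (or, for insert-only streams, \Cref{fact:bcgsmallupdate}) we may drive below any desired inverse polynomial --- and does not abort. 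Let $\cE$ be the event that all $\le s$ of these queries succeed; since $s=\tO(\sqrt{m/n})$ is polynomial, a union bound gives $\Pr[\cE]\ge 1-1/\poly(n)$. This step is where the ``suffix of active sketches'' design earns its keep, and it must be stated carefully enough to survive the adversary's adaptivity.

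Now condition on $\cE$ and fix a query; let $G$ be the current graph with max-degree $\Delta$, let $G_1$ be the graph at the most recent checkpoint, and let $G'$ be the current buffer. Using the strict-turnstile promise I would show $G'\subseteq G$ (an edge of $G'$ has its last chunk-local occurrence an insertion, so it is present now) and $G\subseteq G_1\cup G'$ (an edge of $G$ whose last insertion predates the current chunk has been present continuously since the last checkpoint, hence lies in $G_1$; otherwise it lies in $G'$). On $\cE$, \clr{} is a proper coloring of $G_1$ and $\text{\clr}'$ is a proper coloring of $G'$, so by \Cref{lem:productcol} their product is a proper coloring of $G_1\cup G'\supseteq G$, giving validity. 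For the palette size, $\text{\clr}'$ uses $\Delta_{G'}+1\le\Delta+1$ colors since $G'\subseteq G$, and \clr{} uses $O(\Delta(G_1))$ colors, which is $O(\Delta)$ by the degree invariant; hence the product uses $O(\Delta^2)$ colors, establishing (i).

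Finally, (ii) is routine accounting: the $s=\tO(\sqrt{m/n})$ copies of $\cA$ occupy $\tO(n)$ bits each by \Cref{fact:ackalgo}, for $\tO(\sqrt{mn})$ total; the buffer $G'$ holds at most $\sqrt{nm}$ edges at $O(\log n)$ bits apiece; and \clr, \degr, together with the scalar counters, add a further $\tO(n)$ bits. Summing yields $\tO(\sqrt{mn})$ and completes the proof. The only nontrivial part is the independence argument of the second paragraph; everything else is deterministic bookkeeping, a containment-of-graphs observation, and a $\log$-factor union bound.
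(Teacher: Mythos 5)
Your proof is correct and takes essentially the same approach as the paper's: the same independence argument for each sketch's coins versus its input prefix, the same union bound over $s=\tO(\sqrt{m/n})$ sketches, the same product-of-colorings correctness via \Cref{lem:productcol}, and the same $O(\log n)$-per-chunk bound on ad hoc checkpoints. Your phrasing via the invariant ``$\Delta \ge \checkpointmaxdeg/2$ after every \textbf{Process}'' and the two containments $G'\subseteq G$, $G\subseteq G_1\cup G'$ is a slightly cleaner repackaging of the paper's case split and its identity $E(G)=(E(G_c)\setminus F)\cup E(G')$, but it is not a different argument.
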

\begin{proof}
Notice that \Cref{alg:deltasq-turnstile} splits the stream into chunks of size $\sqrt{mn}$. It processes one chunk at a time by explicitly storing all updates in it except for the negative edges. Nevertheless, when a negative edge arrives, the chunk size increases and importantly, we do update the appropriate copies of $\cA$ with it. Buffer $G'$ maintains the graph induced by the updates stored from the current chunk. The counter $c$ maintains the number of (overall) checkpoints reached. Whenever we reach a checkpoint, we re-initialize $G'$ to $G_0$, defined as the empty graph on the vertex set $V$. For $c\geq 1$, let $G_c$ denote the graph induced by all updates until checkpoint~$c$.

Note that answers to all queries (if any) that are made following some update before checkpoint $c$ depends only on sketches $A_i$ for some $i<c$ (if any). Thus, the random string used by the sketch $A_c$ is independent of the graph $G_c$. Hence, by the correctness guarantees of algorithm $\cA$, the copy $A_c$ produces a valid $O(\Delta)$-coloring \clr of $G_c$ with probability at least $1-1/m$. Furthermore, observe that an edge update before checkpoint $c$ is dependent on only the outputs of the sketches $A_j$ for $j< c$. However, we insert such an update only to copies $A_i$ for $i\geq c$. Therefore, the random string of any sketch $A_i$ is independent of the graph edges it processes. Thus, by the space guarantees of algorithm $\cA$, a sketch $A_i$ uses $\tO(n)$ space with probability $1-1/m$. By a union bound over all $s=O(\sqrt{m/n}\log n)$ copies, with probability at least $1-1/\text{poly}(n)$, for all $c\in [s]$, the sketch $A_c$ produces a valid $O(\Delta)$-coloring of the graph $G_c$ and uses $\tO(n)$ space. Now, conditioning on this event, we prove that (i) and (ii) always hold. Hence, in general, they hold with probability at least $1-1/\text{poly}(n)$.

 Consider a query made at some point in the stream. Since we keep track of all the vertex degrees and save the max-degree at the last checkpoint, we can compare the max-degree $\Delta$ of the current graph $G$ with $\Delta(G_c)$, where $c$ is the last checkpoint (can be fixed or ad hoc). In case $\Delta<\Delta(G_c)/2$, we declare the current query point as an ad hoc checkpoint $c+1$, i.e., we use the next fresh sketch $A_{c+1}$ to compute an $O(\Delta)$-coloring of the current graph $G_{c+1}$. Since we encounter a checkpoint, we reset \clr to this coloring and $G'$ to $G_0$, implying that $\clr'$ is just a $1$-coloring of the empty graph. Thus, the product of \clr and $\clr'$ that is returned uses only $O(\Delta)$ colors and is a proper coloring of the graph $G_{c+1}$.

In the other case that $\Delta>\Delta(G_c)/2$, we output the coloring obtained by taking a product of the $O(\Delta(G_c))$-coloring \clr at the last checkpoint $c$ and a $(\Delta(G')+1)$-coloring $\clr'$ of the graph $G'$. Note that we can obtain the latter deterministically since we store $G'$ explicitly. 
%Assigning each vertex $v\in V$ the color $(\text{\clr}(v), \text{\clr}'(v))$ properly colors the graph $G_c\cup G'$ because any edge in this graph is from either $G_c$ or $G'$ and hence, the colors of its endpoints differ in at least one coordinate. We now need to prove that the coloring we output is a proper $O(\Delta^2)$-coloring of the current graph $G$.  
Observe that the edge set of the graph $G$ is precisely $(E(G_c)\setm F)\cup E(G')$, where $F$ is the set of negative edges in the current chunk. Since the coloring we output is a proper coloring of $G_c\cup G'$ (\Cref{lem:productcol}), it must be a proper coloring of $G$ as well because edge deletions can't violate it. It remains to prove the color bound.  The number of colors we use is at most $O(\Delta(G_c)\cdot \Delta(G'))$. We have checked that $\Delta\geq \Delta(G_c)/2$. Again, observe that $\Delta(G')\leq \Delta$ since $G'$ is a subgraph of $G$. Therefore, the number of colors used it at most $O(2\Delta\cdot \Delta)=O(\Delta^2)$. 

To complete the proof that (i) holds, we need to ensure that before the stream ends, we don't introduce too many ad hoc checkpoints so as to run out of fresh sketches to invoke at the checkpoints. We declare a point as an ad hoc checkpoint only if the max-degree has fallen below half of what it was at the last checkpoint (fixed or ad hoc). Therefore, along the sequence of ad hoc checkpoints between two consecutive fixed checkpoints (i.e., inside a chunk), the max-degree decreases by a factor of at least $2$. Hence, there can be only $O(\log \Delta_{\text{max}})=O(\log n)$ ad hoc checkpoints inside a single chunk, where $\Delta_{\text{max}}$ is the maximum degree of a vertex over all intermediate graphs in the stream. We have $O(\sqrt{m/n})$ chunks and hence, $O(\sqrt{m/n})$ fixed checkpoints and at most $O(\sqrt{m/n}\log n)$ ad hoc checkpoints. Thus, the total number of checkpoints is at most $s=O(\sqrt{m/n}\log n)$ and it suffices to have that many sketches initialized at the start of the stream. 

To verify (ii), note that since each chunk has size $\sqrt{mn}$, we use at most $\tO(\sqrt{mn})$ bits of space to store $G'$. Also, each of the $s$ parallel sketches takes $\tO(n)$ space, implying that they collectively use $\tO(ns)=\tO(\sqrt{mn})$ space. Storing all the vertex degrees takes $\tO(n)$ space. Therefore, the total space usage is $\tO(\sqrt{mn})$ bits. 
\end{proof}

\mypar{Generalization to $\bm{O(\Delta^k)}$-Coloring in $\bm{\tO(n\Delta^{1/k})}$ Space for Insert-Only Streams}
We aim to generalize the above result by attaining a color-space tradeoff. Again, for insert-only streams, it is not hard to obtain such a generalization and we outline the algorithm for this setting first.
\Cref{alg:deltasq-turnstile} shows that we need to use roughly $\tO(nr)$ space if we split the stream into $r$ chunks since we use a fresh $\tO(n)$-space sketch at the end of each chunk. Thus, to reduce the space usage, we can split the stream into smaller number of chunks. However, that would make the size of each chunk larger than our target space bound. Hence, instead of storing it entirely and coloring it deterministically as before, we treat it as a smaller stream in itself and recursively color it using space smaller than its size. To be precise, suppose that for any $d$, we can color a stream of length $nd$ using $O(\Delta^\ell)$ colors and $\tO(nd^{1/\ell})$ space for some integer $\ell$ (this holds for $\ell=2$ by \Cref{lem:turnstile-ub-square}). Now, suppose we split an $nd$-length stream into $d^{1/(\ell+1)}$ chunks of size $nd^{\ell/(\ell+1)}$. We use a fresh sketch at each chunk end or checkpoint to compute an $O(\Delta)$-coloring of the graph seen so far. We can then recursively color the subgraph induced by each chunk using $O(\Delta^{\ell})$ colors and $\tO\left(n\left(d^{\ell/(\ell+1)}\right)^{1/\ell}\right)=\tO(nd^{1/(\ell+1)})$ space. As before, taking a product of this coloring with an $O(\Delta)$-coloring at the last checkpoint gives an $O(\Delta^{\ell+1})$-coloring (\Cref{lem:productcol}) of the current graph in $\tO(nd^{1/(\ell+1)})$ space. The additional space used by the parallel sketches for the  $d^{1/(\ell+1)}$ many chunks is also $\tO(nd^{1/(\ell+1)})$. Therefore, by induction, we can get an $O(\Delta^k)$-coloring in $\tO(nd^{1/k})=O(n\Delta^{1/k})$ space for any integer $k$. We capture this result in \Cref{cor:insertonly-ub-general} after proving the more general result for turnstile streams. 

\mypar{Fully General Algorithm for Turnstile Streams}
Handling edge deletions with the above algorithm is challenging because of the same reason as earlier: a chunk of the stream may not itself represent a subgraph as it can have negative edges.  Therefore, it is not clear that we can recurse on that chunk with a blackbox algorithm for a graph stream. A trick to handle deletions as in \Cref{alg:deltasq-turnstile} faces challenges due to the recursion depth. We shall have an $O(\Delta)$-coloring at a checkpoint at each level of recursion that we basically combine to obtain the final coloring. Previously, we checked whether the max-degree has decreased significantly since the last checkpoint and if so, declared it as an ad hoc checkpoint. This time, due to the presence of checkpoints at multiple recursion levels, if the $\Delta$-value is too high at even a single level, we need to have an ad hoc checkpoint, which might turn out to be too many. We show how to extend the earlier technique to overcome this challenge and obtain the general result for turnstile streams, which achieves an $O(\Delta^k)$-coloring in $\tO(n^{1-1/k}m^{1/k})$ space for an $m$-length stream. 
\begin{figure}[!htb]
    \centering
    \includegraphics[width=\textwidth]{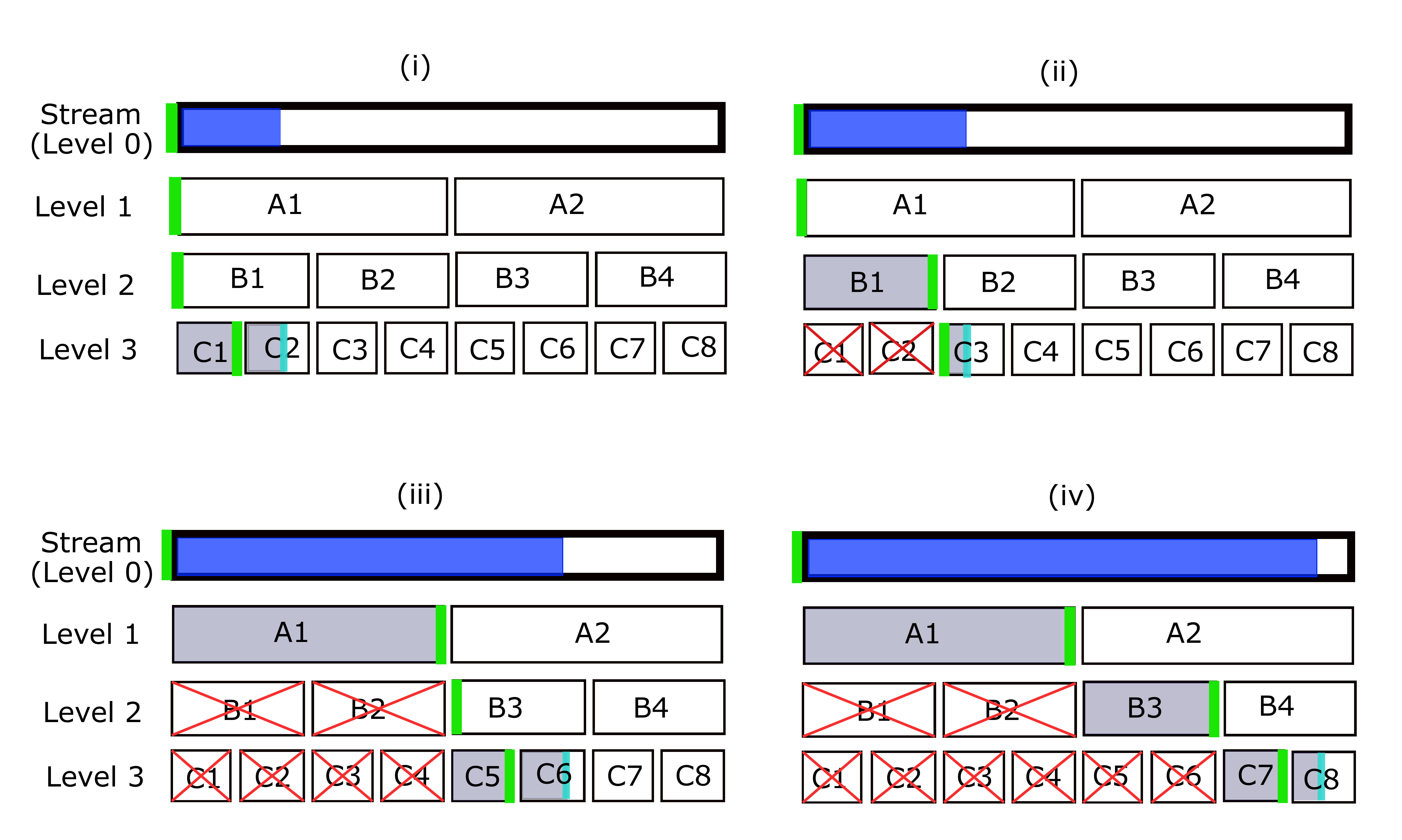}
    \caption{Certain states of the data structure of our $O(\Delta^k)$-coloring algorithm for $k=4$. We pretend that we always split into $d^{1/k}=2$ chunks. The stream is a level-$0$ chunk; $A_1, A_2$ are level-$1$ chunks; $B_1,\ldots,B_4$ are level-$2$; and $C_1,\ldots,C_8$ are level-$3$. For each state, the top blue bar shows the progress of the stream. Each level has a green vertical bar that represents the last checkpoint in that level. The chunks filled in gray represent the subgraphs defined as $G_i$. A partially filled chunk (endpoint colored cyan) is the current chunk from which the subgraph $G'$ is stored. A chunk is crossed out in red if it has been subsumed by a higher level chunk.    }
    \label{fig:multilevel}
\end{figure}
\begin{theorem} \label{thm:turnstile-ub-general}
For any strict graph turnstile stream of length at most $m$, and for any constant $k\in \NN$, there exists an adversarially robust algorithm $\cA$ such that the following hold simultaneously w.h.p.:
\begin{itemize}[itemsep=1pt]
    \item[(i)] After each query, $\cA$ outputs an $O(\Delta^k)$-coloring, where $\Delta$ is the max-degree of the current graph.
    \item[(ii)] $\cA$ uses $\tO(n^{1-1/k}m^{1/k})$ bits of space. 
\end{itemize} 
\end{theorem}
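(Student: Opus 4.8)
The plan is to generalize \Cref{alg:deltasq-turnstile} recursively, following the outline already sketched for insert-only streams but with enough extra machinery to survive edge deletions at every recursion level. Set $d = m/n$, so that the target space bound $\tO(n^{1-1/k}m^{1/k}) = \tO(n d^{1/k})$. The algorithm is defined by induction on $k$: a level-$k$ instance splits its incoming (sub)stream of length $\le m$ into $d^{1/k}$ chunks of length $\le n d^{1-1/k}$ each; maintains $s = \tO(d^{1/k})$ parallel fresh copies of a blackbox oblivious $O(\Delta)$-coloring algorithm $\cA_0$ (from \Cref{fact:ackalgo}), switching to the next unqueried copy at every checkpoint to keep its randomness independent of the substream it has processed; and, crucially, recursively runs a level-$(k-1)$ instance \emph{on the current chunk} to maintain an $O(\Delta^{k-1})$-coloring of the chunk's subgraph. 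A query is answered by taking the \emph{product} (\Cref{def:productcol}) of the $O(\Delta)$-coloring stored at the last level-$k$ checkpoint with the $O(\Delta^{k-1})$-coloring returned by the recursive call; by \Cref{lem:productcol} this is a proper $O(\Delta^k)$-coloring of the union, which (since edge deletions never invalidate a coloring) is a proper coloring of the actual current graph. The base case $k=1$ is just a single fresh copy of $\cA_0$ re-queried at each checkpoint, or one can take $k=2$ as the base case via \Cref{lem:turnstile-ub-square}.

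The correctness-of-colors argument proceeds exactly as in \Cref{lem:turnstile-ub-square}: at a level-$k$ checkpoint $c$, all prior queries depended only on sketches $A_1,\dots,A_{c-1}$, so $A_c$'s randomness is independent of the graph $G_c$ it has seen and it returns a valid $O(\Delta(G_c))$-coloring w.h.p.; a union bound over all $\tO(d^{1/k})$ checkpoints at this level, and then over all $k = O(1)$ levels, keeps the total failure probability at $1/\poly(n)$. To control the number of colors we again need the degree at the last checkpoint to be within a constant factor of the \emph{current} degree, and this is where deletions bite: we introduce \emph{ad hoc checkpoints} at a level whenever the current max-degree drops below half of its value at that level's last checkpoint. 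The count of ad hoc checkpoints at a single level inside one parent chunk is $O(\log n)$ (each halves the degree), so the total number of level-$k$ checkpoints is still $\tO(d^{1/k})$, matching the number of parallel sketches allocated. One subtlety, flagged in the text before \Cref{fig:multilevel}, is that an ad hoc checkpoint triggered at level $j$ must also be propagated as a (fresh) checkpoint at all deeper levels $j' > j$ and must reset the corresponding buffers and recursive instances; I would handle this by the rule ``a level-$j$ checkpoint restarts the entire level-$(j{-}1)$ subtree,'' and then bound the blow-up: each of the $\tO(d^{1/k})$ level-$1$ checkpoints spawns $\le \tO(d^{1/k})$ level-$2$ checkpoints, and so on, giving $\tO(d^{k/k}) = \tO(d)$ total checkpoints over all levels but only $\tO(d^{1/k})$ \emph{simultaneously active} sketches at any level, which is all that matters for space.

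For the space bound (ii), let $S(k)$ denote the space of a level-$k$ instance running on a substream of length $\le \ell$. It maintains $\tO(d^{1/k})$ copies of $\cA_0$, each $\tO(n)$ bits, contributing $\tO(n d^{1/k})$; it stores an $n$-vector of degrees and the current $O(\Delta)$-coloring, another $\tO(n)$; and it runs one active recursive level-$(k-1)$ instance on a chunk of length $\le n d^{1-1/k}$, contributing $S(k-1)$ on input length $n d^{(k-1)/k}$. Writing $S(k) = \tO(n d^{1/k}) + S(k-1)$ with $S(1) = \tO(n)$ solves to $S(k) = \tO(k \cdot n d^{1/k}) = \tO(n d^{1/k})$ since $k = O(1)$, which is $\tO(n^{1-1/k} m^{1/k})$ as claimed. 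I expect the main obstacle to be the bookkeeping in the previous paragraph: making precise the invariant that ties together the ``last checkpoint'' pointer at each of the $k$ levels, showing that the product of the $k$ stored colorings always equals a proper coloring of the current graph (needing that the chunk boundaries at the various levels nest correctly, as depicted in \Cref{fig:multilevel}), and verifying that neither the ad-hoc-checkpoint mechanism nor the cross-level propagation ever exhausts the pre-allocated pool of sketches at any level. Once that combinatorial scaffolding is in place, the probabilistic and counting arguments are routine adaptations of the $k=2$ case in \Cref{lem:turnstile-ub-square}.
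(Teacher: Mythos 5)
Your proposal follows essentially the same approach as the paper: recursive chunking with sketch switching at every level, product colorings combined via \Cref{lem:productcol}, and ad hoc checkpoints to handle degree drops caused by deletions. Two points worth flagging. First, the paper parameterizes the blackbox coloring algorithm with $(1+\eps)\Delta$ colors for $\eps = 1/(2k)$ and uses factor-$(1+\eps)$ thresholds for ad hoc checkpoints precisely so that the product of $k$ colorings gives $(1+\eps)^{O(k)}\Delta^k = O(\Delta^k)$ even when $k$ grows (as needed in \Cref{cor:insertonly-ub-general}); your fixed-constant-factor $O(\Delta)$ blackbox and factor-$2$ thresholds are fine here only because the theorem restricts to constant $k$. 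Second, your claim that ``only $\tO(d^{1/k})$ simultaneously active sketches at any level'' suffice is correct but hinges on the re-initialization step you only gesture at (the level-$j$ sketch pool is reset at every coarser-level checkpoint); the paper instead opts for a more conservative explicit count, allocating $s = O(d^{1/k}(k\log n)^k)$ sketches per level at the start and introducing ``vacuous checkpoints'' to make the nesting invariant precise, which for constant $k$ is still $\tO(d^{1/k})$ and thus gives the same $\tO(n d^{1/k})$ bound.
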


\begin{proof}
 The following framework is an extension of \Cref{alg:deltasq-turnstile} that would be given by the recursion idea discussed above. \Cref{fig:multilevel} shows the setup of our data structure. The full stream is the sole ``level-$0$'' chunk. Given $k$, we first split the edge stream into $d^{1/k}$ chunks of size $O(nd^{(k-1)/k})$ each, where $d=m/n$: these chunks are in ``level $1$.'' For $1\leq i\leq k-2$, recursively split each level-$i$ chunk into $d^{1/k}$ subchunks of size $O(nd^{(k-i-1)/k})$ each, which we say are in level $i+1$. Level $k-1$ thus has chunks of size $O(nd^{1/k})$. We explicitly store all updates in a level-$(k-1)$ chunk except the negative edges, one chunk at a time.
 
 Let $A$ be a turnstile streaming algorithm in the oblivious adversary setting that uses at most $\Delta(1+\eps)$ colors, where $\eps= 1/2k$, and $\tO(n)$ space, and fails with probability at most $1/(mn)$. By \Cref{fact:ackalgo}, such an algorithm exists.\footnote{By \Cref{fact:bcgsmallupdate}, another algorithm with these properties exists for insert-only streams.} At the start of the stream, for each $i\in [k-1]$, we initialize $s=O(d^{1/k}(k\log n)^k)$ parallel copies or ``level-$i$ sketches'' $A_{i,1},\ldots,A_{i,s}$ of $A$.  For each $i$, the level-$i$ sketches process the level-$i$ chunks. Henceforth, over the course of the stream, as soon as we reach the end of a level-$i$ chunk, since it subsumes all its subchunks, we re-initialize the level-$j$ sketches for each $j>i$. As before, at the end of each chunk in each level~$i$, we have a ``checkpoint'', i.e., we query a fresh level-$i$ sketch $A_{i,r}$ for some $r\in[s]$ to compute a coloring at such a point. Observe that this is a coloring of the subgraph starting from the last level-$(i-1)$ checkpoint through this point. Following previous terminology, we call these level-$i$ chunk ends as \emph{fixed} ``level-$i$ checkpoints''. (For instance, in \Cref{fig:multilevel}, in (i), the checkpoint at the end of chunk $C1$ is a fixed level-$3$ checkpoint, while in (iii), the checkpoint at the end of $A1$ is a fixed level-$1$ checkpoint.) 
 
 This time, we can also have what we call \emph{vacuous} checkpoints. The start of the stream is a vacuous level-$i$ checkpoint for each $0\leq i\leq k-1$. Further, for each $i\in [k-2]$, after the end of each level-$i$ chunk, i.e., immediately after a fixed level-$i$ checkpoint, we create a vacuous level-$j$ checkpoint for each $j>i$ (e.g., in \Cref{fig:multilevel}, in (i), the checkpoint at the start of $B1$ is a vacuous level-$2$ checkpoint, while in (ii), the one at the start of $C3$ is a vacuous level-$3$ checkpoint). It is, after all, a level-$j$ ``checkpoint'', so we want a coloring stored for the substream starting from the last level-$(j-1)$ checkpoint through this point. However, note, that for each $j>i$ this substream is empty (hence the term ``vacuous''). Hence, we don't waste a sketch for a vacuous checkpoint and directly store a $1$-coloring for that empty substream.
 %Note that for each such $j$, the end of the last level-$i$ chunk is also the end of some level-$j$ chunk. Hence, it is also a fixed level-$j$ checkpoint. Thus, a vacuous level-$j$ checkpoint $c_v$ immediately follows some fixed level-$j$ checkpoint $c_f$. Since $c_v$ is a level-$j$ checkpoint, we should compute a coloring  a level-$j$ sketch that has just been re-initialized should process the substream starting at that point through the  
 
 We can also have \emph{ad hoc} level-$i$ checkpoints that we declare on the fly (when to be specified later). Just as we would do on reaching a fixed level-$i$ checkpoint, we do the following upon creating an ad hoc level-$i$ checkpoint: (i) query a fresh level-$i$ sketch to compute a coloring at this point (again, this is a coloring of the subgraph from the last level-$(i-1)$ checkpoint until this point), (ii) start splitting the remainder of the stream into subchunks of higher levels, (iii) 
 re-initialize the level-$j$-sketches for each $j>i$, and (iv) create vacuous level-$j$ checkpoints for each $j>i$. 
 
Any copy of algorithm $A$ that we use in any level is updated and queried as in \Cref{alg:deltasq-turnstile}: we update each copy as long as it is not used to answer a query of the adversary and whenever we query a sketch, we make sure that it has not been queried before. Therefore, as in \Cref{alg:deltasq-turnstile}, the random string of any copy is independent of the graph edges it processes. Hence, each sketch computes a coloring correctly and uses $\tO(n)$ space with probability at least $1-1/(mn)$. Taking a union bound over all $O(ds)=\tO(d^{1+1/k})$ sketches, we get that all of them simultaneously provide correct colorings and use $\tO(n)$ space each with probability at least $1-1/\text{poly}(n)$. Henceforth, as in the proof of \Cref{lem:turnstile-ub-square}, we condition on this event and show that (i) and (ii) always hold, thus proving that they hold w.h.p. in general.

 For $1\leq i \leq k-1$, define $G_i$ as the graph starting from the last level-$(i-1)$ checkpoint through the last level-$i$ checkpoint (in \Cref{fig:multilevel}, the last checkpoint in each level is denoted by a green bar, and the $G_i$'s are the graphs between two such consecutive bars; they are either filled with gray or empty; for instance, in (ii), $G_1=\emptyset$, $G_2=B1$, and $G_3=\emptyset$, while in (iv), $G_1=A1$, $G_2=B3$, and $G_3=C7$). Note that a graph $G_i$ might be empty: this happens when the last level-$i$ checkpoint is vacuous. Observe that we can express the current graph $G$ as $((G_1\cup G_2\cup\ldots\cup G_{k-1})\setm F) \cup G'$, where, $G'$ is the subgraph stored from the the current chunk in level $(k-1)$ (recall that it is induced by all updates in this chunk excluding the negative edges), and $F$ is the set of negative edges in the chunk. It is easy to see that we can keep track of the degrees so that we know $\Delta(G_i)$ for each $i$. We check whether there exists an $i\in [k-1]$ such that the max-degree $\Delta$ of the current graph $G$ is less than $\Delta(G_i)/(1+\eps)$. If not, we take the coloring from the last checkpoint of each level in $[k-1]$ and return the product of all these colorings with a $(\Delta(G')+1)$-coloring of $G'$ (\Cref{def:productcol}). We can compute the latter deterministically since we have $G'$ in store. Notice that the colorings at the checkpoints are valid colorings of $G_i$ for $i\in [k-1]$ using $1$ color if $G_i$ is empty and at most $(1+\eps)\Delta(G_i)\leq (1+\eps)^2\Delta$ colors otherwise. Also, $\Delta(G')\leq\Delta$ because $G'$ is a subgraph of $G$. Therefore, by \Cref{lem:productcol}, the total number of colors used to color $G$ is
\begin{align*}
    \prod_{i=1}^{k-1} (\max\{(1+\eps)^2\Delta,1\})\cdot (\Delta+1)\leq O\left((1+\eps)^{2k-2}\Delta^{k}\right)=O(\Delta^k) \,,
\end{align*}
since $2k-2< 2k= 1/\eps$. Finally, note that the product obtained will be a proper coloring of $G$ since the negative edges in $F$ cannot violate it. 

In the other case that there exists an $i$ such that $\Delta<\Delta(G_i)/(1+\eps)$, let $i^*$ be the first such $i$. We make this query point an ad hoc level-$i^*$ checkpoint. Also, the graph $G_{i^*}$ changes according to the definition above, and now the current graph $G$ is given by $G_1\cup\ldots\cup G_{i^*}$. Then, we return the product of colorings at the last checkpoints of levels $1,\ldots,i^*$. We know that these give $(1+\eps)\Delta(G_i)$-colorings for $i\in [i^*]$. Again, we have $\Delta(G_i)\leq \Delta$ since $G_{i}$ is a subgraph of $G$ for each $i$. Thus, the total number of colors used is
\begin{align*}
    \prod_{i=1}^{i^*}\left( (1+\eps)\Delta(G_i)\right) =(1+\eps)^{i^*}\Delta^{i^*}=O(\Delta^{k-1}) \,,
\end{align*}
since $i^*\leq k-1< 1/2\eps$. Therefore, in either case, we get an $O(\Delta^k)$-coloring. 

Now, as in the proof of \Cref{lem:turnstile-ub-square}, we need to prove that we have enough parallel sketches for the ad hoc checkpoints. Observe that we create an ad hoc level-$i$ checkpoint only when the current max-degree decreases by a factor of $(1+\eps)$ from the last checkpoint in level $i$ itself. Thus, along the sequence of ad hoc level-$i$ checkpoints between two consecutive non-ad-hoc (fixed or vacuous) level-$i$ checkpoints, the max-degree decreases by a factor of at least $(1+\eps)$. Therefore, there can be at most $\log_{1+\eps} n = O(\eps^{-1}\log n)=O(k\log n)$ such ad hoc checkpoints.

We show by induction that the number of ad hoc checkpoints in any level $i$ is $O(d^{1/k}(k\log n)^i)$. In level $1$, there is only $1$ vacuous checkpoint (at the beginning) and $d^{1/k}$ fixed checkpoints. Therefore, by the argument above, it can have $O(d^{1/k}(k\log n))$ ad hoc checkpoints; the base case holds. By induction hypothesis assume that it is true for all $i\leq j$. The number of vacuous checkpoints in level $j$ is equal to the number of fixed plus ad hoc checkpoints in levels $1,\ldots,j-1$. This is $\sum_{i=1}^{j-1} O(d^{1/k}(k\log n)^{i})=O(d^{1/k}k^j\log^{j-1} n)$ since $j<k$. The number of ad hoc checkpoints in level $j$ is $\log n$ times the number of vacuous plus fixed checkpoints in level $j$, which is $O(d^{1/k}k^j\log^{j-1} n\cdot \log n)=O(d^{1/k}(k\log n)^{j})$. Thus, by induction, there are $O(d^{1/k}(k\log n)^{i})$ ad hoc checkpoints in any level $i$. Therefore, the total number of checkpoints in level $i$ is also $O(d^{1/k}(k\log n)^{i}+d^{1/k}k^i\log^{i-1} n+d^{1/k})=O(d^{1/k}(k\log n)^{i})$. Thus, $s=O(d^{1/k}(k\log n)^{k})$ many parallel sketches suffice for each level. This completes the proof of~(i).

Finally, for (ii), as noted above, the $s$ parallel sketches of $A$ take up $\tO(n)$ space individually, and hence, $\tO(ns)=\tO(nd^{1/k})$ space in total (recall that $k=O(1)$. Additionally, the space usage to store the subgraph $G'$ from a level-$(k-1)$ chunk is $\tO(nd^{1/k})$. Hence, the total space used is $\tO(nd^{1/k})=\tO(n^{1-1/k}m^{1/k})$.
\end{proof}

The next corollary shows that the space bound for $O(\Delta^k)$-coloring on insert-only streams is $\tO(n\Delta^{1/k})$ and follows immediately from \Cref{thm:turnstile-ub-general} noting that $m=O(n\Delta)$ for such streams. Note that it works even for $k=\omega(1)$ since we don't have ad hoc checkpoints for insert-only streams and just $d^{1/k}$ sketches per level suffice.

\begin{corollary}\label{cor:insertonly-ub-general}
For any stream of edge insertions describing a graph $G$, and for any $k\in \NN$, there exists an adversarially robust algorithm $\cA$ such that the following hold simultaneously w.h.p.:
\begin{itemize}[topsep=5pt,itemsep=1pt]
    \item After each query, $\cA$ outputs an $O(\Delta^k)$-coloring, where $\Delta$ is the max-degree of the current graph.
    \item $\cA$ uses $\tO(n\Delta^{1/k})$ bits of space. \qed
\end{itemize} 
\end{corollary}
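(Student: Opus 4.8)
The plan is a one-step reduction to \Cref{thm:turnstile-ub-general}. An insert-only edge stream is in particular a strict graph turnstile stream, so we may run the algorithm of \Cref{thm:turnstile-ub-general} on it verbatim; the only choice to make is the stream-length parameter $m$. Since an $n$-vertex graph of maximum degree at most $\Delta$ has at most $n\Delta/2$ edges, an insert-only stream that keeps the maximum degree $\le\Delta$ has length $m = O(n\Delta)$. As with \Cref{alg:ub-delta3}, I would take $\Delta$ (or a known upper bound $L$ on the final maximum degree) to be given in advance, so that we may set $m=\Theta(n\Delta)$ at initialization; under this convention the space becomes $\tO(n^{1-1/k}m^{1/k}) = \tO(n^{1-1/k}(n\Delta)^{1/k}) = \tO(n\Delta^{1/k})$, while the color bound $O(\Delta^k)$, the adversarial robustness, and the $1-1/\poly(n)$ success guarantee all transfer unchanged, with each query output using $O(\Delta^k)$ colors for the \emph{current} maximum degree. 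This proves both bullet points for every constant $k$, and is, up to constants, the same bound one gets directly by the recursion sketched just before \Cref{thm:turnstile-ub-general} (base case $k=2$ supplied by \Cref{lem:turnstile-ub-square}).

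It remains to lift the ``constant $k$'' hypothesis, and for this I would inspect where it is used in the proof of \Cref{thm:turnstile-ub-general}. It enters in exactly one place: the bound $s = O(d^{1/k}(k\log n)^k)$ on the number of parallel sketches maintained per recursion level, which is super-polynomial in $k$, and this bound is driven entirely by the \emph{ad hoc} checkpoints. For an insert-only stream, though, every $G_i$ (the chunk from the last level-$(i-1)$ checkpoint to the last level-$i$ checkpoint) is a subgraph of the current graph $G$, so $\Delta(G_i)\le\Delta$; hence the ad-hoc trigger ``$\Delta < \Delta(G_i)/(1+\eps)$'' is never met, there are no ad hoc checkpoints, and only the $d^{1/k}$ fixed checkpoints per level (plus the cost-free vacuous ones forced by the deterministic chunk splitting) survive the re-initialization bookkeeping. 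So $s = O(d^{1/k})$ sketches per level suffice, killing the $(k\log n)^k$ factor. Taking the base $O(\Delta)$-coloring subroutine to be the $(\Delta+1)$-coloring algorithm of \Cref{fact:ackalgo} (which needs no $\eps$-slack, hence no $\poly(k)$ space blow-up), the telescoped product of \Cref{lem:productcol} is exactly $(\Delta+1)^k = O(\Delta^k)$ colors and each sketch costs $\tO(n)$ space, for a total of $\tO(k\cdot n d^{1/k}) = \tO(n\Delta^{1/k})$ over the $O(k)$ levels plus the $\tO(nd^{1/k})$ buffer. This is $\tO(n\Delta^{1/k})$ for every $k = \polylog n$, and a fortiori for larger $k$, where $\Delta^{1/k} = O(1)$ and \Cref{fact:ackalgo} by itself already gives an $O(\Delta)=O(\Delta^k)$-coloring in $\tO(n) = \tO(n\Delta^{1/k})$ space.

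The reduction is immediate; the real work, and the main (mild) obstacle, is the careful re-reading of the proof of \Cref{thm:turnstile-ub-general} under the no-deletion simplification. One must check that (i) removing ad hoc checkpoints does not disturb the per-level argument that each freshly queried sketch has randomness independent of the subgraph it processed, hence still colors correctly; (ii) the checkpoint-counting induction that previously produced $s = O(d^{1/k}(k\log n)^k)$ collapses cleanly to $s = O(d^{1/k})$ once the ad hoc terms vanish; and (iii) the color telescoping $\prod_{i=1}^{k-1}(\Delta(G_i)+1)\cdot(\Delta(G')+1) = O(\Delta^k)$ carries no hidden $k$-dependence once $\eps$-slack is dropped. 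The genuinely delicate point is the a-priori knowledge of $\Delta$: because the sketches are initialized (with their count tied to $m$) at the very start and, on an insert-only stream, no new instance can be spun up late, the $\tO(n\Delta^{1/k})$ bound is really a statement about a known degree bound $L$ and should be phrased as such, exactly as in \Cref{sec:ub-deltacube}.
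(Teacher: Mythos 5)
Your proposal is correct and follows essentially the same route as the paper: apply \Cref{thm:turnstile-ub-general} with $m=O(n\Delta)$ to get the space bound for constant $k$, then observe that on an insert-only stream every $G_i$ is a subgraph of the current graph, so $\Delta(G_i)\le\Delta$ and the ad-hoc-checkpoint trigger never fires, eliminating the $(k\log n)^k$ factor in the sketch count and extending the result to all $k\in\NN$. The paper states this in one line (``it works even for $k=\omega(1)$ since we don't have ad hoc checkpoints for insert-only streams and just $d^{1/k}$ sketches per level suffice''), and your write-up is simply a correct, more detailed unpacking of that same observation; your remark that $\Delta$ (equivalently $m$) must be known in advance to initialize the sketches also matches the paper's own discussion at the end of \Cref{sec:ub-sketchswitch}.
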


\mypar{Implementation Details: Update and Query Time}
Observe that if we use the algorithm by \cite{AssadiCK19} or \cite{BeraCG20} as a blackbox, then, to answer each query of the adversary, the time we spend is the post-processing time of these algorithms, which are  $\tO(n\sqrt{\Delta})$ and $\tO(n)$ respectively. Although in the streaming setting, we don't care that much about the time complexity, such a query time might be infeasible in practice since we can potentially have a query at every point in the stream. Thus, ideally, we want an algorithm that \emph{maintains} a coloring at every point in the stream spending a reasonably small time to update the solution after each edge insertion/deletion. This is similar to the dynamic graph algorithms setting, except here, we are asking for more: we want to optimize the space usage as well.

The algorithm by \cite{BeraCG20} broadly works as follows for insert-only streams. It partitions the vertex set into a number of clusters and stores only intra-cluster edges during stream processing. In the post-processing phase, it colors each cluster using an offline $(\Delta+1)$-coloring algorithm with pairwise disjoint palettes for the different clusters. This attains a desired $(1+\eps)\Delta$-coloring of the entire graph. We observe that instead, we can color each cluster on the fly using a dynamic $(\Delta+1)$-coloring algorithm such as the one by \cite{HenzingerP20} that takes $O(1)$ amortized update time for maintaining a coloring. A stream update causes an edge insertion in at most one cluster and hence, the update time is the same as that required for a single run of \cite{HenzingerP20}. The  \cite{BeraCG20} algorithm runs roughly $O(\log n)$ parallel sketches, and hence, we can maintain a $(1+\eps)\Delta$-coloring of the graph in $\tO(1)$ update time while using the same space as \cite{BeraCG20}, which is $\tO(\eps^{-2}n)$. This proves \Cref{fact:bcgsmallupdate}. 

If we use this algorithm as the blackbox algorithm $A$ in our adversarially robust algorithm for $O(\Delta^k)$-coloring in insert-only streams, we get $\tO(1)$ amortized update time for each parallel copy of $A$, implying an $\tO(s)$ amortized update time in total, where $s$ is the number of parallel sketches used. We, however, also need to process a buffer deterministically, where we cannot use the aforementioned algorithm since it's randomized. We can use the deterministic dynamic $(\Delta+1)$-coloring algorithm by \cite{BhattacharyaCHN18} for this part to get an additional $\tO(1)$ amortized update time. Thus, overall, our update time is $\tO(s)=\tO((m/n)^{1/k})=\tO(\Delta^{1/k})$. Finally, we can think of the algorithm as maintaining an $n$-length vector representing the coloring and making changes to its entries with every update while spending $\tO(\Delta^{1/k})$ time in the amortized sense. Hence, there's no additional time required to answer queries. This is a significant improvement over a query time of $\tO(n\sqrt{\Delta})$ or $\tO(n)$. 

\mypar{Removing the Assumption of Prior Knowledge of $\bm{m}$}
Observe that in \Cref{alg:deltasq-turnstile} as well as the algorithm described in \Cref{thm:turnstile-ub-general}, we assume that a value $m$, an upper bound on the number of edges, is given to us in advance. Without it, we do not know how many sketches to initialize at the start of the stream. A typical guessing trick does not seem to work since even the last sketch needs to process the entire graph and cannot be started ``on the fly'' if we follow our framework. In this context, we note the following. First, knowledge of an upper bound on the number of edges is a reasonable assumption, especially for turnstile streams, since an algorithm typically knows how large of an input stream it can handle. Second, for insert-only streams, we can always set $m=n\Delta/2$ if an upper bound $\Delta$ on the max-degree of the final graph is known; a knowledge of such a bound is reasonable since $f(\Delta)$-coloring is usually studied with a focus on bounded-degree graphs. Third, we can remove the assumption of knowing either $m$ or $\Delta$ for insert-only streams at the cost of a factor of $\Delta$ in the number of colors and an additive $\tO(n)$ factor in space, which we outline next. 

At the beginning of the stream, we initalize $\floor{\log n}$ copies of the oblivious $O(\Delta)$-coloring semi-streaming algorithm $A$ for the checkpoints where $\Delta$ first attains values of the form $2^i$ for some $i\in [\floor{\log n}]$. For each $i$, the substream between the checkpoints with $\Delta =2^i$ and $\Delta=2^{i+1}$ can be handled using our algorithm as a blackbox since we know that the stream length is at most $2^{i+1}n$. This way, we need not initialize $O(D^{1/k})$ sketches for $D \gg \Delta_{\text{max}}$ at the very beginning of the stream, where $\Delta_{\text{max}}$ is the final max-degree of the graph, and incur such a huge factor in space; we can initialize the $d^{1/k}$ sketches for the substream with $d\le\Delta\le 2d$ only when (if at all) $\Delta$ reaches the value $d$. Thus, the maximum space used is $O(n\Delta_{\text{max}}^{1/k})$, which we can afford. When queried in a substream between checkpoints at $\Delta =2^i$ and $\Delta=2^{i+1}$, we use our $O(\Delta^k)$-coloring algorithm to get a coloring of the substream, and take product with the $O(\Delta)$-coloring at the checkpoint at $\Delta=2^i$. Thus, we get an $O(\Delta^{k+1})$-coloring of the current graph. The additional space usage is $\tO(n)$ due to the initial $\floor{\log n}$ sketches taking up $\tO(n)$ space each; hence, the total space usage is still $O(n\Delta_{\text{max}}^{1/k})$.

\section*{Acknowledgements}

Prantar Ghosh would like to thank Sayan Bhattacharya for a helpful conversation regarding this work. 

\bibliographystyle{alpha}
\bibliography{refs}

\newcommand{\etalchar}[1]{$^{#1}$}
\begin{thebibliography}{GGMW20}

\bibitem[AA20]{AlonA20}
Noga Alon and Sepehr Assadi.
\newblock Palette sparsification beyond ({\(\Delta\)}+1) vertex coloring.
\newblock In {\em Approximation, Randomization, and Combinatorial Optimization.
  Algorithms and Techniques, {APPROX/RANDOM} 2020, August 17-19, 2020, Virtual
  Conference}, volume 176 of {\em LIPIcs}, pages 6:1--6:22, 2020.

\bibitem[ACK19]{AssadiCK19}
Sepehr Assadi, Yu~Chen, and Sanjeev Khanna.
\newblock Sublinear algorithms for ({$\Delta$}+ 1) vertex coloring.
\newblock In {\em Proc. 30th Annual ACM-SIAM Symposium on Discrete Algorithms},
  pages 767--786, 2019.

\bibitem[ACKP19]{AbboudCKP19}
Amir Abboud, Keren Censor{-}Hillel, Seri Khoury, and Ami Paz.
\newblock Smaller cuts, higher lower bounds.
\newblock {\em CoRR}, abs/1901.01630, 2019.

\bibitem[ACSS21]{AttiasCSS21}
Idan Attias, Edith Cohen, Moshe Shechner, and Uri Stemmer.
\newblock A framework for adversarial streaming via differential privacy and
  difference estimators.
\newblock {\em CoRR}, abs/2107.14527, 2021.

\bibitem[AG09]{AhnG09}
Kook~Jin Ahn and Sudipto Guha.
\newblock Graph sparsification in the semi-streaming model.
\newblock In {\em Automata, Languages and Programming, 36th Internatilonal
  Colloquium, {ICALP} 2009, Rhodes, Greece, July 5-12, 2009, Proceedings, Part
  {II}}, volume 5556 of {\em Lecture Notes in Computer Science}, pages
  328--338. Springer, 2009.

\bibitem[AGM12]{AhnGM12}
Kook~Jin Ahn, Sudipto Guha, and Andrew McGregor.
\newblock Analyzing graph structure via linear measurements.
\newblock In {\em Proc. 23rd Annual ACM-SIAM Symposium on Discrete Algorithms},
  pages 459--467, 2012.

\bibitem[Ass18]{Assadi-detcolor-talk}
Sepehr Assadi.
\newblock Sublinear algorithms for {(Delta + 1)} vertex coloring.
\newblock Lecture at Sublinear Algorithms and Nearest-Neighbor Search Workshop,
  Simons Institute; available online at
  \url{https://www.youtube.com/watch?v=VU7Y_8ZcNu0&t=2206}, 2018.

\bibitem[BBMU21]{BhattacharyaBMU21}
Anup Bhattacharya, Arijit Bishnu, Gopinath Mishra, and Anannya Upasana.
\newblock Even the easiest(?) graph coloring problem is not easy in streaming!
\newblock In {\em 12th Innovations in Theoretical Computer Science Conference,
  {ITCS} 2021, January 6-8, 2021, Virtual Conference}, volume 185 of {\em
  LIPIcs}, pages 15:1--15:19, 2021.

\bibitem[BCG20]{BeraCG20}
Suman~K. Bera, Amit Chakrabarti, and Prantar Ghosh.
\newblock Graph coloring via degeneracy in streaming and other space-conscious
  models.
\newblock In {\em 47th International Colloquium on Automata, Languages, and
  Programming, {ICALP} 2020, July 8-11, 2020, Saarbr{\"{u}}cken, Germany
  (Virtual Conference)}, volume 168 of {\em LIPIcs}, pages 11:1--11:21, 2020.

\bibitem[BCHN18]{BhattacharyaCHN18}
Sayan Bhattacharya, Deeparnab Chakrabarty, Monika Henzinger, and Danupon
  Nanongkai.
\newblock Dynamic algorithms for graph coloring.
\newblock In {\em Proceedings of the Twenty-Ninth Annual {ACM-SIAM} Symposium
  on Discrete Algorithms, {SODA} 2018, New Orleans, LA, USA, January 7-10,
  2018}, pages 1--20. {SIAM}, 2018.

\bibitem[BEO21]{beneliezerEO21}
Omri Ben{-}Eliezer, Talya Eden, and Krzysztof Onak.
\newblock Adversarially robust streaming via dense--sparse trade-offs.
\newblock {\em CoRR}, abs/2109.03785, 2021.

\bibitem[BG18]{BeraG18}
Suman~Kalyan Bera and Prantar Ghosh.
\newblock Coloring in graph streams.
\newblock {\em CoRR}, abs/1807.07640, 2018.

\bibitem[BGK{\etalchar{+}}19]{BhattacharyaGKLS19}
Sayan Bhattacharya, Fabrizio Grandoni, Janardhan Kulkarni, Quanquan~C. Liu, and
  Shay Solomon.
\newblock Fully dynamic ({\(\Delta\)}+1)-coloring in constant update time.
\newblock {\em CoRR}, abs/1910.02063, 2019.

\bibitem[BHM{\etalchar{+}}21]{BravermanHMSSZ21}
Vladimir Braverman, Avinatan Hassidim, Yossi Matias, Mariano Schain, Sandeep
  Silwal, and Samson Zhou.
\newblock Adversarial robustness of streaming algorithms through importance
  sampling.
\newblock {\em CoRR}, abs/2106.14952, 2021.

\bibitem[BJWY20]{BenEliezerJWY20}
Omri Ben{-}Eliezer, Rajesh Jayaram, David~P. Woodruff, and Eylon Yogev.
\newblock A framework for adversarially robust streaming algorithms.
\newblock In {\em Proc. 39th ACM Symposium on Principles of Database Systems},
  page 63–80, 2020.

\bibitem[BY20]{BenEliezerY20}
Omri Ben{-}Eliezer and Eylon Yogev.
\newblock The adversarial robustness of sampling.
\newblock In {\em Proc. 39th ACM Symposium on Principles of Database Systems},
  pages 49--62. {ACM}, 2020.

\bibitem[GGMW20]{GoldwasserGMW20}
Shafi Goldwasser, Ofer Grossman, Sidhanth Mohanty, and David~P. Woodruff.
\newblock {Pseudo-Deterministic Streaming}.
\newblock In {\em Proc. 20th Conference on Innovations in Theoretical Computer
  Science}, volume 151, pages 79:1--79:25, 2020.

\bibitem[HILL99]{HastadILL99}
Johan H{\aa}stad, Russell Impagliazzo, Leonid~A. Levin, and Michael Luby.
\newblock A pseudorandom generator from any one-way function.
\newblock {\em SIAM J. Comput.}, 28(4):1364--1396, 1999.

\bibitem[HKM{\etalchar{+}}20]{HassidimKMMS20}
Avinatan Hassidim, Haim Kaplan, Yishay Mansour, Yossi Matias, and Uri Stemmer.
\newblock Adversarially robust streaming algorithms via differential privacy.
\newblock In {\em Advances in Neural Information Processing Systems 33: Annual
  Conference on Neural Information Processing Systems 2020, NeurIPS 2020,
  December 6-12, 2020, virtual}, 2020.

\bibitem[HP20]{HenzingerP20}
Monika Henzinger and Pan Peng.
\newblock Constant-time dynamic ({\(\Delta\)}+1)-coloring.
\newblock In {\em 37th International Symposium on Theoretical Aspects of
  Computer Science, {STACS} 2020, March 10-13, 2020, Montpellier, France},
  volume 154 of {\em LIPIcs}, pages 53:1--53:18, 2020.

\bibitem[HW13]{HardtW13}
Moritz Hardt and David~P. Woodruff.
\newblock How robust are linear sketches to adaptive inputs?
\newblock In {\em Proc. 45th Annual ACM Symposium on the Theory of Computing},
  pages 121--130, 2013.

\bibitem[JST11]{JowhariST11}
Hossein Jowhari, Mert Saglam, and G{\'{a}}bor Tardos.
\newblock Tight bounds for $l_p$ samplers, finding duplicates in streams, and
  related problems.
\newblock In {\em Proc. 30th ACM Symposium on Principles of Database Systems},
  pages 49--58, 2011.

\bibitem[KMNS21]{KaplanMNS21}
Haim Kaplan, Yishay Mansour, Kobbi Nissim, and Uri Stemmer.
\newblock Separating adaptive streaming from oblivious streaming using the
  bounded storage model.
\newblock In {\em Advances in Cryptology - {CRYPTO} 2021 - 41st Annual
  International Cryptology Conference, {CRYPTO} 2021, Virtual Event, August
  16-20, 2021, Proceedings, Part {III}}, volume 12827 of {\em Lecture Notes in
  Computer Science}, pages 94--121. Springer, 2021.

\bibitem[McG14]{McGregor14}
Andrew McGregor.
\newblock Graph stream algorithms: a survey.
\newblock {\em ACM SIGMOD Record}, 43(1):9--20, 2014.

\bibitem[MNS11]{MironovNS11}
Ilya Mironov, Moni Naor, and Gil Segev.
\newblock Sketching in adversarial environments.
\newblock {\em SIAM J. Comput.}, 40(6):1845--1870, 2011.

\bibitem[Nis90]{Nisan90}
Noam Nisan.
\newblock Pseudorandom generators for space-bounded computation.
\newblock In {\em Proc. 22nd Annual ACM Symposium on the Theory of Computing},
  pages 204--212, 1990.

\bibitem[Ste21]{Stemmer21-stoc}
Uri Stemmer.
\newblock Separating adaptive streaming from oblivious streaming.
\newblock Lecture at STOC 2021 Workshop: Robust Streaming, Sketching and
  Sampling, available online at
  \url{https://www.youtube.com/watch?v=svgv-xw9DZc&t=7679s}, 2021.
\newblock Based on joint work with Haim Kaplan, Yishay Mansour, and Kobbi
  Nissim.

\bibitem[WZ21]{WoodruffZ21}
David~P. Woodruff and Samson Zhou.
\newblock Tight bounds for adversarially robust streams and sliding windows via
  difference estimators.
\newblock In {\em Proc. 62nd Annual IEEE Symposium on Foundations of Computer
  Science}, page to appear, 2021.

\end{thebibliography}

\end{document}